\newcommand{\abs}[1]{\left|#1\right|}
\newcommand\lrb[1]{\left\lbrace#1 \right\rbrace}
\newcommand\lrp[1]{\left(#1 \right)}
\newcommand\cA[1]{{\mathcal #1}}
\newcommand\C{{\mathbb C}}
\newcommand\N{{\mathbb N}}
\newcommand\R{{\mathbb R}}
\newcommand\Z{{\mathbb Z}}
\newcommand\cc[1]{\overline {#1}}
\newcommand\Dn[2]{\left |{#1}\right\rangle\!\!\left\langle {#2} \right |}
\newcommand\vk[1]{\left |{#1}\right\rangle\!}
\newcommand\ip[2]{\left\langle {#1},{#2} \right\rangle}
\newcommand\no[1]{\left\| {#1} \right\|}
\newcommand\unit{\hbox{\rm 1\kern-2.8truept l}}
\newcommand\tr[1]{{{\rm tr}}\left(#1\right)}
\DeclareMathOperator{\supp}{supp}
\DeclareMathOperator{\ran}{ran}
\DeclareMathOperator{\Span}{span}
\numberwithin{equation}{section}
\newtheorem{theorem}{Theorem}[section]
\newtheorem{lemma}[theorem]{Lemma}
\newtheorem{corollary}[theorem]{Corollary}
\newtheorem{proposition}[theorem]{Proposition}
\theoremstyle{definition}
\newtheorem{definition}[theorem]{Definition}
\newtheorem{example}[theorem]{Example}
\theoremstyle{remark}
\newtheorem{remark}[theorem]{Remark}
\begin{document}
\title[Quantum error-correcting codes via inner products and error bases]{Quantum error-correcting codes via inner products and error bases }

\author[Jorge R. Bola\~nos-Serv\'in, Yuriko Pitones and Josu\'e I. Rios-Cangas]{Jorge R. Bola\~nos-Serv\'in, Yuriko Pitones and Josu\'e I. Rios-Cangas}
\address{Departamento de Matem\'aticas, Universidad Aut\'onoma Metropolitana, Unidad Iztapalapa,  San Rafael Atlixco 186, 093110 Iztapalapa, Ciudad de M\'exico.}
\email{jrbs@xanum.uam.mx,\quad ypitones@xanum.uam.mx,\quad jottsmok@xanum.uam.mx}
\date{\today}
\subjclass{Primary: 81P70, 94B60; Secondary: 15A63, 15A63, 81P55.
}
\keywords{Quantum error-correcting codes, inner products, partial isometry error bases, quantum channels}

\begin{abstract}
In this paper, we address the problem of state communication in finite-level quantum systems through noise-affected channels. Our approach is based on a self-consistent theory of decoding inner products associated with the code and error (or noise) bases defined on corrupting subspaces. This viewpoint yields new necessary and sufficient conditions for the existence of quantum error-correcting codes in terms of these inner products. The obtained results extend the foundations of quantum error correction beyond classical analogies, highlighting the structural insights offered by operator theory and the underlying product space.
\end{abstract}

\maketitle

\section{Introduction}  

It is well-known that the quantum error-correction theory addresses the problem of protecting quantum information from errors due to decoherence, commonly referred to as noise. Quantum error correction is essential to achieving a fault-tolerant quantum information theory that can reduce the effects of noise in stored quantum information, faulty quantum gates, faulty preparation of quantum states, and faulty measurements.  This theory has several applications; in particular, it enables quantum computers with low qubit fidelity to run algorithms of greater complexity or greater circuit depth \cite{MR2181442,CAI202150}.

Although quantum error correction is related to classical error correction, classical techniques cannot be directly applied to quantum systems, since classical measurements cannot preserve phase information in entangled states. In addition, the information cloning method to restore corrupted information in classical communication \cite{MR465509,MR465510} fails within the framework of quantum information theory \cite{MR2193880}. Nevertheless, Shor \cite{PhysRevA.52.R2493} shows that it is possible to correct a quantum state affected by a specific known noise by spreading the information over many states supported on an encoding. It is worth pointing out that Shor \cite{PhysRevA.52.R2493} and Steane \cite{PhysRevLett.77.793} are considered pioneers of quantum error-correction codes theory to introduce the first codes that correct single-bit flip and single-phase flip errors. Since then, several codes have been implemented that correct for specific interactions \cite{MR1421749,PhysRevA.54.1098,PhysRevLett.77.198,PhysRevA.54.3824,PhysRevA.54.R1745}, which have set the stage for general quantum error correction theory to implement practical self-consistent quantum theories and achieve highly reliable quantum communication, for instance, the Knill-Laflamme criteria \cite{MR1455854} for decoding quantum error correction codes. To the best of our knowledge, quantum error correction schemes have predominantly been studied over finite fields, likely motivated by practical considerations and their connection to classical error correction theory, though as we demonstrate here, the general case of Hilbert spaces over $\mathbb C$ gives rise to a sufficiently rich and interesting theory in its own right which has been left aside.
 
 In this paper, we are interested in addressing the problem of communicating and correcting the states of a $n$-level quantum system through a channel disturbed by noise in the setting of a finite dimensional Hilbert space over the complex field $\mathbb C$ with inner product $\ip{\cdot}{\cdot}$ that is anti-linear in the first argument. The von Neumann algebra of all bounded linear operators on $\cA H$ is denoted by $\mathcal {B(H)}$, which is naturally endowed with a Hilbert space structure, namely, $\mathcal{(B(H)},\ip{\cdot}{\cdot}_2)$, where  
\begin{gather}\label{eq:inner-product-S2}
\ip{\eta}{\tau}_2=\tr{\eta^*\tau}\,,\qquad \eta_\tau\in \mathcal{B(H)}\,,
\end{gather} and of particular relevance are the states (or density operators) which are a positive selfadjoint operators with unit trace.
A \emph{quantum code} (or simply code) is any non-trivial subspace $\cA C$  of $\cA H$, while a \emph{noise set} (or error set) is any non-trivial subspace  $\cA N$ of $\mathcal {B(H)}$.

The basic communication scheme is, an input state $\rho_{\rm in}$ supported in $\cA C$, is transmitted through a  channel and received as an output state $\rho_{\rm out}$ of the form (cf. \cite{MR3051751})
\begin{gather*}
\rho_{\rm out}=\frac{1}{\tr{\rho N^*N}}N\rho N^*\,,\quad N\in\cA N\,.
\end{gather*}
The well-known Knill-Laflamme criteria characterize the existence of  a decoding quantum channel $\Phi\colon\cA B(\cA H)\to \cA B(\cA H)$, namely, a channel which recovers $\rho_{\rm in}$ from $\rho_{\rm out}$, i.e.,
\begin{gather}\label{eq:N-cc-aux}
 \Phi(\rho_{\rm out})=\rho_{\rm in}; \ \ \ \ \ \Phi\lrp{N\rho_{\rm in} N^*}=\tr{\rho_{\rm in} N^*N}\rho_{\rm in}\,.
\end{gather}

The aim of this paper is to present a connection between the existence of an error-correction code $\cA C$ and  suitable inner products on the error subspace  $\cA N$ in the case of a complex Hilbert space. To achieve this, Section~\ref{s:TCDIP} introduces the notion of a $\cA C$-decoding inner product, which is an inner product on the subspace $\cA N$. This inner product can be expressed in terms of the Hilbert-Schmidt inner product (see Theorem~\ref{th:repres-Cdip}). Although $\cA N$ can always be equipped with the standard inner product, it is not always possible to endow it with a $\cA C$-decoding inner product (Remark~\ref{rm:necessity-dimN}). In Section~\ref{s:QECC-BIP}, we develop the theory of quantum error-correcting codes using the code-decoding inner product in the presence of non-negligible noise (see Remark~\ref{rmk:nonzero-cond}). We prove that the existence of an $\cA N$-correcting code $(\cA C,\Phi)$ is equivalent to the existence of the $\cA C$-decoding inner product on $\cA N$ (see Theorem~\ref{th:KL-G}). Furthermore, Theorem~\ref{th:N-cc-Conditions} provides  necessary and sufficient conditions for the existence of $\cA N$-correcting code, and introduce the concept of a code-decoding noise basis of $\cA N$ (see Definition~\ref{def:C-dnb-N}), which explicitly generates quantum channels that decode $\cA C$. A necessary condition for the existence of an $\cA N$-correcting code is given by $\dim \cA N=\dim\cA H/\dim \cA C$ by means of  Remark~\ref{rm:necessity-dimN} and Theorem~\ref{th:N-cc-Conditions}. In order to describe the elements of code-decoding noise bases, Section~\ref{sec:CDPINB} is divided into three parts. Subsection~\ref{eq:vNW-Deco} focuses on the von Neumann-Wold decomposition, which decomposes any partial isometry into its unitary and completely non-unitary components (Theorem~\ref{th:characterization-PIO}). Subsection~\ref{ss:Code-bypowers-isom} characterizes code-decoding noise bases generated by linear combinations of powers of partial isometries (Theorem~\ref{th:codes-Vjs}). This section also discusses the existence of the largest error-correcting code and presents an example of an error-correcting code that lacks the largest error-correcting code (Example~\ref{ex:ne-lcc}). In Subsection~\ref{s:shift}, we present code-decoding noise bases using shift and clock operators (see Theorems~\ref{th:eio-codes} and~\ref{th:cio-codes}). These operators play a central role in the theory of stabilizer codes and t-error correcting quantum codes, as they define the Weyl operators (see Remark~\ref{rm:weyl-operators}). The last section is dedicated to code-decoding noise bases in reducing subspaces, Corollary~\ref{cor:wNw-Deco-RS} and Theorems~\ref{th:rS-Pisom-Ncc} and~\ref{th:codes-Rs-ONh} generalize the principal results of Section~\ref{sec:CDPINB}.

\section{The code-decoding inner product}\label{s:TCDIP} We begin by introducing some definitions and the notion of an inner product on $\cA N$. Recall that the support of an operator $T\in\cA B(\cA H)$ is the subspace
\begin{gather*}
\supp T\colonequals (\ker T)^\perp=\ran T^*\,.
\end{gather*}
For simplicity of notation, o.n.b. denotes \emph{orthonormal basis}, $\delta_{rj}$ denotes the \emph{Kronecker delta function} and we say $\cA F\leq \cA H$ whenever $\cA F$ is a subspace of $\cA H$ and $P_{\cA F}\in\mathcal {B(H)}$ represents the orthogonal projection onto $\cA F\leq \cA H$. Finally, we consider the convex subset of states supported in $\cA C$
\begin{gather*}
\tilde{\cA C}\colonequals\lrb{\rho\in\cA B(\cA H)\,:\, \supp \rho\leq\cA C} \subset \mathcal{B}(\cA H)\,.
\end{gather*} 
\begin{definition} An inner product $\varphi\colon\cA N\times \cA N\to \C$ that satisfies 
\begin{gather}\label{eq:C-cc-Condition}
\eta N^*M\rho=\varphi(N,M)\eta\rho\,,\quad\mbox{for all}\quad \eta,\rho\in\tilde{
\cA C}\mbox{ and }N,M\in\cA N\,,
\end{gather}  
 is called the \emph{$\cA C$-decoding} inner product on $\cA N$. It readily follows that the existence of the $\cA C$-decoding inner product is unique on $\cA N$.
\end{definition}

\begin{remark}\label{rm:partial-isometry}
An operator $V\in\cA B(\cA H)$ is \emph{partial isometry} if $V^*V$ is the orthogonal projection onto $\supp V$. When $\supp V=\cA F \leq\cA H$, we say that the partial isometry $V$  is onto.
\end{remark}

\begin{theorem}\label{teo:prop-ipN}
If $\varphi$ is an inner product on $\cA N$, then for all $N,M\in\cA N$, $\eta,\rho\in\tilde{\cA C}$,  $u,v\in\cA C$, and $P_{\cA F}$ the orthogonal projection onto $\cA F\leq\cA C$, the following conditions are equivalent:
\begin{enumerate}[(i)]
\item\label{c1-ipp} $\varphi$ is the $\cA C$-decoding inner product.
\item\label{c2-ipp} $\ip{Nu}{Mv}=\varphi(N,M)\ip uv$. 
\item\label{c3-ipp} $\varphi(N,M)=\tr{\rho N^*M}$, and $\ip{Nu}{Mv}=0$ if $u\perp v$.
\item\label{c4-ipp} $P_{\cA F}N^*MP_{\cA F}=\varphi(N,M)P_{\cA F}$.
\item\label{c5-ipp} $\varphi(N,N)^{-1/2}NP_{\cA F}$ is partial isometry onto $\cA F$, for $N\neq0$.
\end{enumerate}
\end{theorem}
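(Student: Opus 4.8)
The plan is to route every implication through the matrix-element identity in (ii), namely $\ip{Nu}{Mv}=\varphi(N,M)\ip uv$, which is the most transparent reformulation of the decoding condition; concretely I would prove (i)$\Leftrightarrow$(ii), (ii)$\Leftrightarrow$(iii), (ii)$\Leftrightarrow$(iv) and (iv)$\Leftrightarrow$(v).

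For (i)$\Rightarrow$(ii) I would test the operator identity \eqref{eq:C-cc-Condition} on the pure states $\eta=\Dn uu$ and $\rho=\Dn vv$ with $u,v\in\cA C$. Using $\vb u N^*M\vk v=\ip{Nu}{Mv}$ one gets $\eta N^*M\rho=\ip{Nu}{Mv}\,\Dn uv$, whereas $\varphi(N,M)\eta\rho=\varphi(N,M)\ip uv\,\Dn uv$; comparing the coefficients of the nonzero operator $\Dn uv$ gives (ii) for unit vectors, and homogeneity in $u,v$ removes the normalisation. Conversely, writing spectral decompositions $\eta=\sum_i a_i\Dn{u_i}{u_i}$ and $\rho=\sum_j b_j\Dn{v_j}{v_j}$ with $u_i,v_j\in\cA C$, expanding $\eta N^*M\rho$ termwise and applying (ii) to each term collapses it to $\varphi(N,M)$ times the matching term of $\eta\rho$, which is (i).

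The equivalences (ii)$\Leftrightarrow$(iii), (ii)$\Leftrightarrow$(iv) and (iv)$\Leftrightarrow$(v) are then essentially bookkeeping with adjoints and projections. Putting $u=v$ of unit norm in (ii) and averaging against the eigenvalues of a state $\rho$ supported in $\cA C$ yields $\tr{\rho N^*M}=\varphi(N,M)$, while $u\perp v$ gives $\ip{Nu}{Mv}=0$, so (ii)$\Rightarrow$(iii). For (ii)$\Rightarrow$(iv) I would apply $P_{\cA F}N^*MP_{\cA F}-\varphi(N,M)P_{\cA F}$ to arbitrary $x,y\in\cA H$ and put $u=P_{\cA F}x,\ v=P_{\cA F}y\in\cA F\le\cA C$ to see the difference vanishes; the reverse implication specialises to $\cA F=\cA C$. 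Finally, for (iv)$\Rightarrow$(v) set $W\ceq\varphi(N,N)^{-1/2}NP_{\cA F}$; the diagonal case $M=N$ of (iv) reads $W^*W=P_{\cA F}$, so $\ker W=\ker W^*W=\cA F^\perp$, $\supp W=\cA F$, and $W$ is a partial isometry onto $\cA F$ in the sense of Remark~\ref{rm:partial-isometry}.

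I expect the genuine content to sit in two places. The first is (iii)$\Rightarrow$(ii): the trace clause only supplies the diagonal values $\ip{Nu}{Mu}=\varphi(N,M)$ for unit $u$, so the off-diagonal matrix elements have to be rebuilt from the orthogonality clause through the Gram--Schmidt split $v=\no u^{-2}\ip uv\,u+w$ with $w\perp u$ in $\cA C$ --- this is exactly why (iii) must carry both clauses. The second is (v)$\Rightarrow$(iv): condition (v) is a single-operator (diagonal) statement giving only $P_{\cA F}N^*NP_{\cA F}=\varphi(N,N)P_{\cA F}$, so I must bootstrap it to the full bilinear identity; since $(N,M)\mapsto P_{\cA F}N^*MP_{\cA F}$ and $(N,M)\mapsto\varphi(N,M)P_{\cA F}$ are both sesquilinear (conjugate-linear in $N$, linear in $M$) and agree on the diagonal, the polarisation identity forces them to coincide.
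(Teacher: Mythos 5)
Your proof is correct and uses essentially the same ingredients as the paper's: testing \eqref{eq:C-cc-Condition} on rank-one operators, expanding states spectrally over an orthonormal basis of $\cA C$, and polarizing a diagonal identity to recover the full sesquilinear one. The only difference is organizational --- the paper closes the single cycle (i)$\Rightarrow$(ii)$\Rightarrow$(iii)$\Rightarrow$(iv)$\Rightarrow$(v)$\Rightarrow$(i), whereas you route every equivalence through (ii) --- and both schemes are complete.
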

\begin{proof}
\eqref{c1-ipp}$\Rightarrow$\eqref{c2-ipp}: We may assume $u,v\neq0$, since otherwise is straightforward. Thus, by \eqref{eq:C-cc-Condition},
\begin{align*}
\ip{Nu}{Mv}=\frac{1}{\no{u}^2\no v^2}\ip{u}{\Dn uu N^*M\Dn vvv}=\varphi(N,M)\ip{u}{v}\,.
\end{align*}
\eqref{c2-ipp}$\Rightarrow$\eqref{c3-ipp}: It is clear that $\ip{Nu}{Mv}=0$, if $u\perp v$. Besides, for $\rho\in\tilde{\cA C}$ there exists an o.n.b. $\{u_r\}_{r=1}^s$ of $\cA C$ such that $\rho=\sum_{r=1}^s\rho_r\Dn{u_r}{u_r}$, with $\rho_r\geq0$ and $\sum_{r=1}^s\rho_r=1$. Therefore, $\tr{\rho N^*N}=\sum_{r=1}^s\rho_r\ip{Nu_r}{Nu_r}=\varphi(N,N)
$, and by polarization identity,
\begin{align*}
\varphi(N,M)&=\frac14\sum_{k=0}^3i^k\varphi\lrp{i^kN+M,i^kN+M}\\&=\frac14\sum_{k=0}^3i^k\tr{\rho\lrp{i^kN+M}^*\lrp{i^kN+M}}=\tr{\rho N^*M}\,.
\end{align*}
\eqref{c3-ipp}$\Rightarrow$\eqref{c4-ipp}: It follows that  $P_{\cA F}=\sum_{r=1}^l\Dn{u_r}{u_r}$, for some o.n.b. $\{u_r\}_{r=1}^l$ of $\cA F$, and 
\begin{gather*}
P_{\cA F}N^*MP_{\cA F}=\sum_{r=1}^l\sum_{t=1}^l\Dn{u_r}{u_r}N^*M\Dn{u_t}{u_t}=\sum_{r=1}^l\tr{\Dn{u_r}{u_r}N^*M}\Dn{u_r}{u_r}=\varphi(N,M)P_{\cA F}\,.
\end{gather*}

\eqref{c4-ipp}$\Rightarrow$\eqref{c5-ipp}: One simply computes that 
\begin{gather*}
\lrp{\frac{1}{\sqrt{\varphi(N,N)}}NP_{\cA F}}^*\lrp{\frac{1}{\sqrt{\varphi(N,N)}}NP_{\cA F}}=\frac{1}{\varphi(N,N)}P_{\cA F}N^*NP_{\cA F}=P_{\cA F}\,.
\end{gather*}

\eqref{c5-ipp}$\Rightarrow$\eqref{c1-ipp}: Since $\eta,\rho\in\tilde{\cA C}$, one has that  $\eta P_{\cA C}=\eta$, $P_{\cA C}\rho=\rho$, and for $N\neq 0$, 
\begin{gather*}
\frac{1}{\varphi(N,N)}\eta N^*N\rho=\eta \lrp{\frac{1}{\sqrt{\varphi(N,N)}}NP_{\cA C}}^*\lrp{\frac{1}{\sqrt{\varphi(N,N)}}NP_{\cA C}}\rho=\eta\rho\,,
\end{gather*}
i.e., $\eta N^*N\rho=\varphi(N,N)\eta\rho$, for all $N\in\cA N$. Hence, the polarization identity yields
\begin{align*}
\eta N^*M\rho&=\frac14\sum_{k=0}^3i^k\eta (i^kN+M)^*(i^kN+M)\rho\\
&=\frac14\sum_{k=0}^3i^k\varphi(i^kN+M,i^kN+M)\eta\rho=\varphi(N,M)\eta\rho\,,
\end{align*}
i.e., $\varphi$ is the $\cA C$-decoding inner product on $\cA N$.
\end{proof}

Observe in Theorem~\ref{teo:prop-ipN}.\eqref{c2-ipp} that the number $\ip{Nu}{Mu}$ does not depend on the choice of $u\in\cA C$, when $\no u=1$.

\begin{corollary}\label{cor:dim-prop}
If $\varphi$ is the $\cA C$-decoding inner product on $\cA N$ then the following conditions are true: 
\begin{enumerate}[(i)]
\item\label{c1:CcciP} $\cA C\leq\supp  N$,  for all non-zero $N\in\cA N$.
\item\label{c2:CcciP}  For an o.n.b. $\{N_j\}_{j=1}^r$ of $(\cA N,\varphi)$, it follows that every $N_jP_{\cA C}$ is partial isometry onto $\cA C$ and 
\begin{gather}\label{eq:orthogonal-Ccc}
\ran N_jP_{\cA C}\perp \ran N_kP_{\cA C}\,,\quad j\neq k\,.
\end{gather}
\item\label{c3:CcciP} $\dim \cA N\cdot\dim \cA C\leq \dim \cA H$.
\end{enumerate}
\end{corollary}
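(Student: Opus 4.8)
The plan is to read all three assertions off the equivalent characterizations collected in Theorem~\ref{teo:prop-ipN}, which are available because $\varphi$ is assumed to be the $\cA C$-decoding inner product. I would treat the items in the order \eqref{c2:CcciP}, \eqref{c3:CcciP}, \eqref{c1:CcciP}, since \eqref{c3:CcciP} is a dimension count feeding directly off \eqref{c2:CcciP}, whereas \eqref{c1:CcciP} is of a different and, I expect, more delicate nature.

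For \eqref{c2:CcciP} I would fix a $\varphi$-orthonormal basis $\{N_j\}_{j=1}^r$, so $\varphi(N_j,N_k)=\delta_{jk}$. Applying Theorem~\ref{teo:prop-ipN}.\eqref{c5-ipp} with $\cA F=\cA C$ and $\varphi(N_j,N_j)=1$ shows immediately that each $N_jP_{\cA C}$ is a partial isometry onto $\cA C$ in the sense of Remark~\ref{rm:partial-isometry}. For the orthogonality \eqref{eq:orthogonal-Ccc} I would invoke Theorem~\ref{teo:prop-ipN}.\eqref{c2-ipp}: for all $u,v\in\cA C$ and $j\neq k$,
\begin{gather*}
\ip{N_ju}{N_kv}=\varphi(N_j,N_k)\ip uv=0\,,
\end{gather*}
so the final spaces $\ran N_jP_{\cA C}=N_j\cA C$ are pairwise orthogonal. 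Item \eqref{c3:CcciP} then follows by counting dimensions: each $N_jP_{\cA C}$ maps its initial space $\cA C$ isometrically onto $N_j\cA C$, whence $\dim N_j\cA C=\dim\cA C$; since the $r=\dim\cA N$ subspaces $N_j\cA C$ are mutually orthogonal, their orthogonal direct sum sits inside $\cA H$, and therefore $\dim\cA N\cdot\dim\cA C=\sum_{j=1}^r\dim N_j\cA C\leq\dim\cA H$.

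The step I expect to be the main obstacle is \eqref{c1:CcciP}. Theorem~\ref{teo:prop-ipN}.\eqref{c5-ipp} (again with $\cA F=\cA C$) gives $\no{Nu}^2=\varphi(N,N)\no u^2$ for every $u\in\cA C$, and since $\varphi(N,N)>0$ for $N\neq0$ this forces $Nu\neq0$ whenever $u\neq0$, i.e. $\cA C\cap\ker N=\{0\}$. The genuine content, however, is the sharper orthogonal inclusion $\cA C\leq\supp N=(\ker N)^\perp$, which does not follow from injectivity on $\cA C$ alone. Here I would work from condition \eqref{c4-ipp}, namely $P_{\cA C}N^*NP_{\cA C}=\varphi(N,N)P_{\cA C}$, and attempt to show that the kernel component $P_{\ker N}u$ of any $u\in\cA C$ must vanish by comparing the action of $N^*N$ on $\cA C$ with its behaviour off $\supp N$. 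Controlling this interaction—rather than the norm identity on $\cA C$, which is immediate—is precisely where I anticipate the argument to demand the most care.
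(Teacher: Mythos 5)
Your arguments for items \eqref{c2:CcciP} and \eqref{c3:CcciP} are correct and essentially the paper's: the paper likewise obtains the partial isometries from Theorem~\ref{teo:prop-ipN}.\eqref{c5-ipp}, and it derives the orthogonality \eqref{eq:orthogonal-Ccc} from the operator identity $(N_jP_{\cA C})^*N_kP_{\cA C}=P_{\cA C}N_j^*N_kP_{\cA C}=\varphi(N_j,N_k)P_{\cA C}=0$ of Theorem~\ref{teo:prop-ipN}.\eqref{c4-ipp}, which is the same computation you perform vectorially via item \eqref{c2-ipp}; the dimension count for \eqref{c3:CcciP} is identical.

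Item \eqref{c1:CcciP} is where your proposal is incomplete: you correctly prove $\cA C\cap\ker N=\{0\}$, correctly observe that this does not yield $\cA C\leq(\ker N)^\perp$, and then only announce an ``attempt'' to show $P_{\ker N}u=0$ for $u\in\cA C$ without carrying it out. That is a genuine gap --- but your suspicion is well founded, and the missing step cannot be supplied, because the assertion is false as stated. Take $\cA H=\C^2$ with o.n.b. $\{e_1,e_2\}$, put $w=(e_1-e_2)/\sqrt 2$, $N=\Dn{e_1}{w}$, $\cA C=\Span\{e_1\}$ and $\cA N=\Span\{N\}$. Then $\ip{Nu}{Nv}=\tfrac12\ip uv$ for all $u,v\in\cA C$, so $\varphi(\alpha N,\beta N)=\tfrac12\cc{\alpha}\beta$ is an inner product on $\cA N$ satisfying Theorem~\ref{teo:prop-ipN}.\eqref{c2-ipp} and hence is the $\cA C$-decoding inner product; yet $\supp N=\ran N^*=\Span\{e_1-e_2\}$ does not contain $\cA C$. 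The paper's own proof of \eqref{c1:CcciP} is the one-line inference ``$Nu\neq 0$ for all non-zero $u\in\cA C$, which implies $\cA C\leq\supp N$'' --- exactly the non sequitur you declined to make --- so on this point your proposal is more careful than the paper; what is actually derivable from the hypotheses is only the weaker pair of statements $\cA C\cap\ker N=\{0\}$ and $\supp(NP_{\cA C})=\cA C$.
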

\begin{proof}
\eqref{c1:CcciP}: For non-zero $u\in\cA C$, one has by Theorem~\ref{teo:prop-ipN}.\eqref{c2-ipp} that $\no{Nu}^2=\varphi(N,N)\no u^2>0$, i.e., $Nu\neq0$, which implies that $\cA C\leq\supp  N$.

\eqref{c2:CcciP}: It is clear from Theorem~\ref{teo:prop-ipN}.\eqref{c5-ipp} that $N_jP_{\cA C}$ is partial isometry onto $\cA C$. Besides, for $j\neq k$, it follows by Theorem~\ref{teo:prop-ipN}.\eqref{c4-ipp} that
\begin{gather*}
(N_jP_{\cA C})^*N_kP_{\cA C}=P_{\cA C}N_j^*N_kP_{\cA C}=\varphi(N_j,N_k)P_{\cA C}=0\,,
\end{gather*}
which implies that $\ran N_kP_{\cA C}\leq \ker(N_jP_{\cA C})^*=(\ran N_jP_{\cA C})^\perp$, i.e., \eqref{eq:orthogonal-Ccc}.

\eqref{c3:CcciP}: Consider an o.n.b. $\{N_j\}_{j=1}^r$ of $(\cA N,\varphi)$, which by item~\eqref{c3:CcciP} every $N_jP_{\cA C}$ is partial isometry onto $\cA C$, viz. $\dim\ran N_jP_{\cA C}=\dim \cA C$. Besides, \eqref{eq:orthogonal-Ccc} holds $\oplus_{j=1}^r\ran N_jP_{\cA C}\leq\cA H$ and
\begin{gather*}
\dim \cA H\geq \dim \bigoplus_{j=1}^r\ran N_jP_{\cA C}=\sum_{j=1}^r\dim\ran N_jP_{\cA C}=r\dim\cA C\,,
\end{gather*}
as required.
\end{proof}

\begin{remark}\label{rm:necessity-dimN}
It is worth pointing out from Corollary~\ref{cor:dim-prop}.\eqref{c3:CcciP} that  if $\dim \cA H<\dim \cA N\cdot\dim \cA C$ then the noise subspace $\cA N$ cannot be endowed by the $\cA C$-decoding inner product.
\end{remark}
Inner products are closely related with positive invertible operators on Hilbert spaces. Namely \cite[Sect. 10.1]{MR1192782}, every inner product $\varphi$ on a Hilbert space $(\cA H,\ip{\cdot}{\cdot})$ admits the representation 
\begin{gather}\label{rmk:inner-PD}
\varphi (f,g)=\ip{f}{Tg}\,,\qquad\mbox{where $T\in\cA B(\cA H)$, is positive and invertible}.
\end{gather}

\begin{lemma}\label{lem:ip-VpN}
For a sesquilinear form $\varphi\colon\cA H\times\cA H\to \C$, it follows that $\varphi$ is an inner product on $\cA H$ if and only if there exists an o.n.b. $\{u_j\}_{j=1}^{t}$ of $(\cA H,\ip\cdot\cdot)$ such that 
\begin{gather}\label{eq:ip-vP-H2}
\varphi(f,g)=\sum_{j=1}^{t}\no{u_j}_{\varphi}^2\ip{f}{u_j}\ip{u_j}{g}\,,\quad\mbox{with}\quad \no{u_j}_\varphi^2=\varphi(u_j,u_j)>0\,.
\end{gather}
In such a case, $\{\no{u_j}_\varphi^{-1} u_j\}_{j=1}^t$ is an o.n.b. of $(\cA H,\varphi)$.
\end{lemma}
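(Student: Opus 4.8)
The plan is to treat the two implications separately, leaning on the representation \eqref{rmk:inner-PD} together with the finite-dimensional spectral theorem for the substantive direction.

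First I would dispatch the sufficiency. Assuming $\varphi$ has the form \eqref{eq:ip-vP-H2}, sesquilinearity is automatic because each summand factors as the anti-linear map $f\mapsto\ip{f}{u_j}$ times the linear map $g\mapsto\ip{u_j}{g}$; conjugate symmetry follows from $\cc{\ip{f}{u_j}}=\ip{u_j}{f}$, giving $\cc{\varphi(f,g)}=\varphi(g,f)$; and positive definiteness reduces to $\varphi(f,f)=\sum_{j=1}^t\no{u_j}_\varphi^2\abs{\ip{f}{u_j}}^2$, which is nonnegative since every $\no{u_j}_\varphi^2>0$ and vanishes only when $f\perp u_j$ for all $j$, i.e.\ $f=0$, because $\{u_j\}_{j=1}^t$ is an o.n.b. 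This direction is entirely routine.

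For the necessity, I would invoke \eqref{rmk:inner-PD} to write $\varphi(f,g)=\ip{f}{Tg}$ with $T\in\cA B(\cA H)$ positive and invertible, and then diagonalize $T$ by the spectral theorem to obtain an o.n.b. $\{u_j\}_{j=1}^t$ of $(\cA H,\ip{\cdot}{\cdot})$ with $Tu_j=\lambda_j u_j$ and $\lambda_j>0$. Expanding $g=\sum_{j=1}^t\ip{u_j}{g}u_j$ and using the linearity of $T$ and of $\ip{\cdot}{\cdot}$ in its second slot gives
\begin{gather*}
\varphi(f,g)=\ip{f}{Tg}=\sum_{j=1}^t\lambda_j\ip{f}{u_j}\ip{u_j}{g}\,,
\end{gather*}
and the coefficients are pinned down by $\varphi(u_j,u_j)=\ip{u_j}{Tu_j}=\lambda_j\ip{u_j}{u_j}=\lambda_j$, so $\lambda_j=\no{u_j}_\varphi^2>0$, which is exactly \eqref{eq:ip-vP-H2}.

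Finally, to confirm the closing assertion I would set $v_j\colonequals\no{u_j}_\varphi^{-1}u_j$ and compute $\varphi(v_j,v_k)=\no{u_j}_\varphi^{-1}\no{u_k}_\varphi^{-1}\varphi(u_j,u_k)=\delta_{jk}$, using $\varphi(u_j,u_k)=\ip{u_j}{Tu_k}=\lambda_k\delta_{jk}$; since $\{v_j\}_{j=1}^t$ is still a basis of $\cA H$, it is a $\varphi$-o.n.b. I do not anticipate a genuine obstacle: the only points demanding attention are the bookkeeping imposed by the anti-linear-first-argument convention when expanding $g$, and the identification of $\lambda_j$ with $\varphi(u_j,u_j)$, after which both directions close immediately.
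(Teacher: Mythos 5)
Your proposal is correct and follows essentially the same route as the paper: the substantive direction rests on the representation \eqref{rmk:inner-PD} followed by spectral diagonalization of the positive invertible operator $T$, exactly as in the paper's argument. The only (immaterial) difference is in the converse, where you verify the inner-product axioms directly from \eqref{eq:ip-vP-H2}, whereas the paper reassembles $T=\sum_{j}\no{u_j}_\varphi^2\Dn{u_j}{u_j}$ and appeals to \eqref{rmk:inner-PD} again; both are one-line checks.
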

\begin{proof}
If $\varphi$ is an inner product on $\cA H$, then by virtue of \eqref{rmk:inner-PD}, there exists a unique positive invertible operator $T\in\cA B(\cA H)$ such that 
$\varphi(f,g)=\ip{f}{Tg}$. In this fashion, there is an o.n.b. $\{u_j\}_{j=1}^{t}$ of $(\cA H,\ip\cdot\cdot)$ such that $T=\sum_{j=1}^{t}p_j\Dn{N_j}{N_j}$, with $p_j>0$, which implies 
\begin{gather}\label{eq:ip-poP}
\varphi(f,g)=\sum_{j=1}^{n-1}p_j\ip{f}{u_j}\ip{u_j}{g}\,,\quad \mbox{whence}\quad \varphi(u_j,u_j)=p_j\,.
\end{gather}
Conversely, $T=\sum_{j=1}^{t}\no{u_j}_\varphi\Dn{u_j}{u_j}$ is a positive invertible operator in $\cA B(\cA H)$ and by \eqref{eq:ip-vP-H2}, $\varphi(f,g)=\ip{f}{Tg}$. Hence, it follows by \eqref{rmk:inner-PD} that $\varphi$ defines an inner product on $\cA H$. Note that $\varphi(\no{u_j}_\varphi^{-1} u_j,\no{u_k}_\varphi^{-1} u_k)=\delta_{jk}$, i.e., $\{\no{u_j}_\varphi^{-1} u_j\}_{j=1}^t$ is an o.n.b. of $(\cA H,\varphi)$. 
\end{proof}

We conclude the section by presenting another characterization of the $\cA C$-decoding inner product in terms of the Hilbert-Schmidt inner product \eqref{eq:inner-product-S2}.

\begin{theorem}\label{th:repres-Cdip}
Let $\varphi$ be a sesquilinear form on $\cA N$. Then $\varphi$ is the $\cA C$-decoding inner product  if and only if there exists an o.n.b. $\{N_j\}_{j=1}^{t}$ of $(\cA N,\ip\cdot\cdot_2)$ such that 
 \begin{gather}\label{eq:ip-vP-L2}
 \varphi(N,M)=\sum_{j=1}^{t}\no{N_j}_{\varphi}^2\ip{N}{N_j}_2\ip{N_j}{M}_2\,,\quad \mbox{where}\quad \no{N_j}_\varphi^2=\varphi(N_j,N_j)>0\,,\\\label{eq:ip-vP-L3}
 \mbox{and}\quad {N_j}_{\cA C}=\no{N_j}_\varphi^{-1} N_jP_{\cA C}\quad \mbox{satisfies}\quad  {N_j}_{\cA C}^*{N_k}_{\cA C}=\delta_{jk}P_{\cA C}\,.
 \end{gather}
In this case, $\{\no{N_j}_\varphi^{-1} N_j\}_{j=1}^t$ is an o.n.b. of $(\cA N,\varphi)$.
\end{theorem}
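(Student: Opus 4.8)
The plan is to reduce everything to Lemma~\ref{lem:ip-VpN} and to the equivalence of conditions~\eqref{c1-ipp} and~\eqref{c4-ipp} in Theorem~\ref{teo:prop-ipN}, both applied to the Hilbert space $(\cA N,\ip\cdot\cdot_2)$. The key observation is that formula~\eqref{eq:ip-vP-L2} is nothing but the conclusion of Lemma~\ref{lem:ip-VpN} for the pair $(\cA N,\ip\cdot\cdot_2)$: a sesquilinear form $\varphi$ on $\cA N$ admits the expansion~\eqref{eq:ip-vP-L2} for some Hilbert--Schmidt o.n.b. $\{N_j\}_{j=1}^t$ if and only if $\varphi$ is an inner product on $\cA N$, and in that case $\{\no{N_j}_\varphi^{-1}N_j\}_{j=1}^t$ is an o.n.b. of $(\cA N,\varphi)$. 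The extra feature, read off directly from~\eqref{eq:ip-vP-L2}, is that this basis diagonalises $\varphi$, namely $\varphi(N_j,N_k)=\no{N_j}_\varphi^2\delta_{jk}$. Thus~\eqref{eq:ip-vP-L2} encodes exactly that $\varphi$ is an inner product, and it remains only to match condition~\eqref{eq:ip-vP-L3} with the $\cA C$-decoding property.

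For the forward implication, I would assume $\varphi$ is the $\cA C$-decoding inner product. In particular it is an inner product on $\cA N$, so Lemma~\ref{lem:ip-VpN} supplies a Hilbert--Schmidt o.n.b. $\{N_j\}_{j=1}^t$ realising~\eqref{eq:ip-vP-L2} together with the final assertion. To obtain~\eqref{eq:ip-vP-L3}, apply Theorem~\ref{teo:prop-ipN}.\eqref{c4-ipp} with $\cA F=\cA C$ to the basis vectors, giving $P_{\cA C}N_j^*N_kP_{\cA C}=\varphi(N_j,N_k)P_{\cA C}=\no{N_j}_\varphi^2\delta_{jk}P_{\cA C}$; dividing by $\no{N_j}_\varphi\no{N_k}_\varphi$ yields precisely ${N_j}_{\cA C}^*{N_k}_{\cA C}=\delta_{jk}P_{\cA C}$.

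For the converse, suppose~\eqref{eq:ip-vP-L2} and~\eqref{eq:ip-vP-L3} hold. By Lemma~\ref{lem:ip-VpN}, the expansion~\eqref{eq:ip-vP-L2} already guarantees that $\varphi$ is an inner product on $\cA N$ with the stated o.n.b. Rewriting~\eqref{eq:ip-vP-L3} as $P_{\cA C}N_j^*N_kP_{\cA C}=\no{N_j}_\varphi^2\delta_{jk}P_{\cA C}=\varphi(N_j,N_k)P_{\cA C}$, I would then expand arbitrary $N,M\in\cA N$ in the Hilbert--Schmidt o.n.b. as $N=\sum_j\ip{N_j}{N}_2N_j$ and $M=\sum_k\ip{N_k}{M}_2N_k$, and compute
\begin{align*}
P_{\cA C}N^*MP_{\cA C}&=\sum_{j,k}\ip{N}{N_j}_2\ip{N_k}{M}_2P_{\cA C}N_j^*N_kP_{\cA C}\\
&=\sum_{j}\no{N_j}_\varphi^2\ip{N}{N_j}_2\ip{N_j}{M}_2P_{\cA C}=\varphi(N,M)P_{\cA C}\,,
\end{align*}
the last equality being~\eqref{eq:ip-vP-L2}. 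This is Theorem~\ref{teo:prop-ipN}.\eqref{c4-ipp} for $\cA F=\cA C$. Since any $\cA F\leq\cA C$ satisfies $P_{\cA F}=P_{\cA F}P_{\cA C}=P_{\cA C}P_{\cA F}$, sandwiching the identity above between factors of $P_{\cA F}$ extends it to every $\cA F\leq\cA C$, so condition~\eqref{c4-ipp} holds in full and $\varphi$ is the $\cA C$-decoding inner product.

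The routine but essential step is the bilinear expansion in the converse, whose only delicacy is tracking the anti-linearity of $\ip\cdot\cdot_2$ in its first slot so that the cross terms collapse against the diagonalisation $\varphi(N_j,N_k)=\no{N_j}_\varphi^2\delta_{jk}$. Rather than a genuine obstacle, I expect the main conceptual point to be recognising that the case $\cA F=\cA C$ of~\eqref{c4-ipp} already forces the property for all $\cA F\leq\cA C$, so that checking~\eqref{eq:ip-vP-L3}, a statement involving only $P_{\cA C}$, is enough.
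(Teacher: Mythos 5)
Your proposal is correct and follows essentially the same route as the paper: reduce to Lemma~\ref{lem:ip-VpN} for $(\cA N,\ip\cdot\cdot_2)$, derive \eqref{eq:ip-vP-L3} from Theorem~\ref{teo:prop-ipN}.\eqref{c4-ipp} in the forward direction, and in the converse expand $N,M$ in the Hilbert--Schmidt o.n.b.\ to recover $P_{\cA C}N^*MP_{\cA C}=\varphi(N,M)P_{\cA C}$. Your extra remark that the case $\cA F=\cA C$ propagates to all $\cA F\leq\cA C$ via $P_{\cA F}=P_{\cA F}P_{\cA C}$ is a small clarification the paper leaves implicit, but it does not change the argument.
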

\begin{proof} By Lemma~\ref{lem:ip-VpN}, it is suffices to show that the inner product \eqref{eq:ip-vP-L2} is the $\cA C$-decoding inner product if and only if \eqref{eq:ip-vP-L3} holds. In this case, if \eqref{eq:ip-vP-L2} define the $\cA C$-decoding inner product, then by Theorem~\ref{teo:prop-ipN}.\eqref{c4-ipp}
\begin{align*}
{N_j}_{\cA C}^*{N_k}_{\cA C}=\frac{1}{\no{N_j}_\varphi\no{N_k}_\varphi}P_{\cA C}N_j^*N_kP_{\cA C}=\varphi(\no{N_j}_\varphi^{-1} N_j,\no{N_k}_\varphi^{-1} N_k)P_{\cA C}=\delta_{jk}P_{\cA C}\,.
\end{align*}
Conversely, \eqref{eq:ip-vP-L3} holds $P_{\cA C}N_j^*N_kP_{\cA C}=\no{N_j}_{\varphi}^2\delta_{jk}P_{\cA C}$. Thus, for $N=\sum_{j=1}^{t}\ip{N_j}{N}_2N_j$ and $M=\sum_{j=1}^{t}\ip{N_j}{M}_2N_j\in\cA N$,
\begin{align*}
P_{\cA C}N^*MP_{\cA C}&=\sum_{j=1}^{t}\sum_{k=1}^{t}\ip{N}{N_j}_2\ip{N_j}{M}_2P_{\cA C} N_j^*N_kP_{\cA C}\\&=\sum_{j=1}^{t}\no{N_j}_{\varphi}^2\ip{N}{N_j}_2\ip{N_j}{M}_2P_{\cA C}=\varphi(N,M)P_{\cA C}\,,
\end{align*}
which implies that \eqref{eq:ip-vP-L2} is the $\cA C$-decoding inner product.
\end{proof}

\section{Quantum error-correcting codes}\label{s:QECC-BIP}
In this section, we look for quantum channels that decode states supported on a code $\cA C\leq \cA H$ and corrupted by  noise $\cA N\leq B(\cA H)$.

Recall that a \emph{quantum channel} is a linear map $\Phi\colon \mathcal{B}(\cA H)\to \cA B(\cA H)$ given by 
\begin{gather}\label{eq:quantum-Channel}
\Phi\rho=\sum_{j=1}^n K_j\rho K_j^*\,,\quad\mbox{where}\quad \sum_{j=1}^n K_j^*K_j=I\,,\quad K_j\in\cA {B(H)}\,.\quad (n\in\N)
\end{gather}
The operators $K_j$ are well-known as \emph{Krauss operators}, while the structure \eqref{eq:quantum-Channel} of $\Phi$ as the \emph{Krauss decomposition}, which is not unique. For instance, we get the same quantum channel if we replace $K_j$ by $e^{i\beta} K_j$ in \eqref{eq:quantum-Channel}, with $\beta\in\R$.

\begin{definition}\label{def:QNCC}
We call a code $\cA C$ \emph{quantum $\cA N$-correcting code} ($\cA N$-cc for short) if there exists a quantum channel $\Phi$ for which
\begin{gather}\label{eq:N-cc}
\Phi\lrp{N\rho N^*}=\tr{\rho N^*N}\rho\,,\quad \mbox{for all}\quad\rho\in\tilde{\cA C}\,,\quad N\in\cA N\,.
\end{gather}
In this case, we write $(\cA C,\Phi)$ and we say that $\cA C$ is decoded by $\Phi$.
\end{definition}

\begin{remark}\label{eq:subcode-prop}
If a code $\cA C$ is an $\cA N$-cc then so is any code $\cA F\leq \cA C $, since $\tilde{\cA F}\subset \tilde{\cA C}$.
\end{remark}

By linearity, the condition \eqref{eq:N-cc} can be replaced by 
\begin{gather*}
\Phi\lrp{\sum_{m=1}^k N_m\rho N_m^*}=\tr{\rho\sum_{m=1}^k N_m^*N_m}\rho\,,\quad \mbox{for all }\rho\in\tilde{\cA C}\,,\quad N_m\in\cA N\,.\quad (k\in\N)
\end{gather*}

The \emph{negligible-noise} operators of $\cA N$ with respect to $\cA C$, is 
\begin{gather}\label{eq:negligible-noise}
\cA N_{\cA C}\colonequals \{N\in\cA N\,:\,NP_{\cA C}=0\}\leq \cA N\,.
\end{gather}
Thus, any $\cA C$ is $\cA N_{\cA C}$-cc, since for any quantum channel $\Phi$, 
\begin{gather*}
\Phi\lrp{N\rho N^*}=\Phi\lrp{NP_{\cA C}\rho N^*}=0=\tr{\rho N^*NP_{\cA C}}\rho=\tr{\rho N^*N}\rho\,,\quad \rho\in\tilde{\cA C}\,,\quad N\in\cA N_{\cA C}\,.
\end{gather*}

\begin{remark}\label{rmk:nonzero-cond} From now on we shall consider  codes $\cA C$ and noise subspaces $\cA N$ which does not contain negligible-noise operators, i.e., 
\begin{gather}\label{eq:nonzero-cond}
NP_{\cA C}\neq 0\,,\quad\mbox{for all non-zero}\quad N\in\cA N\,.
\end{gather}
Bearing in mind \eqref{eq:inner-product-S2}, the noise $\hat{\cA N}=\cA N\ominus \cA N_0$ satisfies \eqref{eq:nonzero-cond} and  $\cA N=\hat{\cA N}\oplus\cA N_0$. Thus, for $M=N+N_0\in\cA N$, with $N\in\hat{\cA N}$ and $N_0\in\cA N_0$,  
\begin{align*}
M\rho M^*=(N+N_0)\rho(N+N_0)^*=N\rho N^*\,,\quad \mbox{for } \rho\in\tilde{\cA C}\,,
\end{align*}
since $N_0\rho=N_0P_{\cA C}\rho=0$ and $\rho N_0^*=(N_0\rho)^*=0$.
\end{remark}

\begin{theorem}\label{th:KL-G}
If $(\cA C,\Phi)$ is an  $\cA N$-cc then $\cA N$ can be endowed with the $\cA C$-decoding inner product.

Conversely, if $\varphi$ is the $\cA C$-decoding inner product on $\cA N$, then for any o.n.b. $\{N_j\}_{j=1}^{n-1}$ of $(\cA N,\varphi)$ it follows that $(\cA C,\Phi)$ is an $\cA N$-cc, where
\begin{gather*}
\Phi(\rho)=\sum_{j=1}^nK_j\rho K_j^*\,,\quad K_j=P_{\cA C}N_j^*\,,\mbox{ for j=1,\dots, n-1}\quad\mbox{and}\quad K_n=I-\sum_{j=1}^{n-1}K_j^*K_j\,,
\end{gather*}
holds that $P_j=K_j^*K_j$ are orthogonal projections, for $j=1,\dots,n-1$, with $P_jP_r=\delta_{jr}P_j$, while $K_nN=0$, for all $N\in\cA N$.
\end{theorem}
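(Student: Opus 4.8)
The plan is to prove the two directions of Theorem~\ref{th:KL-G} by leveraging the equivalent characterizations established in Theorem~\ref{teo:prop-ipN} and Corollary~\ref{cor:dim-prop}. For the forward direction, assuming $(\cA C,\Phi)$ is an $\cA N$-cc, I would start from the defining relation \eqref{eq:N-cc} and seek to manufacture a candidate sesquilinear form $\varphi$ on $\cA N$ which can then be verified to satisfy condition \eqref{c4-ipp} of Theorem~\ref{teo:prop-ipN}. The natural candidate is $\varphi(N,M)\ceq\tr{\rho N^*M}$ for a fixed maximally-mixed-type state $\rho\in\tilde{\cA C}$, or equivalently the expression obtained by applying the defining decoding identity to rank-one states $\Dn uu$ with $u\in\cA C$, $\no u=1$; the remark following Theorem~\ref{teo:prop-ipN} guarantees independence of the choice of $u$. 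The main technical work here is to show that $\Phi$ being trace-preserving and completely positive, combined with \eqref{eq:N-cc}, forces $P_{\cA C}N^*MP_{\cA C}$ to be a scalar multiple of $P_{\cA C}$; once that is in hand, positive-definiteness of $\varphi$ follows from the nonzero condition \eqref{eq:nonzero-cond} together with Corollary~\ref{cor:dim-prop}.\eqref{c1:CcciP}, since $NP_{\cA C}\neq0$ yields $\varphi(N,N)=\tr{\rho N^*N}>0$.

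For the converse, given the $\cA C$-decoding inner product $\varphi$ and an o.n.b. $\{N_j\}_{j=1}^{n-1}$ of $(\cA N,\varphi)$, I would first verify that the stated $\Phi$ is a legitimate quantum channel in the sense of \eqref{eq:quantum-Channel}, i.e. that $\sum_{j=1}^n K_j^*K_j=I$. By Theorem~\ref{teo:prop-ipN}.\eqref{c4-ipp} applied with $\cA F=\cA C$, we have $P_{\cA C}N_j^*N_kP_{\cA C}=\varphi(N_j,N_k)P_{\cA C}=\delta_{jk}P_{\cA C}$, so each $P_j\ceq K_j^*K_j=N_jP_{\cA C}N_j^*\cdots$ must be reorganized carefully: with $K_j=P_{\cA C}N_j^*$ one computes $K_j^*K_j=N_jP_{\cA C}N_j^*$, which is the projection onto $\ran N_jP_{\cA C}$ precisely because $N_jP_{\cA C}$ is a partial isometry onto $\cA C$ by Corollary~\ref{cor:dim-prop}.\eqref{c2:CcciP}. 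The orthogonality relation \eqref{eq:orthogonal-Ccc} then gives $P_jP_r=\delta_{jr}P_j$, making $K_n=I-\sum_{j=1}^{n-1}P_j$ itself an orthogonal projection, hence the Kraus closure relation $\sum_{j=1}^n K_j^*K_j=I$ holds automatically. I would also record that $K_nN=0$ for all $N\in\cA N$: since $K_n$ projects onto the orthocomplement of $\bigoplus_j\ran N_jP_{\cA C}$ and each $\ran NP_{\cA C}$ lies inside this direct sum, $K_n$ annihilates $NP_{\cA C}$, and the negligible part vanishes on $\tilde{\cA C}$ by Remark~\ref{rmk:nonzero-cond}.

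It remains to confirm the decoding identity \eqref{eq:N-cc} for this $\Phi$. The plan is to fix $\rho\in\tilde{\cA C}$ and $N\in\cA N$, expand $N=\sum_{k}\varphi(N_k,N)N_k$ in the $\varphi$-o.n.b., and compute $\Phi(N\rho N^*)=\sum_{j=1}^n K_jN\rho N^*K_j^*$. The $j=n$ term drops out because $K_nN=0$, and for $1\le j\le n-1$ the product $K_jN=P_{\cA C}N_j^*N$ can be simplified by inserting $\rho=P_{\cA C}\rho P_{\cA C}$ and applying \eqref{c4-ipp} to collapse $P_{\cA C}N_j^*N P_{\cA C}=\varphi(N_j,N)P_{\cA C}$. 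Summing over $j$ and using $\sum_j\abs{\varphi(N_j,N)}^2=\varphi(N,N)=\tr{\rho N^*N}$ (by Theorem~\ref{teo:prop-ipN}.\eqref{c3-ipp} and Parseval in $(\cA N,\varphi)$) should reassemble exactly $\tr{\rho N^*N}\rho$.

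The step I expect to be the main obstacle is the forward direction's extraction of the scalar relation $P_{\cA C}N^*MP_{\cA C}=\varphi(N,M)P_{\cA C}$ purely from the channel-level identity \eqref{eq:N-cc}: the decoding condition constrains $\Phi$ on the image states $N\rho N^*$ but does not a priori pin down the Gram-type structure of $N^*M$ restricted to $\cA C$, so one must argue that complete positivity and the trace-preservation of $\Phi$ leave no freedom—most cleanly by testing \eqref{eq:N-cc} against rank-one $\rho=\Dn uu$ and rank-one $\eta=\Dn vv$ via the Hilbert--Schmidt pairing $\ip{\eta}{\Phi(N\rho N^*)}_2$, thereby transferring the scalar identity onto $\ip{Nu}{Mv}$ and invoking equivalence \eqref{c2-ipp}. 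Handling the off-diagonal ($N\neq M$) case rigorously, rather than only the diagonal $N=M$ appearing directly in \eqref{eq:N-cc}, is where the polarization argument of Theorem~\ref{teo:prop-ipN} must be imported.
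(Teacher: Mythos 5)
Your proposal follows essentially the same route as the paper: the converse direction is verified exactly as in the paper's proof (Theorem~\ref{teo:prop-ipN}.\eqref{c4-ipp} gives $P_jP_r=\delta_{jr}P_j$, $K_nNP_{\cA C}=0$, and $\Phi(N\rho N^*)=\sum_j\abs{\varphi(N_j,N)}^2\rho=\tr{\rho N^*N}\rho$), and your forward-direction test of \eqref{eq:N-cc} against rank-one $\Dn{u}{u}$ and $\Dn{v}{v}$ with $v\perp u$ is precisely how the paper extracts $K_jNu=\lambda_j(N)u$ from a Kraus decomposition and then gets $\ip{Nu}{Mv}=\sum_j\cc{\lambda_j(N)}\lambda_j(M)\ip uv$ from the closure relation $\sum_jK_j^*K_j=I$. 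Two small adjustments: the polarization you anticipate needing for the off-diagonal case is unnecessary on this route, since the closure relation delivers the identity for arbitrary $N,M$ simultaneously (Theorem~\ref{teo:prop-ipN}.\eqref{c2-ipp} then finishes), and citing Corollary~\ref{cor:dim-prop}.\eqref{c1:CcciP} for positive-definiteness would be circular---one should instead observe $\varphi(N,N)=\no{Nu}^2$ for unit $u\in\cA C$ and invoke \eqref{eq:nonzero-cond} directly.
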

\begin{proof}
If $(\cA C,\Phi)$ is an $\cA N$-cc, then for $u\neq0$ in $\cA C$ and $v\in\{u\}^\perp$, one has by \eqref{eq:N-cc} that $0=\ip{v}{\Phi(N\Dn uuN^*)v}=\sum_{j=1}^n\abs{\ip{v}{K_jNu}}^2$, whence $\ip v{K_jNu}=0$, for all $v\in\{u\}^\perp$, viz. $K_jNu=\lambda_j(N,u)u$, with $\lambda_j(N,u)\in\C$. Since $\cA C$ is a subspace, one can show that $\lambda_j(N,u)$ does not depend on $u$. So, it is of the form $\lambda_j(N)$ for some linear functional $\lambda_j\colon\cA N\to \C$, and 
\begin{gather}\label{eq:functional-condition}
K_jNu=\lambda_j(N)u\,,\quad\mbox{for all}\quad N\in\cA N\,,\quad u\in\cA C\,,\quad j=1,\dots,n\,.
\end{gather}
Define the non-negative sesquilinear form $\varphi\colon\cA N\times \cA N\to \C$ by $\varphi(N,M)=\sum_{j=1}^n\cc{\lambda_j(N)}\lambda_j(M)$. Thereby, for $u,v\in\cA C$, it follows by \eqref{eq:functional-condition} that
\begin{gather}\label{eq:iPVp-condition}
\varphi(N,M)\ip uv=\sum_{j=1}^n\ip{\lambda_j(N)u}{\lambda_j(M)v}=\sum_{j=1}^n\ip{Nu}{K_j^*K_jMv}=\ip{Nu}{Mv}\,.
\end{gather}
So, \eqref{eq:nonzero-cond} and  \eqref{eq:iPVp-condition} imply for $\no u=1$ that $\varphi(N,N)=\no{Nu}^2=0$ if and only if $N=0$. Hence, $\varphi$ 
is an inner product, and the $\cA C$-decoding inner product on $\cA N$, by \eqref{eq:iPVp-condition} and Theorem~\ref{teo:prop-ipN}. 

Conversely, one has that $P_j=K_j^*K_j=N_jP_{\cA C}N_j^*$ and by Theorem~\ref{teo:prop-ipN}.\eqref{c4-ipp},
\begin{gather*}
P_jP_r=N_jP_{\cA C}N_j^*N_rP_{\cA C}N_r^*=\varphi(N_j,N_r) N_jP_{\cA C}N_r^*=\delta_{jr}P_j\,.
\end{gather*}
Besides, for $N\in\cA N$ and $\rho\in\tilde{\cA C}$, one has  $N=\sum_{j=1}^{n-1}\varphi(N_j,N)N_j$ and again Theorem~\ref{teo:prop-ipN}.\eqref{c4-ipp},
\begin{gather*}
K_nN\rho=N\rho-\sum_{j=1}^{n-1}N_jP_{\cA C}N_j^*N\rho=N\rho-\sum_{j=1}^{n-1}\varphi(N_j,N)N_j\rho=0\,,
\end{gather*}
Hence, one gets by Theorem~\ref{teo:prop-ipN}.\eqref{c3-ipp} that $\sum_{j=1}^{n-1}\abs{\varphi(N_j,N)}^2=\varphi(N,N)=\tr{\rho N^*N}$ and  
\begin{align*}
\Phi(N\rho N^*)=\sum_{j=1}^{n-1}P_{\cA C}N_j^*N\rho N^*N_jP_{\cA C}=\sum_{j=1}^{n-1}\abs{\varphi(N_j,N)}^2\rho=\tr{\rho N^*N}\rho\,,
\end{align*}
wherefrom  $(\cA C,\Phi)$ is an $\cA N$-cc.
\end{proof}

The following are known as the Knill-Laflamme conditions, which were first proved in \cite{MR1455854}.

\begin{corollary} The following conditions are equivalent:
\begin{enumerate}[(i)]
\item\label{it1-kl} A code $\cA C$ is $\cA N$-cc.
\item\label{it2-kl} For $N,M\in\cA N$ and an o.n.b. $\{u_j\}_{j=1}^s$ of $\cA C$, there exists a unique $\lambda_{N,M}\in\C$ such that
\begin{gather}\label{eq:NL-onb}
\ip{u_i}{N^*Mu_j}=\delta_{ij}\lambda_{N,M}\,.
\end{gather}
\item\label{it3-kl} For $N,M\in\cA N$ there exists a unique $\lambda_{N,M}\in\C$ such that
\begin{gather}\label{eq:k-lC2}
P_{\cA C} N^*MP_{\cA C}=\lambda_{N,M}P_{\cA C}\,.
\end{gather}
\end{enumerate}
\end{corollary}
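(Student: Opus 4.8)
The plan is to prove the three equivalences by establishing a cycle, using the already-proven Theorem~\ref{th:KL-G} and Theorem~\ref{teo:prop-ipN} as the main engines. First I would observe that conditions \eqref{it2-kl} and \eqref{it3-kl} are really coordinate and coordinate-free versions of the same statement, so the essential content is to connect the decoding property in \eqref{it1-kl} with the projection identity in \eqref{it3-kl}. The key external input is Theorem~\ref{th:KL-G}, which tells us that $\cA C$ being $\cA N$-cc is \emph{equivalent} to $\cA N$ admitting a $\cA C$-decoding inner product $\varphi$; this immediately funnels the problem into the characterizations of Theorem~\ref{teo:prop-ipN}.

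I would carry out the argument as follows. For \eqref{it1-kl}$\Rightarrow$\eqref{it3-kl}: assuming $\cA C$ is $\cA N$-cc, Theorem~\ref{th:KL-G} furnishes the $\cA C$-decoding inner product $\varphi$ on $\cA N$, and then Theorem~\ref{teo:prop-ipN}.\eqref{c4-ipp} (applied with $\cA F=\cA C$) gives precisely $P_{\cA C}N^*MP_{\cA C}=\varphi(N,M)P_{\cA C}$, so one sets $\lambda_{N,M}\ceq\varphi(N,M)$. Uniqueness of $\lambda_{N,M}$ is clear since $P_{\cA C}\neq 0$. For \eqref{it3-kl}$\Rightarrow$\eqref{it2-kl}: expand $P_{\cA C}=\sum_{r=1}^s\Dn{u_r}{u_r}$ in the given o.n.b. and read off the matrix entries of \eqref{eq:k-lC2}, which yields $\ip{u_i}{N^*Mu_j}=\lambda_{N,M}\delta_{ij}$ directly. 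The reverse direction \eqref{it2-kl}$\Rightarrow$\eqref{it3-kl} is the same computation run backwards, reassembling the projection from its matrix entries.

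The most delicate step is closing the loop at \eqref{it3-kl}$\Rightarrow$\eqref{it1-kl}, because one must manufacture an inner product from the scalar data $\lambda_{N,M}$. I would define $\varphi(N,M)\ceq\lambda_{N,M}$ and verify it is genuinely an inner product on $\cA N$: sesquilinearity and Hermitian symmetry follow from \eqref{eq:k-lC2} by taking adjoints, while positive-definiteness is where the standing hypothesis \eqref{eq:nonzero-cond} is indispensable. Indeed, for a unit vector $u\in\cA C$ one computes $\varphi(N,N)=\ip{u}{P_{\cA C}N^*NP_{\cA C}u}=\no{Nu}^2$, which is strictly positive for $N\neq0$ exactly because $NP_{\cA C}\neq0$. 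Once $\varphi$ is known to be an inner product satisfying \eqref{eq:k-lC2}, Theorem~\ref{teo:prop-ipN}.\eqref{c4-ipp}$\Rightarrow$\eqref{c1-ipp} identifies it as the $\cA C$-decoding inner product, and the converse half of Theorem~\ref{th:KL-G} produces an explicit decoding channel, giving \eqref{it1-kl}. The main obstacle I anticipate is not any single computation but ensuring positive-definiteness is correctly attributed to \eqref{eq:nonzero-cond}; without that hypothesis $\lambda_{N,M}$ would only be a positive \emph{semi}-definite form and the equivalence would fail on the negligible-noise part of $\cA N$, so the argument must invoke Remark~\ref{rmk:nonzero-cond} at precisely this point.
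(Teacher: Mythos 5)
Your proposal is correct and follows essentially the same route as the paper: Theorem~\ref{th:KL-G} combined with the characterizations in Theorem~\ref{teo:prop-ipN} for the forward direction, the matrix-entry expansion of $P_{\cA C}$ for \eqref{it2-kl}$\Leftrightarrow$\eqref{it3-kl}, and the construction of $\varphi(N,M)\ceq\lambda_{N,M}$ with positive-definiteness secured by \eqref{eq:nonzero-cond} for \eqref{it3-kl}$\Rightarrow$\eqref{it1-kl}. The only (immaterial) differences are the orientation of the cycle and that you invoke Theorem~\ref{teo:prop-ipN}.\eqref{c4-ipp} where the paper uses \eqref{c2-ipp} or verifies \eqref{eq:C-cc-Condition} directly.
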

\begin{proof}
\eqref{it1-kl}$\Rightarrow$\eqref{it2-kl}: By Theorem~\ref{th:KL-G}, the noise $\cA N$ can be endowed with the $\cA C$-decoding inner product $\varphi$, and  Theorem~\ref{teo:prop-ipN}.\eqref{c2-ipp} implies \eqref{eq:NL-onb}.

\eqref{it2-kl}$\Rightarrow$\eqref{it3-kl}: Since $P_{\cA C}=\sum_{j=1}^s\Dn{u_j}{u_j}$, one has that
\begin{align*}
P_{\cA C} N^*MP_{\cA C}=\sum_{j=1}^s\sum_{k=1}^s\Dn{u_j}{u_j}N^*M\Dn{u_k}{u_k}=\lambda_{N,M}\sum_{j=1}^s\Dn{u_j}{u_j}=\lambda_{N,M}P_{\cA C}\,.
\end{align*}

\eqref{it3-kl}$\Rightarrow$\eqref{it1-kl}: Denote the form $\varphi\colon\cA N \times\cA N\to\C$ by $\varphi(N,M)=\lambda_{N,M}$, which is sesquilinear as a consequence of \eqref{eq:k-lC2}. Besides, 
\begin{gather*}
\cc{\varphi(N,M)}P_{\cA C}=\lrp{P_{\cA C} N^*MP_{\cA C}}^*=P_{\cA C} M^*NP_{\cA C}=\varphi(M,N)P_{\cA C}\,,
\end{gather*}
i.e., $\varphi$ is symmetric. On the other hand, $\varphi(N,N)P_{\cA C}=(NP_{\cA C})^*NP_{\cA C}\geq 0$, viz. $\varphi\geq0$, and if $\varphi(N,N)=0$, then $(NP_{\cA C})^*NP_{\cA C}=0$. So, for any $f\in\cA H$, 
$0=\ip{f}{(NP_{\cA C})^*NP_{\cA C}f}=\no{NP_{\cA C}f}^2$, which implies $NP_{\cA C}=0$, i.e, $N=0$, by virtue of \eqref{eq:nonzero-cond}, and $\varphi$ defines an inner product on $\cA N$. Thence, for all $N,M\in\cA N,\,\eta,\rho\in\tilde{\cA C}$, since $P_{\cA C}\rho=\rho$ and $P_{\cA C}\eta=\eta$, it follows by \eqref{eq:k-lC2} that 
\begin{gather*}
\varphi(N,M)\eta\rho=\eta\lrp{\lambda_{N,M}P_{\cA C}}\rho=\eta\lrp{P_{\cA C} N^*MP_{\cA C}}\rho=\eta N^*M\rho\,.
\end{gather*}
Hence, $\varphi$ is the $\cA C$-decoding inner product on $\cA N$ and Theorem~\ref{th:KL-G} implies that $\cA C$ is an $\cA N$-cc.
\end{proof}
 
Let us exhibit necessary and sufficient conditions which guarantee a code is an $\cA N$-cc.

\begin{theorem}\label{th:N-cc-Conditions}
A code $\cA C$ is an $\cA N$-cc if and only if there exists a basis  $\{N_j\}_{j=1}^{n-1}$ of $\cA N$ such that 
\begin{gather}\label{eq:basis-N-OP}
\mbox{every $N_jP_{\cA C}$ is a partial isometry onto $\cA C$\quad and\quad }\ran N_jP_{\cA C}\perp \ran N_rP_{\cA C}\,,\quad j\neq r\,.
\end{gather}
In such a case:
\begin{enumerate}[(i)]
\item\label{it:OP0} $\cA C\leq\supp  N_j$,  for $j=1,\dots,n-1$.
\item\label{it:OP1} $\Phi(\rho)=\sum_{j=1}^nK_j\rho K_j^*$ decodes $\cA C$, where its Krauss operators are given by
\begin{gather*}
K_j=P_{\cA C}N_j^*\,,\quad j=1,\dots, n-1\,,\quad \mbox{which satisfy \quad $K_j^*K_jK_r^*K_r=\delta_{jr} K_j^*K_j$ }\,,
\end{gather*}
while $K_n=I-\sum_{r=1}^{n-1}K_r^*K_r$ 
holds $K_nN_jP_{\cA C}=0$.
\item\label{it:OP2} The $\cA C$-decoding inner product on $\cA N$ is given by 
\begin{gather}\label{eq:ip-C-cc-N}
\varphi(N,M)=\sum_{j=1}^{n-1}\cc n_jm_j\,,\quad N=\sum_{j=1}^{n-1}n_jN_j,
M=\sum_{j=1}^{n-1}m_jN_j\in \cA N\,.\quad (n_j,m_j\in\C)
\end{gather}
\item\label{it:OP3} $\dim\cA N\cdot\dim \cA C\leq\dim \cA H$.
\end{enumerate}
\end{theorem}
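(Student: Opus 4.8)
The plan is to assemble this statement almost entirely from the machinery already in place, since both the inner-product characterization of Theorem~\ref{teo:prop-ipN} and the channel-construction of Theorem~\ref{th:KL-G} are available; the only genuinely new verification is the converse direction, where the geometric hypothesis \eqref{eq:basis-N-OP} must be converted into the algebraic identity appearing in Theorem~\ref{teo:prop-ipN}.\eqref{c4-ipp}. I would begin with the forward implication. If $\cA C$ is an $\cA N$-cc, then Theorem~\ref{th:KL-G} guarantees that $\cA N$ carries the $\cA C$-decoding inner product $\varphi$; choosing an o.n.b. $\{N_j\}_{j=1}^{n-1}$ of $(\cA N,\varphi)$ and invoking Corollary~\ref{cor:dim-prop}.\eqref{c2:CcciP} immediately produces a basis for which every $N_jP_{\cA C}$ is a partial isometry onto $\cA C$ with mutually orthogonal ranges, which is precisely \eqref{eq:basis-N-OP}.

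For the converse, suppose \eqref{eq:basis-N-OP} holds for a basis $\{N_j\}_{j=1}^{n-1}$. I would \emph{define} $\varphi$ on $\cA N$ by declaring $\{N_j\}$ orthonormal, i.e. by the formula \eqref{eq:ip-C-cc-N}; positivity and nondegeneracy are automatic because $\{N_j\}$ is a basis, so $\varphi$ is an inner product and $\{N_j\}$ is an o.n.b. of $(\cA N,\varphi)$. The crux is then to show $\varphi$ is the $\cA C$-decoding inner product, and here I would unpack \eqref{eq:basis-N-OP} through Remark~\ref{rm:partial-isometry}: the statement that $N_jP_{\cA C}$ is a partial isometry onto $\cA C$ means $(N_jP_{\cA C})^*(N_jP_{\cA C})=P_{\cA C}$, while orthogonality of the ranges is equivalent to $(N_jP_{\cA C})^*(N_rP_{\cA C})=0$ for $j\neq r$; together these give $P_{\cA C}N_j^*N_rP_{\cA C}=\delta_{jr}P_{\cA C}$. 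Expanding $N=\sum_j n_jN_j$ and $M=\sum_r m_rN_r$ by sesquilinearity then yields $P_{\cA C}N^*MP_{\cA C}=\varphi(N,M)P_{\cA C}$, which is exactly Theorem~\ref{teo:prop-ipN}.\eqref{c4-ipp}. Hence $\varphi$ is the $\cA C$-decoding inner product, and Theorem~\ref{th:KL-G} shows $\cA C$ is an $\cA N$-cc.

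Finally I would read off the four conclusions from the inner product just produced. Since each $N_j\neq0$, item \eqref{it:OP0} is Corollary~\ref{cor:dim-prop}.\eqref{c1:CcciP}; the channel in \eqref{it:OP1}, together with the relations $K_j^*K_jK_r^*K_r=\delta_{jr}K_j^*K_j$ and $K_nN_jP_{\cA C}=0$, is precisely the channel constructed in the converse half of Theorem~\ref{th:KL-G} applied to the o.n.b. $\{N_j\}$ of $(\cA N,\varphi)$; item \eqref{it:OP2} simply records the defining formula \eqref{eq:ip-C-cc-N}, which is the $\cA C$-decoding inner product by uniqueness; and \eqref{it:OP3} is Corollary~\ref{cor:dim-prop}.\eqref{c3:CcciP}. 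The only step demanding care is the conversion in the previous paragraph, where one must interpret ``partial isometry onto $\cA C$'' correctly so as to land on the diagonal identity $P_{\cA C}N_j^*N_rP_{\cA C}=\delta_{jr}P_{\cA C}$; every remaining step is a direct citation of an earlier result.
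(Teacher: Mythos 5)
Your proposal is correct and follows essentially the same route as the paper: forward direction via Theorem~\ref{th:KL-G} and Corollary~\ref{cor:dim-prop}, converse by deriving $P_{\cA C}N_j^*N_rP_{\cA C}=\delta_{jr}P_{\cA C}$ from \eqref{eq:basis-N-OP} and recognizing \eqref{eq:ip-C-cc-N} as the $\cA C$-decoding inner product. The only cosmetic differences are that you verify the decoding-inner-product property through Theorem~\ref{teo:prop-ipN}.\eqref{c4-ipp} where the paper uses \eqref{c2-ipp}, and you delegate the Kraus-operator identities of item \eqref{it:OP1} to the converse half of Theorem~\ref{th:KL-G} rather than recomputing them; both are legitimate.
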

\begin{proof}
If $\cA C$ is an $\cA N$-cc then by Theorem~\ref{th:KL-G} the noise $\cA N$ is endowed with the $\cA C$-decoding inner product $\varphi$, and any o.n.b $\{N_j\}_{j=1}^{n-1}$ of $(\cA N,\varphi)$ satisfies item~\eqref{it:OP1}. Besides, \eqref{eq:basis-N-OP}, items~\eqref{it:OP0} and \eqref{it:OP3} follow from Corollary~\ref{cor:dim-prop}.\eqref{c1:CcciP}-\eqref{c3:CcciP}.

Conversely, if a basis $\{N_j\}_{j=1}^{n-1}$ of $\cA N$ satisfies \eqref{eq:basis-N-OP}, then one has that $(N_jP_{\cA C})^*N_jP_{\cA C}=P_{\cA C}$ and 
 $\ran N_rP_{\cA C}\leq \ker (N_jP_{\cA C})^*$, for $j\neq r$, i.e.,
\begin{gather}\label{eq:comp-Nj}
P_{\cA C}N_j^*N_rP_{\cA C}=(N_jP_{\cA C})^*N_rP_{\cA C}=\delta_{jr}P_{\cA C}\,.
\end{gather}
It is clear that \eqref{eq:ip-C-cc-N} is an inner product on $\cA N$ and \eqref{eq:comp-Nj} fulfills for $u,v\in\cA C$ that
\begin{gather*}
\ip{Nu}{Mv}=\sum_{j=1}^{n-1}\sum_{r=1}^{n-1}\cc n_j m_r\ip{u}{P_{\cA C}N_j^*N_rP_{\cA C}v}=\varphi(N,M)\ip uv\,,
\end{gather*}
i.e., $\varphi$ is the $\cA C$-decoding inner product, by Theorem~\ref{teo:prop-ipN}.\eqref{c2-ipp}. If $K_j=P_{\cA C}N_j^*$ then $K_j^*K_j=N_jP_{\cA C}N_j^*$, for $j=1,\dots,n-1$, and by \eqref{eq:comp-Nj}, $
K_j^*K_jK_r^*K_r=N_jP_{\cA C}N_j^*N_rP_{\cA C}N_r^*=\delta_{jr} K_j^*K_j$. So, 
\begin{gather}\label{eq:Kn-Nj0}
K_nN_jP_{\cA C}=N_jP_{\cA C}-\sum_{r=1}^{n-1}K_r^*K_rN_jP_{\cA C}=N_jP_{\cA C}-\sum_{r=1}^{n-1}N_rP_{\cA C}N_r^*N_jP_{\cA C}=0\,.
\end{gather}
Thence, $\Phi$ as item~\eqref{it:OP1} is a quantum channel, and for $\rho\in\tilde{\cA C}$, it follows that $\rho P_{\cA C}=P_{\cA C}\rho=\rho$, while \eqref{eq:comp-Nj} and \eqref{eq:Kn-Nj0} yield
\begin{gather}\label{eq:on-Basis-Nj}
\Phi(N_r\rho N_j^*)=\sum_{s=1}^{n-1}P_{\cA C}N_s^*N_r\rho N_j ^*N_sP_{\cA C}=\delta_{jr}\rho\,.
\end{gather}
Hence, by Theorem~\ref{teo:prop-ipN}.\eqref{c3-ipp} and \eqref{eq:on-Basis-Nj}, one hast for all $\rho\in\tilde{\rho}$ and $N=\sum_{j=1}^{n-1}n_jN_j\in \cA N$ that
\begin{gather*}
\Phi(N\rho N^*)=\sum_{j=1}^{n-1}\sum_{r=1}^{n-1}\cc n_j n_r\Phi(N_r\rho N_j^*)=\varphi(N,N)\rho=\tr{\rho N^*N}\rho\,,
\end{gather*}
whence $\Phi$ decodes $\cA C$ and $(\cA C,\Phi)$ is an $\cA N$-cc
.\end{proof}

As a consequence of the above, the following assertion presents two particular limiting cases of quantum error-correcting codes.

\begin{corollary}\label{cor:C-cc-existence}
\begin{enumerate}[(i)]
\item\label{c1:C-cc} $\cA H$ is an $\cA N$-cc if and only if $\cA N=\Span\{N\}$, for some partial isometry operator $N\in\cA B(\cA H)$ onto $\cA H$. In this case, $\cA C$ is decoded by $\Phi\rho=N^*\rho N$.
\item\label{c2:C-cc} $\cA C$ is a $\cA B(\cA H)$-cc, with decoding channel $\Phi=I$ if  
$\cA C=\cA H$ and 
$\dim \cA H=1$. Otherwise, there is no a $\cA B(\cA H)$-cc. 
\end{enumerate}
\end{corollary}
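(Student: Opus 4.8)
My plan is to reduce both items to two facts already established: the dimension inequality $\dim\cA N\cdot\dim\cA C\le\dim\cA H$ (Corollary~\ref{cor:dim-prop}.\eqref{c3:CcciP}, restated in Theorem~\ref{th:N-cc-Conditions}.\eqref{it:OP3}) and the partial-isometry characterisation (Corollary~\ref{cor:dim-prop}.\eqref{c2:CcciP}), keeping in mind that a partial isometry onto the whole space $\cA H$ necessarily satisfies $N^*N=I$ and is therefore unitary in finite dimensions.

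For item~\eqref{c1:C-cc} I would argue as follows. If $\cA H$ is an $\cA N$-cc, then $\cA C=\cA H$ and Theorem~\ref{th:KL-G} endows $\cA N$ with the $\cA C$-decoding inner product; Corollary~\ref{cor:dim-prop}.\eqref{c3:CcciP} then reads $\dim\cA N\cdot\dim\cA H\le\dim\cA H$, forcing $\dim\cA N=1$, say $\cA N=\Span\{N\}$. Taking $N$ to be the normalised generator, Corollary~\ref{cor:dim-prop}.\eqref{c2:CcciP} shows $NP_{\cA H}=N$ is a partial isometry onto $\cA H$, i.e.\ $N^*N=I$, so $N$ is unitary. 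For the converse I would simply verify that $\Phi\rho\ceq N^*\rho N$ is a channel—its single Krauss operator $N^*$ satisfies $NN^*=I$ by unitarity—and that for $M=cN\in\cA N$ and any state $\rho$ one has $\Phi(M\rho M^*)=|c|^2N^*N\rho N^*N=|c|^2\rho=\tr{\rho M^*M}\rho$, so $\Phi$ decodes $\cA H$.

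Item~\eqref{c2:C-cc} should then follow by the same dimension count applied to $\cA N=\cA B(\cA H)$, for which $\dim\cA N=(\dim\cA H)^2$. If $\cA C$ were a $\cA B(\cA H)$-cc, Theorem~\ref{th:N-cc-Conditions}.\eqref{it:OP3} would give $(\dim\cA H)^2\cdot\dim\cA C\le\dim\cA H$; since $\dim\cA C\ge1$ this yields $(\dim\cA H)^2\le\dim\cA H$, hence $\dim\cA H=1$, and then $\dim\cA C\le1$ forces $\cA C=\cA H$. Conversely, when $\dim\cA H=1$ we have $\cA C=\cA H$ and $\cA B(\cA H)=\Span\{I\}$ with $I$ unitary, so item~\eqref{c1:C-cc} applies and $\cA H$ is decoded by $\Phi\rho=I^*\rho I=\rho$, that is $\Phi=I$.

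The step I expect to require the most care is the applicability of the dimension inequality rather than any computation: it presupposes that $\cA N$ admits a genuine (positive-definite) $\cA C$-decoding inner product, which by the proof of Theorem~\ref{th:KL-G} hinges on the standing non-degeneracy assumption \eqref{eq:nonzero-cond}. For $\cA N=\cA B(\cA H)$ that assumption already excludes $\cA C\subsetneq\cA H$, since then $P_{\cA C^\perp}\neq0$ would be a negligible-noise operator; so one must make explicit that, within the framework of Remark~\ref{rmk:nonzero-cond}, the inequality does apply and collapses to $\dim\cA H=1$ as above.
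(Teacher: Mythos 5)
Your proposal is correct and follows essentially the same route as the paper: both items are reduced to the dimension bound $\dim\cA N\cdot\dim\cA C\le\dim\cA H$ from Theorem~\ref{th:N-cc-Conditions}.\eqref{it:OP3} together with the partial-isometry characterisation, and the converse of~\eqref{c1:C-cc} is a direct verification of $\Phi\rho=N^*\rho N$. Your closing remark that the inequality is only available under the standing non-degeneracy assumption \eqref{eq:nonzero-cond} (which for $\cA N=\cA B(\cA H)$ already forces $\cA C=\cA H$) is a legitimate point of care that the paper's proof passes over silently, but it does not change the argument.
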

\begin{proof} \eqref{c1:C-cc}: If $\cA H$ is an $\cA N$-cc, then by Theorem~\ref{th:N-cc-Conditions}.\eqref{it:OP3} it follows that $\dim\cA N=1$; that is , $\cA N=\Span\{N\}$, for some non-zero operator $N\in\cA B(\cA H)$, which by Theorems~\ref{th:KL-G} and \ref{teo:prop-ipN}.\eqref{c5-ipp} one can assume that $N$ is partial isometry onto $\cA H$. For the converse, consider $\Phi\rho=N^*\rho N$ which decodes $\cA H$.

\eqref{c2:C-cc}: If $\dim \cA H=1$, then $\dim\cA B(\cA H)=1$, which implies $\cA B(\cA H)=\Span\{I\}$. Hence, by item~\eqref{c1:C-cc} the pair $(\cA H,I)$ is a $\cA B(\cA H)$-cc. Now, for $\dim \cA H>1$, if $\cA C$ is a $\cA B(\cA H)$-cc, then Theorem~\ref{th:N-cc-Conditions}.\eqref{it:OP3} yields $1\leq \dim \cA C\leq \dim \cA H/\dim \cA B(\cA H)<1$, a contradiction.
\end{proof}
 
Taking into account Corollary~\ref{cor:C-cc-existence}.\eqref{c2:C-cc}, it follows that $\C$ is a $\cA B(\C)$-cc, with decoder $\Phi=I$.
 
\section{Code-decoding partial isometry error bases}\label{sec:CDPINB}

We have seen that for a given code $\cA C\leq \cA H$ and a noise subspace $\cA N\leq \cA B(\cA H)$, a necessary and sufficient condition so that  $\cA C$ is an $\cA N$-cc is the existence of a basis $\{N_j\}_{j=1}^{n-1}$ of $\cA N$ satisfying, for $j=1,\dots,n-1$,
\begin{align}\label{eq:SNC-Ncc}
\mbox{$N_jP_{\cA C}$ is a partial isometry onto $\cA C$\,,\quad with \quad }N_j{\cA C}\perp N_r{\cA C}\,,\quad j\neq r\,.
\end{align}

\begin{definition}\label{def:C-dnb-N}
A basis of the noise subspace $\cA N$ satisfying \eqref{eq:SNC-Ncc} is called \emph{$\cA C$-decoding partial isometry error basis} (or simply $\cA C$-decoding noise basis).
\end{definition}

The main advantage of considering a $\cA C$-decoding noise basis is that it allows us to give  a seemingly nice representation of the decoding channel  (see Theorem~\ref{th:N-cc-Conditions}.\eqref{it:OP1}) given by $\Phi(\rho)=\sum_{j=1}^nK_j\rho K_j^*$ with Krauss operators 
\begin{gather*}
K_j=P_{\cA C}N_j^*\,,\quad j=1,\dots,n-1\,,\quad\mbox{while}\quad K_n=I-\sum_{r=1}^{n-1}K_r^*K_r\,.
\end{gather*}

The above reasoning motivates us to describe the elements of code-decoding noise bases, which involves the characterization of partial isometry operators in $\cA B(\cA H)$.

\subsection{Partial isometries and the von Neumann-Wold decomposition}\label{eq:vNW-Deco}

Recall that $V\in\cA B(\cA H)$ is a partial isometry if $V^*V$ is the projection onto $\supp V$. Besides, $V$ is called:
\begin{itemize}
\item \emph{partial unitary} if $\supp V= \ran V$, and 
\item \emph{unitary} if $\supp V=\cA H$, or equivalently  if $V$ is invertible and $V^*=V^{-1}$. 
\end{itemize}
Thus, any unitary operator is partial unitary, while any partial unitary $V$ is unitary on $\supp V$.

\begin{remark}\label{rm:isometry-consequences}
\begin{enumerate}[(i)]
\item It is simple to prove the equivalence of the following statements:
\begin{enumerate}
\item $V$ is partial isometry .
\item $\no{Vf}=\no f$, for all $f\in\supp V$.
\item $\ip{Vf}{Vg}=\ip{f}{g}$, for all $f,g\in\supp V$.
\end{enumerate}
\item $V$ and $V^*$ are partial isometries simultaneously.
\item\label{it:icqs} If $V$ is a partial isometry then $VP_{\cA F}$ is a partial isometry onto $\cA F\leq \supp V$, since 
\begin{gather*}
(VP_{\cA F})^*VP_{\cA F}=P_{\cA F}V^*VP_{\cA F}=P_{\cA F}P_{\supp V}P_{\cA F}=P_{\cA F}\,,
\end{gather*}
where $P_{\supp V}$ and $ P_{\cA F}$ are the orthogonal projections onto $\supp V$ and $\cA F$, respectively. 
\end{enumerate}
\end{remark}

\begin{proposition}\label{prop:composition-VL}
If $V$ and $L$ are partial isometries then so is $VL$.
\end{proposition}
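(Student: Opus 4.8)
The plan is to reduce everything to the single assertion that $(VL)^*(VL)$ is an orthogonal projection, since by Remark~\ref{rm:partial-isometry} this is precisely what it means for $VL$ to be a partial isometry (with support $\supp(VL)=\ran(VL)^*$). Starting from the hypothesis that $V$ is a partial isometry, i.e. $V^*V=P_{\supp V}$, I would write
\[
(VL)^*(VL)=L^*V^*VL=L^*P_{\supp V}L,
\]
so the whole problem collapses to deciding when the self-adjoint operator $L^*P_{\supp V}L$ is idempotent.

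The step that goes through cleanly --- and the one I would present first, as it is the situation that actually occurs for the operators built in this section --- is the case $\ran L\le\supp V$. Then $P_{\supp V}L=L$, hence $(VL)^*(VL)=L^*L=P_{\supp L}$ is a projection, and $VL$ is a partial isometry with $\supp(VL)=\supp L$. Equivalently, via Remark~\ref{rm:isometry-consequences}, for $f\in\supp L$ one has $Lf\in\ran L\le\supp V$, so $\no{VLf}=\no{Lf}=\no f$ and $VL$ is isometric on $\supp L$.

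The step I expect to be the genuine obstacle is the general case, where $\ran L$ and $\supp V$ are unrelated: then $L^*P_{\supp V}L$ need not be idempotent. Indeed, using $LL^*=P_{\ran L}$ and $L^*LL^*=L^*$ one finds $(L^*P_{\supp V}L)^2=L^*P_{\supp V}P_{\ran L}P_{\supp V}L$, and comparing with $L^*P_{\supp V}L$ shows idempotency is equivalent to the commutation $P_{\supp V}P_{\ran L}=P_{\ran L}P_{\supp V}$ (the Halmos--McLaughlin criterion). Without some such hypothesis the conclusion can fail: in $\C^2$ the product of the orthogonal projections onto $\Span\{e_1\}$ and onto $\Span\{e_1+e_2\}$ --- both partial isometries --- is not a partial isometry. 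Accordingly, the real content of the proof is to justify the required containment $\ran L\le\supp V$ (or the commutation of the two projections) from the standing assumptions on $V$ and $L$; once that is in place, the one-line computation of the preceding paragraph finishes the argument.
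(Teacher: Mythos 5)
You are right to be suspicious: the proposition as stated is false, and your counterexample is valid. In $\C^2$ the orthogonal projections $P=\Dn{e_1}{e_1}$ and $Q=\Dn{u}{u}$ with $u=(e_1+e_2)/\sqrt2$ are both partial isometries, yet $(PQ)^*(PQ)=\tfrac12\Dn{u}{u}$ is not a projection, so $PQ$ is not a partial isometry. The paper's one-line proof breaks exactly at the point you flag as the genuine obstacle: it asserts $\supp (VL)=\{f\in\supp L\,:\,Lf\in\supp V\}$, but only the inclusion of the right-hand set into $\supp(VL)$ holds in general (in the example above the right-hand set is $\{0\}$ while $\supp(PQ)=\Span\{u\}$). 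Hence the computation $\no{VLf}=\no{Lf}=\no f$ only proves that $VL$ is isometric on a possibly proper subspace of $\supp(VL)$, which is not enough to conclude via Remark~\ref{rm:isometry-consequences}.

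Your repair is the correct one. From $(VL)^*(VL)=L^*P_{\supp V}L$ and $LL^*=P_{\ran L}$ one gets the classical criterion that $VL$ is a partial isometry if and only if $P_{\supp V}$ and $P_{\ran L}$ commute, and in particular whenever $\ran L\leq \supp V$, in which case $(VL)^*(VL)=L^*L=P_{\supp L}$ and $\supp(VL)=\supp L$. The proposition should be weakened to this form. Note that its later invocations must then be revisited: for instance, in Theorem~\ref{th:codes-Vjs} the assertion that the powers $V^j$ of a unilateral-shift are partial isometries no longer follows from the proposition alone and must instead be extracted from the wandering-space structure, or replaced by a direct verification that $V^jP_{\cA C_t}$ is isometric on $\cA C_t$, which is all that argument actually requires.
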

\begin{proof}
For $f\in\supp VL=\{f\in\supp L\,:\, Lf\in \supp V\}$, one has $\no{VLf}=\no{Lf}=\no f$.
\end{proof}

Given a partial isometry operator $V\in\cA B(\cA H)$, we say that the pair $(\cA L,m)$ is a \emph{wandering space} for $V$  if $\cA L\leq \cA H$,
\begin{gather}\label{eq:orthogonal-Lcond}
m=\max\{n\geq0\,:\, V^n\cA L\neq\{0\}\}\qquad\mbox{and}\qquad V^n\cA L\perp \cA L\,,\quad \mbox{for all }n\in\N\,,
\end{gather}
where $m$ is attained since \eqref{eq:orthogonal-Lcond} implies $V^n\cA L=\{0\}$, for $n\geq \dim \cA H$. Besides, the right-hand side of \eqref{eq:orthogonal-Lcond} implies 
\begin{gather}\label{eq:orth-Ws}
V^n\cA L\perp V^j\cA L\,,\quad \mbox{for $j,n\geq0$, with $n\neq j$}\,.
\end{gather}

\begin{definition}
A partial isometry operator $V\in\cA B(\cA H)$ is called \emph{unilateral-shift} if there exists a wandering space $(\cA L,m)$ for $V$, such that 
\begin{gather}\label{eq:vs-OnH}
\cA L\oplus V\cA L\oplus\dots\oplus V^m\cA L=\cA H\,.
\end{gather}
In such a case $\cA L$ is uniquely determined by $\cA L=\cA H\ominus\ran V$. Besides, \eqref{eq:vs-OnH} implies
\begin{gather}\label{eq:H-zero}
V^j\cA H=\begin{cases}
V^j\cA L\oplus\dots\oplus V^m\cA L\,,&j=0,\dots,m\\
\{0\}\,,&j>m
\end{cases}\,.
\end{gather}
\end{definition}

\begin{remark}\label{rm:trivial-shift} For a unilateral-shift $V$ with wandering space $(\cA L,m)$, it is a simple matter to verify the equivalences of following statements.
\begin{enumerate}[(i)]
\begin{multicols}{3}
\item $V=0$.
\item $\cA L=\cA H$.
\item $m=0$.
\end{multicols}
\end{enumerate}
\end{remark}

Recall that $\cA K\leq \cA H$ is invariant for $T\in\cA B(\cA H)$ (or $T$-invariant) if $T\cA K\leq \cA K$. Moreover, $\cA K$ is said to \emph{reduce} $T$ if $\cA K$ and $\cA K^\perp$ are $T$-invariant. For example, the trivial subspaces $\{0\}$ and $\cA H$ reduce every operator in $\cA B(\cA H)$.

\begin{theorem}\label{th:characterization-PIO}
For a partial isometry operator $V\in\cA B(\cA H)$ there exists a unique reducing subspace $\cA K$ for $V$, such that $V$ on $\cA K$ is unilateral-shift, while $V$ on $\cA K^\perp$ is unitary. Namely, 
\begin{gather}\label{eq:WS-decomposition}
\cA L=\cA H\ominus\ran V\qquad\mbox{and}\qquad m=\max\{n\geq0\,:\, V^n\cA L\neq\{0\}\} 
\end{gather}
yield a wandering space $(\cA L,m)$ for $V$ on  
\begin{gather*}
\cA K=\bigoplus_{j=0}^mV^j\cA L\qquad \mbox{and}\qquad \cA K^\perp=\ran V^{m+1}\,.
\end{gather*}
The space $\cA K$ may be absent or the whole space.
\end{theorem}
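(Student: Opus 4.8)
The plan is to adapt the classical von Neumann--Wold argument for isometries to the partial-isometry setting, with all the bookkeeping handled by a single telescoping identity. Throughout I write $\cA L=\cA H\ominus\ran V=\ker V^*$ and $Q_n=V^n(V^n)^*$. Since every power $V^n$ is again a partial isometry by Proposition~\ref{prop:composition-VL}, each $Q_n$ is the orthogonal projection onto $\ran V^n$; likewise $V^jP_{\cA L}$ is a partial isometry (a product of $V^j$ with the projection $P_{\cA L}$), so $V^jP_{\cA L}(V^jP_{\cA L})^*=V^jP_{\cA L}(V^j)^*$ is the orthogonal projection onto $V^j\cA L$. As a first step I would check that $(\cA L,m)$ with $m$ as in \eqref{eq:WS-decomposition} is a wandering space: for $n\ge1$ one has $V^n\cA L\subseteq\ran V=\cA L^\perp$, which is exactly \eqref{eq:orthogonal-Lcond}.

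Next I would establish the purely algebraic telescoping identity
\begin{gather*}
\sum_{j=0}^{n-1}V^jP_{\cA L}(V^j)^*=\sum_{j=0}^{n-1}\lrp{V^j(V^j)^*-V^{j+1}(V^{j+1})^*}=I-Q_n\,,
\end{gather*}
coming from $P_{\cA L}=I-VV^*$ and $V^jV=V^{j+1}$. Read through the projections identified above, its left-hand side being a projection forces the $V^j\cA L$ to be mutually orthogonal (re-deriving \eqref{eq:orth-Ws} and securing $m<\infty$), and yields the orthogonal decomposition $\cA H=\bigoplus_{j=0}^{n-1}V^j\cA L\oplus\ran V^n$ for every $n$. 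Taking $n=m+1$ and using $V^j\cA L=\{0\}$ for $j>m$ gives at once $\cA K=\bigoplus_{j=0}^{m}V^j\cA L$ and $\cA K^\perp=\ran V^{m+1}$, while $n=m+2$ forces $\ran V^{m+2}=\ran V^{m+1}$, i.e. the ranges stabilize. Consequently $V\cA K=\bigoplus_{j=1}^{m}V^j\cA L\subseteq\cA K$ and $V\cA K^\perp=\ran V^{m+2}=\cA K^\perp$, so $\cA K$ reduces $V$; and since $\cA L=\cA K\ominus V\cA K$, the restriction $V|_{\cA K}$ is a unilateral-shift with wandering space $(\cA L,m)$.

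The main obstacle is the unitarity of $V|_{\cA K^\perp}$. Because $V\cA K^\perp=\cA K^\perp$ in finite dimension, $T\coloneqq V|_{\cA K^\perp}$ is invertible, and as a restriction of a partial isometry it is a contraction; the real content is upgrading ``invertible contraction'' to ``unitary'', which is \emph{false} for a general partial isometry and genuinely uses Proposition~\ref{prop:composition-VL}. I would resolve it by excluding eigenvalues in the punctured unit disc: if $Vu=\lambda u$ with $0<\abs\lambda<1$, then $\no{V^nu}=\abs\lambda^n\no u\to0$, whereas $\no{V^nu}=\no{P_{\supp V^n}u}$ is eventually constant once the decreasing supports $\supp V^n$ stabilize, a contradiction. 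Hence every eigenvalue of the invertible operator $T$ lies on the unit circle, so $\abs{\det T}=1$; since $T$ is a contraction, the product of its singular values equals $1$ with each at most $1$, forcing them all to equal $1$, so $T$ is unitary.

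Finally, for uniqueness I would argue that the unitary block is forced to be $\bigcap_n\ran V^n$. Given any reducing $\cA K'$ with $V|_{\cA K'}$ a unilateral-shift and $V|_{\cA K'^\perp}$ unitary, the unitary block gives $\cA K'^\perp\subseteq\ran V^n$ for all $n$, while the shift block gives $\bigcap_n\ran(V^n|_{\cA K'})=\{0\}$; since $\cA K'$ reduces $V$, intersecting the nested ranges yields $\cA K'^\perp=\bigcap_n\ran V^n=\ran V^{m+1}=\cA K^\perp$, whence $\cA K'=\cA K$. The degenerate cases in the statement correspond to $\cA L=\{0\}$ (then $\cA K$ is absent and $V$ is unitary) and $\ran V^{m+1}=\{0\}$ (then $\cA K=\cA H$ and $V$ is a pure shift).
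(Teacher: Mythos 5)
Your proof is correct (granting, as the paper does, Proposition~\ref{prop:composition-VL}, so that all powers $V^n$ and the products $V^jP_{\cA L}$ are again partial isometries), but it reaches the decomposition by a genuinely different route. The paper works directly with subspaces: it writes $V^j\cA L=V^j\cA H\ominus V^{j+1}\cA H$, telescopes these to get $\cA K=\cA H\ominus V^{m+1}\cA H$, and deduces $V(\cA K^\perp)=\cA K^\perp$ from $V^{m+1}\cA H=V^{m+2}\cA H$; your operator-level identity $\sum_{j<n}V^jP_{\cA L}(V^j)^*=I-V^n(V^n)^*$ packages the same telescoping, with the bonus that ``a sum of projections is a projection only if the summands are mutually orthogonal'' delivers the orthogonality of the $V^j\cA L$ and the finiteness of $m$ in one stroke. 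The most substantive difference is the unitarity of $V|_{\cA K^\perp}$: the paper passes from $V(\cA K^\perp)=\cA K^\perp$ to ``unitary on $\cA K^\perp$'' without comment, whereas you correctly observe that an invertible contraction need not be unitary and close the gap with the eigenvalue-exclusion and determinant argument --- a real improvement, since this is exactly the point where partial isometries differ from isometries. Your uniqueness argument, identifying $\cA K^\perp$ intrinsically as $\bigcap_n\ran V^n$, is also more robust than the paper's, which assumes a priori that the competing shift part has the form $\bigoplus_{j}V^j\cA L'$ and computes $\cA L'=\cA H\ominus V\cA H$. One caveat to record: both proofs rest entirely on Proposition~\ref{prop:composition-VL}. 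Your steps ``$Q_n$ is the orthogonal projection onto $\ran V^n$'', ``$V^jP_{\cA L}(V^j)^*$ is the projection onto $V^j\cA L$'', and $\no{V^nu}=\no{P_{\supp V^n}u}$ all fail for a partial isometry whose powers are not partial isometries, so your argument is only as strong as that proposition --- a dependence you share with, and flag more explicitly than, the paper itself.
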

\begin{proof}
The assertion is simple when $V$ is unitary, whence $\cA K=\{0\}$. If $V$ is non-unitary then $(\cA L,m)$,  with $\cA L=\cA H\ominus V\cA H\neq\{0\}$, is wandering for $V$. Indeed, $\cA L\perp \ran V$ and $\cA L,V\cA H\leq\cA H$ imply
\begin{gather*}
V^n\cA L\leq V^n\cA H\leq \ran V\quad \mbox{and}\quad V^n\cA L\perp \cA L\,,\qquad n\in\N\,.
\end{gather*}
 Let $\cA K=\bigoplus_{j=0}^mV^j\cA L$ and since $V^j\cA L=V^j\cA H\ominus V^{j+1}\cA H$, for $j=0,\dots, m$,
 \begin{align*}
 \cA K=\cA L\oplus\dots\oplus  V^m\cA L=(\cA H\ominus V\cA H)\oplus\dots\oplus(V^m\cA H\ominus V^{m+1}\cA H)=\cA H\ominus  V^{m+1}\cA H\,,
 \end{align*}
 whence $\cA K^\perp=V^{m+1}\cA H$. Now, $\{0\}=V^{m+1}\cA L=V^{m+1}\cA H\ominus V^{m+2}\cA H$, viz. $V^{m+1}\cA H=V^{m+2}\cA H$, which implies  $V(\cA K^\perp)=\cA K^\perp$, i.e., $\cA K^\perp$ is $V$-invariant and $V$ is unitary on $\cA K^\perp$. On the other hand, $V\cA K= \bigoplus_{j=1}^mV^j\cA L\leq \cA K$, i.e., $\cA K$ is $V$-invariant and $V$ is clearly a unilateral-shift on $\cA K$.
 
 To prove the uniqueness, if $\cA F=\bigoplus_{j=0}^{m'}V^j\cA L'$ reduces $V$, where $(\cA L',m')$ is wandering for $V$, with $V(\cA F^\perp)=\cA F^\perp$, then 
 \begin{align*}
 \cA L&=\cA H\ominus V\cA H=(\cA F\oplus \cA F^\perp)\ominus V(\cA F\oplus \cA F^\perp)=(\cA F\oplus \cA F^\perp)\ominus (V\cA F\oplus \cA F^\perp)\\
&=\cA F\ominus V\cA F=\lrp{\bigoplus_{j=0}^{m'}V^j\cA L'}\ominus\lrp{\bigoplus_{j=1}^{m'}V^j\cA L'}=\cA L'\,.
 \end{align*}
 Hence, $\cA F=\bigoplus_{j=0}^{m'}V^j\cA L'=\bigoplus_{j=0}^{m'}V^j\cA L=\cA K$.
\end{proof}

A partial isometry operator $V\in\cA B(\cA H)$ is said to be \emph{completely non-unitary}, whenever there is no non-zero reducing subspace $\cA K$ for $V$, in which $V$ is unitary.

\begin{corollary}\label{cor:shifts-cun}
A partial isometry operator $V\in\cA B(\cA H)$ is a unilateral-shift if and only if it is completely non-unitary.
\end{corollary}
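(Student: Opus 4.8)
The plan is to read off both implications directly from the von Neumann--Wold decomposition in Theorem~\ref{th:characterization-PIO}, which attaches to any partial isometry $V$ a unique reducing subspace $\cA K$ with $V$ a unilateral-shift on $\cA K$ and $V$ unitary on $\cA K^\perp=\ran V^{m+1}$. The entire argument reduces to a single identification: $V$ is a unilateral-shift on $\cA H$ precisely when $\cA K=\cA H$, equivalently when the unitary summand $\cA K^\perp$ is trivial. Once that is in place, complete non-unitarity is exactly the statement that $V$ admits no nonzero unitary reducing subspace, which I will match against the triviality of $\cA K^\perp$.

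For the implication ``unilateral-shift $\Rightarrow$ completely non-unitary'', I would first invoke the uniqueness in Theorem~\ref{th:characterization-PIO}: if $V$ is a unilateral-shift on all of $\cA H$, then the canonical decomposition must be $\cA K=\cA H$ and $\cA K^\perp=\ran V^{m+1}=\{0\}$, i.e.\ $V^{m+1}=0$. The subtle point is that complete non-unitarity forbids \emph{every} nonzero unitary reducing subspace, not merely the canonical $\cA K^\perp$, so I would argue by contradiction: if $\cA M\neq\{0\}$ reduces $V$ and $V\rE{\cA M}$ is unitary, then $V\rE{\cA M}$ is in particular surjective, so $V^{m+1}\cA M=\cA M\neq\{0\}$, contradicting $V^{m+1}=0$. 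Hence no such $\cA M$ exists and $V$ is completely non-unitary.

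For the converse ``completely non-unitary $\Rightarrow$ unilateral-shift'', Theorem~\ref{th:characterization-PIO} already furnishes the reducing subspace $\cA K^\perp$ on which $V$ acts unitarily; since $V$ is assumed to have no nonzero unitary reducing subspace, this forces $\cA K^\perp=\{0\}$, whence $\cA K=\cA H$ and $V$ is a unilateral-shift. The only genuinely non-formal step is the forward direction's exclusion of all nonzero unitary reducing subspaces, and that is exactly where $V^{m+1}=0$ combined with the surjectivity of a unitary restriction does the work; the remainder is bookkeeping with the decomposition of Theorem~\ref{th:characterization-PIO}.
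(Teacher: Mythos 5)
Your proof is correct and takes essentially the same route as the paper's: both directions read off the von Neumann--Wold decomposition of Theorem~\ref{th:characterization-PIO}, with the forward implication coming from $V^{m+1}\cA H=\{0\}$ combined with $V\cA M=\cA M$ for any reducing subspace $\cA M$ on which $V$ is unitary. Your detour through the uniqueness clause to identify $\cA K=\cA H$ is harmless but not needed, since \eqref{eq:H-zero} already yields $V^{m+1}=0$ directly from the definition of a unilateral-shift.
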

\begin{proof}
If $V$ is unilateral-shift, then one has a wandering space $(\cA L,m)$ for $V$. If $\cA K\leq\cA H$ reduces $V$ in which $V$ is unitary, then $V\cA K=\cA K$ and by \eqref{eq:H-zero},
\begin{gather*}
\cA K=V^{m+1}\cA K\leq V^{m+1}\cA H=\{0\}\,,
\end{gather*}
whence $V$ is completely non-unitary. The converse is straightforward from Theorem~\ref{th:characterization-PIO}.
\end{proof}
Theorem~\ref{th:characterization-PIO} and Corollary~\ref{cor:shifts-cun} give a decomposition of any partial isometry operator 
\begin{gather}
V=V_{\cA K}\oplus V_{\cA K^\perp}\,,
\end{gather} 
which is uniquely determined by its unitary and completely non-unitary parts, $V_{\cA K^\perp}$ and $V_{\cA K}$, respectively.  
This decomposition is well-known as the \emph{von Neumann-Wold} decomposition for isometries on infinite-dimensional Hilbert spaces \cite[Chap. I, Th. 1.1]{MR2760647}.

\subsection{$\cA C$-decoding noises bases by powers of partial isometries}\label{ss:Code-bypowers-isom}
For convenience, we shall assume non-zero partial isometries.

\begin{remark} Any partial isometry $V$ is a $\supp V$-decoding noise basis of $\Span\{V\}$. Thus, $(\supp V, \Phi)$ is a $\Span\{V\}$-cc, with
\begin{gather*}
\Phi(\rho)=P_{\ker V}\rho P_{\ker V}+V^*\rho V\,,\qquad \rho\in\cA B(\cA H)\,.
\end{gather*}
\end{remark}

By virtue of Remark~\ref{rm:trivial-shift}, Theorem~\ref{th:characterization-PIO} and Corollary~\ref{cor:shifts-cun}, for a partial isometry $V$ the following are equivalent:
\begin{enumerate}[(i)]
\item The completely non-unitary part of $V$ is zero.
\item $V$ is partial unitary.
\item The wandering space $(\cA L,m)$ of $V$ (see \eqref{eq:WS-decomposition}) satisfies $\cA L=\ker V$ (or equivalent $m=0$), which is called a \emph{simple wandering} space.
\end{enumerate}

\begin{definition}
We call a code $\cA C$ \emph{the largest $\cA N$-correcting code} if every $\cA N$-cc is a subspace of $\cA C$.
\end{definition}

\begin{theorem}\label{th:codes-Vjs}
Let $V$ be a unilateral-shift with wandering space $(\cA L,m)$ and $0\leq t\leq m$. Then, $\cA C_t=\cA L\ominus\ker V^{t}\neq\{0\}$ and 
\begin{gather*}
\{V^j\}_{j=0}^t\quad \mbox{is a $\cA C_t$-decoding noise basis of}\quad \cA N_t=\Span\{V^j\}_{j=0}^t\,.
\end{gather*}
Besides, if $\cA F_t$ is an $\cA N_t$-cc then 
\begin{gather}\label{eq:codes-Vjs}
\cA F_t\leq \lrp{\bigoplus_{j=0}^{m-t}V^j\cA L}\ominus \ker V^{t}\,.
\end{gather}
Hence, for $t=m$, it follows that  $\cA L\ominus\ker V^{m}$ is the largest $\cA N_m$-cc.
\end{theorem}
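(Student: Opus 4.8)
The plan is to extract the grading of $\ker V^{t}$ from the wandering decomposition $\cA H=\bigoplus_{j=0}^{m}V^{j}\cA L$, feed the resulting partial isometries into Theorem~\ref{th:N-cc-Conditions}, and obtain the upper bound \eqref{eq:codes-Vjs} from Corollary~\ref{cor:dim-prop}. First I would record that, by Proposition~\ref{prop:composition-VL}, every power $V^{t}$ is again a partial isometry, and that $V^{t}$ carries the layer $V^{j}\cA L$ into the layer $V^{j+t}\cA L$, which is $\{0\}$ once $j+t>m$. Writing $f=\sum_{j=0}^{m}f_{j}$ with $f_{j}\in V^{j}\cA L$ and using the orthogonality \eqref{eq:orth-Ws}, the components $V^{t}f_{j}$ live in distinct layers, so $V^{t}f=0$ if and only if $V^{t}f_{j}=0$ for every $j$. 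Hence $\ker V^{t}=\bigoplus_{j=0}^{m}\lrp{\ker V^{t}\cap V^{j}\cA L}$, so $P_{\ker V^{t}}$ preserves each layer; in particular $\cA L$ reduces $P_{\ker V^{t}}$ and therefore $\cA C_{t}=\cA L\ominus\ker V^{t}=\cA L\cap\supp V^{t}$. Since $V^{m}\cA L\neq\{0\}$ forces $V^{t}\cA L\neq\{0\}$ (as $V^{m}=V^{m-t}V^{t}$), some $\ell\in\cA L$ has $V^{t}\ell\neq0$; this gives $\cA C_{t}\neq\{0\}$, and as $V^{j}\ell\neq0$ for every $j\le t$, together with \eqref{eq:orth-Ws} it shows that $\{V^{j}\}_{j=0}^{t}$ is linearly independent, hence a basis of $\cA N_{t}$.

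Next I would verify the two requirements of \eqref{eq:SNC-Ncc}. Because $\ker V^{j}\le\ker V^{t}$ for $j\le t$, one has $\supp V^{t}\le\supp V^{j}$, so $\cA C_{t}\le\supp V^{t}\le\supp V^{j}$, and Remark~\ref{rm:isometry-consequences}.\eqref{it:icqs} makes each $V^{j}P_{\cA C_{t}}$ a partial isometry onto $\cA C_{t}$. For the orthogonality of ranges, $\ran V^{j}P_{\cA C_{t}}=V^{j}\cA C_{t}\le V^{j}\cA L$, and \eqref{eq:orth-Ws} gives $V^{j}\cA L\perp V^{r}\cA L$ for $j\neq r$. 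Thus $\{V^{j}\}_{j=0}^{t}$ is a $\cA C_{t}$-decoding noise basis of $\cA N_{t}$ in the sense of Definition~\ref{def:C-dnb-N}, and Theorem~\ref{th:N-cc-Conditions} yields that $(\cA C_{t},\Phi)$ is an $\cA N_{t}$-cc with the displayed Krauss operators.

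The core of the theorem is the bound \eqref{eq:codes-Vjs}, and the remark that unlocks it is that its right-hand side is exactly $\supp V^{t}$. Indeed, the grading gives $\ker V^{t}\ge\bigoplus_{j=m-t+1}^{m}V^{j}\cA L$, whence $\supp V^{t}=(\ker V^{t})^{\perp}\le\bigoplus_{j=0}^{m-t}V^{j}\cA L$; since $\supp V^{t}$ already sits inside $\bigoplus_{j=0}^{m-t}V^{j}\cA L$, removing $\ker V^{t}$ from the latter leaves precisely $\supp V^{t}$. Now if $\cA F_{t}$ is any $\cA N_{t}$-cc, then $\cA N_{t}$ carries the $\cA F_{t}$-decoding inner product, so applying Corollary~\ref{cor:dim-prop}.\eqref{c1:CcciP} to the non-zero operator $V^{t}\in\cA N_{t}$ gives $\cA F_{t}\le\supp V^{t}$, which is \eqref{eq:codes-Vjs}. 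Finally, for $t=m$ the right-hand side collapses to $\cA L\ominus\ker V^{m}=\cA C_{m}$; combined with the first part, which exhibits $\cA C_{m}$ itself as an $\cA N_{m}$-cc, this shows that every $\cA N_{m}$-cc is a subspace of $\cA C_{m}$, i.e.\ $\cA C_{m}$ is the largest $\cA N_{m}$-cc.

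I expect the main obstacle to be the grading identity $\ker V^{t}=\bigoplus_{j}\lrp{\ker V^{t}\cap V^{j}\cA L}$: it is what reconciles the two descriptions of $\cA C_{t}$, namely $\cA L\ominus\ker V^{t}$ and $\cA L\cap\supp V^{t}$, and it is what identifies the right-hand side of \eqref{eq:codes-Vjs} with $\supp V^{t}$. Everything else is either a direct application of Remark~\ref{rm:isometry-consequences}.\eqref{it:icqs} and \eqref{eq:orth-Ws} (for the noise basis) or of Corollary~\ref{cor:dim-prop}.\eqref{c1:CcciP} (for the bound). The one point to handle with care is that $V^{t}$ must map distinct layers into distinct, mutually orthogonal layers, so that the kernel cannot mix layers; this is exactly where the unilateral-shift structure \eqref{eq:H-zero} enters.
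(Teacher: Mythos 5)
Your proof is correct and follows essentially the same route as the paper: non-triviality of $\cA C_t$ from $V^m\cA L\neq\{0\}$, the two conditions of \eqref{eq:SNC-Ncc} via Proposition~\ref{prop:composition-VL}, Remark~\ref{rm:isometry-consequences}.\eqref{it:icqs} and \eqref{eq:orth-Ws}, and the bound \eqref{eq:codes-Vjs} from Theorem~\ref{th:KL-G} together with Corollary~\ref{cor:dim-prop}.\eqref{c1:CcciP} applied to $V^t$. The only cosmetic difference is that you identify the right-hand side of \eqref{eq:codes-Vjs} outright with $\supp V^t$ via the grading $\ker V^{t}=\bigoplus_{j}\lrp{\ker V^{t}\cap V^{j}\cA L}$, whereas the paper reaches the same containment by decomposing a vector of $\cA F_t$ into its layer components; the two arguments are equivalent.
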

\begin{proof}
If $\cA L\ominus\ker V^{t}=\{0\}$ then $\cA L\leq \ker V^{t}$ and $V^m\cA L=\{0\}$, which by \eqref{eq:orthogonal-Lcond} is not possible. Note by \eqref{eq:vs-OnH} that $\{V^j\}_{j=0}^t$ is a basis of $\cA N$ and $\supp V^{j+1}\leq \supp V^{j}$, since $V^{*j+1}\cA H\leq V^{*j}\cA H\leq \cA H$. Thus,  
\begin{gather*}
\cA C_t=\cA L\ominus\ker V^{t}\leq\cA H\ominus\ker V^{t}\,,\quad\mbox{i.e,}\quad \cA C_t\leq \supp V^t\leq \supp V^{j}\,, 
\end{gather*}
while Proposition~\ref{prop:composition-VL} and Remark~\ref{rm:isometry-consequences}.\eqref{it:icqs} imply that $V^jP_{\cA C}$ is partial isometry onto $\cA C_t$, for $j=0,\dots,t$. Besides, $\cA C_t\leq\cA L$ and \eqref{eq:orth-Ws} yield $V^j\cA C_t\perp V^r\cA C_t$, with $j\neq r$. Hence, it follows by \eqref{eq:SNC-Ncc} that $\{V^j\}_{j=0}^t$ is a $\cA C_t$-decoding noise basis of  $\cA N_t$  and $\cA C_t$ is an $\cA N_t$-cc.

Now, if a code $\cA F_t$ is an $\cA N_t$-cc. Then, Theorem~\ref{th:KL-G} and 
Corollary~\ref{cor:dim-prop}.\eqref{c1:CcciP} imply 
\begin{gather}\label{eq:aux1-cond}
\cA F_t\leq \supp V^t=(\ker V^t)^\perp\,.
\end{gather}
Besides, for $f\in\cA F_t\leq \cA H$ it follows by \eqref{eq:vs-OnH} that $f=g_0+\dots+g_m$, with $g_j\in V^j\cA L$, for $j=0,\dots,m$, whence $V^t(g_0+\dots+g_{m-t})=V^mf$ and $g_0+\dots+g_{m-t}\in\supp V^t$. In this fashion, $f-(g_0+\dots+g_{m-t})\in\supp V^m$ and 
\begin{gather*}
V^t\lrp{f-(g_0+\dots+g_{m-t})}=V^t(g_{m-t+1}+\dots+g_m)=0\,,
\end{gather*}
whence $f=g_0+\dots+g_{m-t}\in  \oplus_{j=0}^{m-t}V^j\cA L$ and $\cA F_t\leq  \oplus_{j=0}^{m-t}V^j\cA L$. Therefore, by \eqref{eq:aux1-cond} one arrives at \eqref{eq:codes-Vjs}.\end{proof}

The following assertion is straightforward from Theorems~\ref{th:characterization-PIO} and~\ref{th:codes-Vjs}.
\begin{corollary}\label{cor:decomposition-pU-Code}
Let $V$ be a partial isometry with non-simple wandering space $(\cA L,m)$, $0\leq t\leq m$ and $\cA N_t=\Span\{V^j\}_{j=0}^t$. If $\cA F_t$ is an $\cA N_t$-cc then
\begin{gather}\label{eq:big-code-subspace}
\cA F_t \leq \ran{V^{m+1}}\oplus \lrp{\bigoplus_{j=0}^{m-t}V^j\cA L}\ominus \ker V^{t}\,.
\end{gather}
\end{corollary}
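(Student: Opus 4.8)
The plan is to read off the statement from the von Neumann--Wold decomposition of $V$ and then repeat, on the shift part, the level-by-level argument behind Theorem~\ref{th:codes-Vjs}, while keeping track of the unitary summand separately. First I would invoke Theorem~\ref{th:characterization-PIO} to write $\cA H=\cA K\oplus\cA K^\perp$, where $\cA K=\bigoplus_{j=0}^m V^j\cA L$ carries the unilateral-shift part of $V$ and $\cA K^\perp=\ran V^{m+1}$ carries its unitary part (non-simplicity of $(\cA L,m)$ guarantees $\cA K\neq\{0\}$). Since $V$ acts unitarily on $\cA K^\perp$, every power $V^t$ is injective there, so $\ker V^t\leq\cA K$; consequently $\supp V^t=(\ker V^t)^\perp$ contains the whole summand $\ran V^{m+1}$.

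Next, because $\cA F_t$ is an $\cA N_t$-cc and $V^t\in\cA N_t$, Theorem~\ref{th:KL-G} endows $\cA N_t$ with the $\cA F_t$-decoding inner product, and Corollary~\ref{cor:dim-prop}.\eqref{c1:CcciP} yields $\cA F_t\leq\supp V^t=(\ker V^t)^\perp$. This already secures the factor $\ominus\,\ker V^t$ in \eqref{eq:big-code-subspace}, so it only remains to place $\cA F_t$ inside $\ran V^{m+1}\oplus\bigoplus_{j=0}^{m-t}V^j\cA L$.

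For this inclusion I would argue vectorwise, exactly as in Theorem~\ref{th:codes-Vjs}. Fix $f\in\cA F_t$ and write $f=f_{\cA K}+f_{\cA K^\perp}$ with $f_{\cA K^\perp}\in\cA K^\perp=\ran V^{m+1}$ and, using \eqref{eq:vs-OnH} on the shift part, $f_{\cA K}=g_0+\dots+g_m$ with $g_j=V^j\ell_j\in V^j\cA L$. The top block $g_{m-t+1}+\dots+g_m$ lies in $\ker V^t$, since $V^t g_j=V^{t+j}\ell_j\in V^{t+j}\cA L=\{0\}$ whenever $t+j>m$. Combining $f\perp\ker V^t$ from the previous step with $f_{\cA K^\perp}\perp\cA K\supseteq\ker V^t$ and the mutual orthogonality \eqref{eq:orth-Ws} of the levels, the inner product $\ip{f}{g_{m-t+1}+\dots+g_m}$ collapses to $\sum_{j=m-t+1}^m\no{g_j}^2$, which must therefore vanish; hence $g_j=0$ for $j>m-t$. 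Thus $f_{\cA K}\in\bigoplus_{j=0}^{m-t}V^j\cA L$ and $f\in\ran V^{m+1}\oplus\bigoplus_{j=0}^{m-t}V^j\cA L$, so intersecting with $(\ker V^t)^\perp$ gives \eqref{eq:big-code-subspace}.

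The delicate point is precisely this last step: one cannot simply assert that $\bigoplus_{j=0}^{m-t}V^j\cA L$ already lies in $\supp V^t$, because $V$ need not be injective on $\cA L$, so $\ker V^t$ may meet the lower levels. The orthogonality computation is what drives the conclusion, and the only ingredient genuinely new relative to Theorem~\ref{th:codes-Vjs} is the bookkeeping that isolates the unitary summand $\ran V^{m+1}$ and verifies that it is separated from $\ker V^t$ (which sits entirely in $\cA K$) rather than needing to lie in $\supp V^t$ for a more naive reason.
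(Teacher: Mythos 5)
Your proof is correct and follows the same route the paper intends: it combines the von Neumann--Wold decomposition of Theorem~\ref{th:characterization-PIO} with the level-by-level argument of Theorem~\ref{th:codes-Vjs}, which is exactly what the paper's one-line justification points to. Your explicit orthogonality computation showing $g_j=0$ for $j>m-t$, together with the observation that $\ker V^t\leq\cA K$, cleanly fills in the details the paper leaves as ``straightforward.''
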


Under the conditions of Corollary~\ref{cor:decomposition-pU-Code}, any $\cA N_m$-cc belongs to $\ran V^{m+1}\oplus\cA L\ominus \ker V^{m}$. Besides, if the unitary part of $V$ is a projection (in particular, if $V$ is unilateral-shift), then 
the orthogonality condition of \eqref{eq:SNC-Ncc} and Theorem~\ref{th:codes-Vjs} imply that $\cA C=\cA L\ominus \ker V^{m}$ is the largest $\cA N_m$-cc.

\begin{remark}\label{rm:LC-closed-by-Osum}
If a noise $\cA N$ has the largest correcting code $\cA C$, then any orthogonal sum $\cA F_1\oplus\cA  F_2$ of two orthogonal $\cA N$-cc $\cA F_1$, $\cA F_2$, is as well an $\cA N$-cc. Indeed, the largest condition yields that $\cA F_1,\cA F_2\leq \cA C$, i.e., $F_1\oplus F_2\leq \cA C$ and  Remark~\ref{eq:subcode-prop} implies the assertion. 
\end{remark}

The existence of the largest $\cA N$-correcting code ensures that the set of $\cA N$-cc's is closed under orthogonal sum. Nevertheless, the largest $\cA N$-correcting code does not always exist.

\begin{example}[the non-existence of the largest correcting code]\label{ex:ne-lcc} For a Hilbert space with o.n.b. $\{e_j\}_{j=0}^{14}$, consider the partial unitary operator 
\begin{gather*}
V=\sum_{j\in\Z_4}\Dn{e_{j+1}}{e_j}+\sum_{k=4}^{11}\Dn{e_{k+3}}{e_k}\qquad (\Z_3=\{0,1,2,3\})\,.
\end{gather*}
From Theorem~\ref{th:characterization-PIO}, the wandering space of $V$ is $(\cA L,m)$ where $\cA L=\Span\{e_4,e_5,e_6\}$ and $m=3$. The unitary and completely non-unitary parts of $V$ are, respectively,
\begin{gather*}
V_{\cA K^\perp}=\sum_{j\in\Z_4}\Dn{e_{j+1}}{e_j}\,;\qquad V_{\cA K}=\sum_{k=4}^{11}\Dn{e_{k+3}}{e_k}\,,\quad\mbox{where}\quad \cA K=\Span\{e_4,\dots,e_{14}\}\,. 
\end{gather*}
For $\cA C=\Span\{e_0,e_4,e_5\}$ and $\cA F=\Span\{e_1,e_4,e_8\}$, one readily checks that $\{V^j\}_{j=0}^t$ is $\cA C$-decoding noise basis of $\cA N_t=\Span\{V^j\}_{j=0}^t$,  for $t=0,\dots,3$, while $\{V^j\}_{j=0}^2$ is $\cA F$-decoding noise basis of $\cA N_2=\Span\{V^j\}_{j=0}^2$, viz. both $\cA C$ and $\cA F$ are $\cA N_2$-cc. 

Thereby, one has by Remark~\ref{eq:subcode-prop} that $a=\Span\{e_0,e_5\}\leq\cA C$ and $b=\Span\{e_1,e_8\}\leq\cA F$ are  $\cA N_2$-cc. If $\cA N_2$ has the largest correcting code, then by Remark~\ref{rm:LC-closed-by-Osum}, $a\oplus b$ is an $\cA N_2$-cc, but  $(Va\oplus b)\cap(a\oplus b)=b$, a contradiction with the orthogonality condition of \eqref{eq:SNC-Ncc}.
\end{example}

\subsection{Cyclic unitary operators: the shift and clock operators}\label{s:shift} 
Let $U\in\cA B(\cA H)$ be a unitary operator such that there exists one-dimensional subspace $\cA L\leq \cA H$, for which
\begin{gather*}
\cA L\oplus U\cA L\oplus\dots\oplus U^m\cA L=\cA H\,,\qquad m=\dim \cA H\in\N\,.
\end{gather*}
Thus, for the group $(\Z_m,+)$ and a unit element $u\in\cA L$, it follows that $\Span\{U^ju\}_{j\in\Z_m}=\cA H$, in this case $U$ is said to be a \emph{cyclic} operator. Besides, one readily checks that $\{e_j\colonequals U^ju\}_{j\in\Z_m}$ is an o.n.b. of $\cA H$ and
\begin{gather}\label{eq:U-unitary-CO}
U=\sum_{j\in\Z_m}\Dn{e_{j+1}}{e_j}\,.
\end{gather}
The operator \eqref{eq:U-unitary-CO} is called \emph{canonical shift} (or simply shift) and gives rise to the \emph{circulant} operators
\begin{gather*}
\sum_{j\in\Z_m}c_jU^j\,,\qquad c_j\in\C\,.
\end{gather*}

Now, we regard the unitary \emph{discrete Fourier operator} 
\begin{gather*}
F\colonequals \frac{1}{\sqrt{m}}\sum_{r,j\in\Z}\zeta^{rj}\Dn{e_r}{e_j}\,,\qquad\mbox{where $\zeta\colonequals e^{2\pi i/m}$\quad satisfies\quad $\sum_{j\in\Z_m}\zeta^{jr}=m\delta_{r0}$}\,.
\end{gather*}
Besides, the \emph{entangled basis} $\{\varphi_{e_j}\}_{j\in\Z_m}$ is o.n.b. of $\cA H$, where
\begin{gather}\label{eq:e.o.n.b-vp}
\varphi_{e_j}\colonequals F^*e_j=\frac{1}{\sqrt{m}}\sum_{r\in\Z_m}\zeta^{-jr}e_r\,.
\end{gather}
Thereby, one simply computes that
\begin{gather}\label{eq:U-represet-vP}
U=\sum_{j\in\Z_m}\zeta^{j}\Dn{\varphi_{e_j}}{\varphi_{e_j}}\,,
\end{gather}
which is said to be in its \emph{entangled clock} expression.

The set $\{U^r\}_{r\in\Z_m}$ is a basis of the noise $\cA U=\Span \{U^r\}_{r\in\Z_m}\leq \cA B(\cA H)$, and in the following we shall find necessary and sufficient conditions of a given code $\cA C\leq \cA H$ for which $\{U^r\}_{r\in\Z_m}$ is a $\cA C$-decoding noise bases of $\cA U$. For instance,
\begin{gather}\label{eq:C-one-dimensional}
\cA C=\Span\{f\}\quad \mbox{for 
some non-zero $f\in\cA H$}\,,
\end{gather}
since by Theorem~\ref{th:N-cc-Conditions}.\eqref{it:OP3}, $\dim \cA C=1$.

\begin{theorem}\label{th:eio-codes} The set $\lrb{U^r}_{r\in\Z_m}$ is a $\cA C$-decoding noise basis of $\cA U$ if and only if 
\begin{gather}\label{eq:Ur-Condition-f}
\cA C=\Span\{f\}\,,\qquad\mbox{where}\quad f=\sum_{j\in\Z_m}e^{i\theta_j}\varphi_{e_j}\,,\quad \theta_j\in\R\,.
\end{gather}
\end{theorem}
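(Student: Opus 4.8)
The plan is to translate the geometric condition ``$\{U^r\}_{r\in\Z_m}$ is a $\cA C$-decoding noise basis of $\cA U$'' through Definition~\ref{def:C-dnb-N} and Theorem~\ref{th:N-cc-Conditions}. Since $\dim\cA C=1$ is forced by Theorem~\ref{th:N-cc-Conditions}.\eqref{it:OP3}, I write $\cA C=\Span\{f\}$ with $\no f=1$ from the outset. With $P_{\cA C}=\Dn ff$, each $U^rP_{\cA C}$ is automatically a partial isometry onto $\cA C$ (because $U^r$ is unitary, cf. Remark~\ref{rm:isometry-consequences}.\eqref{it:icqs}), so the entire content of \eqref{eq:SNC-Ncc} reduces to the orthogonality condition $\ran U^rP_{\cA C}\perp\ran U^jP_{\cA C}$ for $r\neq j$. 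Since $\ran U^rP_{\cA C}=\Span\{U^rf\}$, this is exactly
\begin{gather*}
\ip{U^rf}{U^jf}=0\,,\qquad r\neq j\,,\quad r,j\in\Z_m\,.
\end{gather*}

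**Diagonalizing via the entangled clock basis.**
The key step is to exploit the spectral form \eqref{eq:U-represet-vP}, $U=\sum_{j\in\Z_m}\zeta^j\Dn{\varphi_{e_j}}{\varphi_{e_j}}$, which diagonalizes every power: $U^r=\sum_{k\in\Z_m}\zeta^{rk}\Dn{\varphi_{e_k}}{\varphi_{e_k}}$. Expanding $f=\sum_{k\in\Z_m}c_k\varphi_{e_k}$ in the entangled basis, I compute
\begin{gather*}
\ip{U^rf}{U^jf}=\sum_{k\in\Z_m}\cc{\zeta^{rk}}\zeta^{jk}\abs{c_k}^2=\sum_{k\in\Z_m}\zeta^{(j-r)k}\abs{c_k}^2\,.
\end{gather*}
Setting $a_k\colonequals\abs{c_k}^2\geq0$, the orthogonality requirement becomes $\sum_{k\in\Z_m}\zeta^{(j-r)k}a_k=0$ for all $r\neq j$; together with the normalization $\sum_k a_k=\no f^2=1$, this says the discrete Fourier transform of the sequence $(a_k)$ vanishes at every nonzero frequency and equals $1$ at frequency $0$. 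By the orthogonality relation $\sum_{j\in\Z_m}\zeta^{js}=m\delta_{s0}$ recorded just after the definition of $F$, inverting the transform forces $a_k=1/m$ for every $k$, i.e. $\abs{c_k}^2=1/m$, so $c_k=m^{-1/2}e^{i\theta_k}$ for some $\theta_k\in\R$.

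**Assembling the two directions.**
For the forward direction I read off that $f=m^{-1/2}\sum_{k}e^{i\theta_k}\varphi_{e_k}$; absorbing the global factor $m^{-1/2}$ (the statement \eqref{eq:Ur-Condition-f} describes $\cA C=\Span\{f\}$ up to scalar, so any nonzero representative is allowed) yields precisely the claimed form $f=\sum_{j\in\Z_m}e^{i\theta_j}\varphi_{e_j}$. For the converse, given $f$ of that form one runs the computation backwards: each $\abs{c_k}^2$ is constant, so $\ip{U^rf}{U^jf}=m^{-1}\sum_k\zeta^{(j-r)k}=\delta_{rj}$ by the same orthogonality relation, which re-establishes \eqref{eq:SNC-Ncc} and hence that $\{U^r\}_{r\in\Z_m}$ is a $\cA C$-decoding noise basis. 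I expect the only genuine obstacle to be bookkeeping with the global normalization constant: the inner-product condition only constrains $f$ up to a positive scalar and an overall phase, and I must make sure the statement's normalization-free expression $\sum_j e^{i\theta_j}\varphi_{e_j}$ is recognized as a legitimate spanning vector rather than insisting on $\no f=1$; the substantive work is otherwise just the Fourier inversion, which is immediate from the stated relation $\sum_{j\in\Z_m}\zeta^{jr}=m\delta_{r0}$.
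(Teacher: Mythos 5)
Your argument is correct and follows essentially the same route as the paper's proof: both reduce \eqref{eq:SNC-Ncc} to the orthogonality $\ip{f}{U^{s}f}=0$ for $s\neq 0$ (the partial-isometry part being automatic since $U$ is unitary and $\dim\cA C=1$ is forced), expand $f$ in the entangled clock basis $\{\varphi_{e_j}\}$, and invert the resulting discrete Fourier relation $\sum_{j}\abs{c_j}^2\zeta^{js}=0$ to conclude that the $\abs{c_j}^2$ are all equal. The paper phrases the inversion as solving the linear system $\sqrt{m}F\hat f=\no f^2 e_0$ with the Fourier operator $F$, while you use the orthogonality relation $\sum_{j\in\Z_m}\zeta^{jr}=m\delta_{r0}$ directly, but these are the same computation.
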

\begin{proof}
It is clear that $U^rP_{\cA C}$ is partial unitary, since $U^r$ is unitary, for $j\in\Z_m$. Besides, $\cA C$ satisfies \eqref{eq:C-one-dimensional}. Thereby, by virtue of 
the orthogonality condition of \eqref{eq:SNC-Ncc} and since $U$ is unitary, it is sufficient to show that $f\perp U^sf$, for a non-zero $f\in\cA C$ and all $s\neq 0$, if and only if $f$ satisfies \eqref{eq:Ur-Condition-f}. Thus, by \eqref{eq:U-represet-vP}, since $f=\sum_{a\in\Z_m}f_a\varphi_{e_a}$, with $f_j\in\C$, one has for all $s\neq0$ that $f\perp U^sf$ if and only if 
\begin{align}\label{eq:Orthogonal-Cfj}
0=\ip{f}{U^{s}f}=\sum_{a,b,j\in\Z_m}\cc f_af_b\zeta^{j{s}}\ip{\varphi_{e_a}}{\Dn{\varphi_{e_j}}{\varphi_{e_j}}\varphi_{e_b}}=\sum_{j\in\Z_m}\abs{f_j}^2\zeta^{js}\,,
\end{align}
this if and only if $\abs{f_j}^2=\alpha>0$, for all $j\in\Z_m$. Indeed, \eqref{eq:Orthogonal-Cfj} implies the system $\sqrt{m}F \hat f=\no f^2e_0$, where $\hat f=\sum_{j\in\Z_m}\abs{f_j}^2e_j$. So, $\hat f=(m)^{-1/2}\no f^2F^*e_0=m^{-1}\no f^2\sum_{j\in\Z_m}e_j$, i.e., $\no f^2=m\abs{f_j}^2$, for all $j\in\Z_m$, whence it follows that $\abs{f_1}^2=\dots=\abs{f_m}^2>0$. The converse is straightforward. Hence, by a constant normalization one arrives at  \eqref{eq:Ur-Condition-f}.
\end{proof}

It is a simple matter to verify from \eqref{eq:e.o.n.b-vp} that 
\begin{gather}\label{eq:o.n.b-e-j}
e_j=F\varphi_{e_j}=\frac{1}{\sqrt{m}}\sum_{r\in\Z_m}\zeta^{jr}\varphi_{e_r}\,,\qquad j\in\Z_m\,.
\end{gather}

\begin{corollary}\label{cor:decoder-vpj} For $j\in\Z_m$, the code $\cA C_j=\Span\{e_j\}$ is a $\cA U$-cc, with decoder
\begin{gather}\label{eq-Dec-ej1}
\Phi(\rho)=\sum_{r\in\Z_m}\ip{e_{j+r}}{\rho e_{j+r}}P_{\cA C_j}\,,\qquad \rho\in\cA B(\cA H)\,.
\end{gather}
\end{corollary}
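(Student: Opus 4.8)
The plan is to reduce the claim to Theorem~\ref{th:eio-codes} and then read the decoder off Theorem~\ref{th:N-cc-Conditions}.\eqref{it:OP1}. First I would observe that the one-dimensional code $\cA C_j=\Span\{e_j\}$ already satisfies the hypothesis \eqref{eq:Ur-Condition-f}. Indeed, \eqref{eq:o.n.b-e-j} gives $e_j=m^{-1/2}\sum_{r\in\Z_m}\zeta^{jr}\varphi_{e_r}$, and since each coefficient $\zeta^{jr}=e^{2\pi ijr/m}$ has unit modulus, the vector $e_j$ is, up to the harmless global factor $m^{-1/2}$, exactly of the form $\sum_{r\in\Z_m}e^{i\theta_r}\varphi_{e_r}$ with $\theta_r=2\pi jr/m$. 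Because $\Span\{e_j\}=\Span\{m^{1/2}e_j\}$, Theorem~\ref{th:eio-codes} applies and shows that $\{U^r\}_{r\in\Z_m}$ is a $\cA C_j$-decoding noise basis of $\cA U$; by Theorem~\ref{th:N-cc-Conditions} this makes $\cA C_j$ a $\cA U$-cc.

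It remains to identify the decoding channel explicitly. Following Theorem~\ref{th:N-cc-Conditions}.\eqref{it:OP1}, I would set $K_r=P_{\cA C_j}(U^r)^*$ for $r\in\Z_m$ and $K_n=I-\sum_{r\in\Z_m}K_r^*K_r$. Using $P_{\cA C_j}=\Dn{e_j}{e_j}$, the unitarity $(U^r)^*=U^{-r}$, and the cyclic action $U^re_j=e_{j+r}$ (so that $\vb{e_j}U^{-r}=\vb{e_{j+r}}$), a short computation collapses each Krauss operator to the rank-one form $K_r=\Dn{e_j}{e_{j+r}}$.

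With this I would compute $K_r^*K_r=\Dn{e_{j+r}}{e_{j+r}}$ and hence $\sum_{r\in\Z_m}K_r^*K_r=\sum_{s\in\Z_m}\Dn{e_s}{e_s}=I$, the reindexing $s=j+r$ sweeping out all of $\Z_m$. Consequently $K_n=0$, so the channel consists of exactly the $m$ terms indexed by $\Z_m$. Substituting $K_r=\Dn{e_j}{e_{j+r}}$ into $\Phi(\rho)=\sum_{r\in\Z_m}K_r\rho K_r^*$ and contracting the inner factor $\vb{e_{j+r}}\rho\vk{e_{j+r}}=\ip{e_{j+r}}{\rho e_{j+r}}$ yields precisely \eqref{eq-Dec-ej1}.

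The computations here are entirely routine; the only point that genuinely needs care is the completeness relation $\sum_{r\in\Z_m}K_r^*K_r=I$, which forces the residual Krauss operator $K_n$ to vanish and thereby confirms that the listed channel \eqref{eq-Dec-ej1} is already trace-preserving. This in turn rests on the fact that the cyclic shift permutes the basis $\{e_j\}$ transitively, so that summing over $r\in\Z_m$ is the same as summing over the full orthonormal basis.
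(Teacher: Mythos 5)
Your proposal is correct and follows essentially the same route as the paper: verify via \eqref{eq:o.n.b-e-j} that $e_j$ has the form required by Theorem~\ref{th:eio-codes}, then read off the Krauss operators $K_r=P_{\cA C_j}U^{*r}=\Dn{e_j}{e_{j+r}}$ from Theorem~\ref{th:N-cc-Conditions}.\eqref{it:OP1} and check the completeness relation $\sum_{r\in\Z_m}K_r^*K_r=I$ so the residual Krauss operator vanishes. Your version is slightly more careful than the paper's (which states the completeness sum with the factors in the order $K_rK_r^*$, an apparent typo, since that sum equals $mP_{\cA C_j}$, not $I$).
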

\begin{proof}
It follows from Theorems~\ref{th:N-cc-Conditions}, \ref{th:eio-codes} and condition \eqref{eq:o.n.b-e-j} that $\cA C_j$ is a $\cA U$-cc. Besides, the decoder \eqref{eq-Dec-ej1} follows from Theorem~\ref{th:N-cc-Conditions}.\eqref{it:OP1},  
since the Krauss operators of $\Phi$ satisfies $K_r=\Dn{e_j}{e_j}U^{*r}=\Dn{e_j}{e_{j+r}}$ and $\sum_{r\in\Z_m}K_rK_r^*=I$.
\end{proof}

Now, we consider the \emph{canonical clock} operator (or simply clock operator)
\begin{gather}\label{eq:V-unitary-CO}
V\colonequals\sum_{j\in\Z_m}\zeta^{j}\Dn{e_j}{e_j}\,,
\end{gather}
which certainty is a cyclic unitary operator, since one readily expresses  \eqref{eq:V-unitary-CO} in its \emph{entangled shift} version
\begin{gather*}
V= \sum_{j\in\Z_m}\Dn{\varphi_{e_{j-1}}}{\varphi_{e_j}}\,,\quad\mbox{and}\quad \bigoplus_{j\in\Z_m} V^j\lrp{\Span\{\varphi_{e_0}\}}=\cA H\,.
\end{gather*}
In this fashion, bering in mind \eqref{eq:U-represet-vP}, one can change the roll of the basis $\{e_j\}_{j\in\Z_m}$ and $\{\varphi_{e_j}\}_{j\in\Z_m}$ in Theorem~\ref{th:eio-codes} and Corollary~\ref{cor:decoder-vpj} to get the following result.

\begin{theorem}\label{th:cio-codes}
The set $\lrb{V^r}_{r\in\Z_m}$ is a $\cA C$-decoding noise basis of $\cA V=\Span\lrb{V^r}_{r\in\Z_m}$ if and only if 
\begin{gather*}
\cA C=\Span\{g\}\,,\qquad\mbox{where}\quad g=\sum_{j\in\Z_m}e^{i\theta_j}e_j\,,\quad \theta_j\in\R\,.
\end{gather*}
In particular, for $j\in\Z_m$, the code $\cA C_j=\Span\{\varphi_{e_j}\}$ is a $\cA V$-cc, with decoder
\begin{gather*}\label{eq-Dec-ej}
\Phi(\rho)=\sum_{r\in\Z_m}\ip{\varphi_{e_{j-r}}}{\rho\varphi_{e_{j-r}}}P_{\cA C_j}\,,\qquad \rho\in\cA B(\cA H)\,.
\end{gather*}
\end{theorem}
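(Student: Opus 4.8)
The plan is to reduce the entire statement to Theorem~\ref{th:eio-codes} via a single unitary equivalence: the clock operator $V$ is Fourier-conjugate to the shift $U$, with $F$ interchanging the bases $\{e_j\}_{j\in\Z_m}$ and $\{\varphi_{e_j}\}_{j\in\Z_m}$. Since $\varphi_{e_j}=F^*e_j$ gives $F\varphi_{e_j}=e_j$, combining this with the entangled-clock form $U=\sum_{j}\zeta^{j}\Dn{\varphi_{e_j}}{\varphi_{e_j}}$ I would compute, for each $j\in\Z_m$,
\[
FUF^*e_j=FU\varphi_{e_j}=\zeta^{j}F\varphi_{e_j}=\zeta^{j}e_j=Ve_j\,,
\]
so that $V=FUF^*$ and hence $V^r=FU^rF^*$ for all $r\in\Z_m$. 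This is the structural fact that lets the role of the two bases be swapped.

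Next I would record that the defining property \eqref{eq:SNC-Ncc} of a $\cA C$-decoding noise basis is covariant under unitary conjugation. If $W$ is unitary and $\{N_r\}$ satisfies \eqref{eq:SNC-Ncc} for the code $\cA C$, then $P_{W\cA C}=WP_{\cA C}W^*$ yields $(WN_rW^*)P_{W\cA C}=W(N_rP_{\cA C})W^*$, which is a partial isometry onto $W\cA C$ because $W$ is unitary, and $\ran (WN_rW^*)P_{W\cA C}=W\,\ran N_rP_{\cA C}$, so the mutual orthogonality for $r\neq s$ is preserved. Applying this with $W=F^*$ to $\{V^r\}$, and using $F^*V^rF=U^r$, shows that $\{V^r\}_{r\in\Z_m}$ is a $\cA C$-decoding noise basis of $\cA V$ if and only if $\{U^r\}_{r\in\Z_m}$ is an $F^*\cA C$-decoding noise basis of $\cA U$.

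I would then feed this equivalence into Theorem~\ref{th:eio-codes}: the latter holds exactly when $F^*\cA C=\Span\{f\}$ with $f=\sum_{j\in\Z_m}e^{i\theta_j}\varphi_{e_j}$. Since $Ff=\sum_{j\in\Z_m}e^{i\theta_j}F\varphi_{e_j}=\sum_{j\in\Z_m}e^{i\theta_j}e_j=:g$, applying $F$ gives $\cA C=F(F^*\cA C)=\Span\{g\}$ with $g$ of exactly the claimed form, which proves the main equivalence in both directions. No index obstruction arises here because $F$ carries $\{\varphi_{e_j}\}$ onto $\{e_j\}$ without any shuffle; the only care needed is tracking $F\varphi_{e_j}=e_j$.

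For the ``in particular'' statement I would avoid the transfer and argue directly. From \eqref{eq:e.o.n.b-vp}, $\sqrt{m}\,\varphi_{e_j}=\sum_{r\in\Z_m}\zeta^{-jr}e_r$, whose coefficients $\zeta^{-jr}=e^{-2\pi i jr/m}$ all have modulus one; hence $\Span\{\varphi_{e_j}\}=\Span\{g\}$ with $g$ of the form in the main equivalence, so $\cA C_j=\Span\{\varphi_{e_j}\}$ is a $\cA V$-cc. To obtain the decoder I would specialize Theorem~\ref{th:N-cc-Conditions}.\eqref{it:OP1} with $N_r=V^r$ and $P_{\cA C_j}=\Dn{\varphi_{e_j}}{\varphi_{e_j}}$. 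Using $V^{*}\varphi_{e_k}=\varphi_{e_{k+1}}$ (from $V\varphi_{e_j}=\varphi_{e_{j-1}}$ and $V^*=V^{-1}$), the Krauss operators become $K_r=P_{\cA C_j}V^{*r}=\Dn{\varphi_{e_j}}{\varphi_{e_{j-r}}}$; then $\sum_{r\in\Z_m}K_r^*K_r=\sum_{r\in\Z_m}\Dn{\varphi_{e_{j-r}}}{\varphi_{e_{j-r}}}=I$, so the completing operator $K_n$ vanishes, and
\[
\Phi(\rho)=\sum_{r\in\Z_m}K_r\rho K_r^*=\sum_{r\in\Z_m}\ip{\varphi_{e_{j-r}}}{\rho\varphi_{e_{j-r}}}P_{\cA C_j}\,,
\]
as claimed. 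The only real obstacle I anticipate is the bookkeeping of Fourier index conventions (the sign in $V\varphi_{e_j}=\varphi_{e_{j-1}}$ and the direction of $V^{*r}$); once these are fixed, every step is either a covariance argument or a direct rank-one computation.
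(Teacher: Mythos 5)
Your proof is correct and follows essentially the same route as the paper, which simply invokes the role-swap between the bases $\{e_j\}$ and $\{\varphi_{e_j}\}$ in Theorem~\ref{th:eio-codes} and Corollary~\ref{cor:decoder-vpj}; your identity $V=FUF^*$ together with the unitary covariance of condition \eqref{eq:SNC-Ncc} is exactly the rigorous content of that remark. The direct computation of the Krauss operators $K_r=\Dn{\varphi_{e_j}}{\varphi_{e_{j-r}}}$ for the decoder also matches the paper's argument in Corollary~\ref{cor:decoder-vpj}.
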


We shall tackle the shift and clock operators in the section of reducing subspaces \ref{sec:ccn}.

\section{Code-decoding error basis in reducing subspaces}\label{sec:ccn}

In what follows, we turn the attention to code-decoding noise basis in reducing subspaces. 
\begin{remark}\label{rm:sub-Cdnb}
One readily checks that if $\{N_j\}_{j=1}^{n}$ is a $\cA C$-decoding noise basis of $\cA N$, then it follows for $m\leq n$ that $\{N_j\}_{j=1}^{m}$ is $\cA C$-decoding noise of $\Span\{N_j\}_{j=1}^{m}$. 
\end{remark}

For a partition $\cA H_1,\dots,\cA H_k$ of $\cA H$, if $\{N_{sj}\}_{j=1}^{n_s}$ is a $\cA C_s$-decoding noise basis of $\cA N_s\leq \cA B(\cA H_s)$, then by Remark~\ref{rm:sub-Cdnb}, one has that $\{N_{sj}\}_{j=0}^{n}$ is a $\cA C_s$-decoding noise basis of $\Span\{N_{sj}\}_{j=0}^{n}$, where $n=\min\{n_s\,:\, s=1,\dots, k\}$. 

\begin{proposition}\label{prop:C-db-onHj1} Let $n,k\in\N$ and $\cA H_1,\dots,\cA H_k$ be a partition of $\cA H$. Then, $\{N_{sj}\}_{j=1}^{n}$ is a $\cA C_s$-decoding noise basis of $\cA N_s\leq \cA B(\cA H_s)$, for $s=1,\dots,k$, if and only if  
\begin{gather}\label{eq:C-db-onHj}
\lrb{\bigoplus_{s=1}^kN_{sj}}_{j=1}^{n}\quad\mbox{is a}\quad \bigoplus_{s=1}^{k}\cA C_s\mbox{-decoding noise basis of}\quad \bigoplus_{s=1}^k\cA N_{s}\,.
\end{gather}
\end{proposition}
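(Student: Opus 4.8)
We must show that $\{N_{sj}\}_{j=1}^{n}$ is a $\cA C_s$-decoding noise basis of $\cA N_s\leq\cA B(\cA H_s)$ for each $s=1,\dots,k$ if and only if $\{\bigoplus_{s=1}^k N_{sj}\}_{j=1}^n$ is a $\bigoplus_{s=1}^k\cA C_s$-decoding noise basis of $\bigoplus_{s=1}^k\cA N_s$.

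The plan is to unwind everything to the defining condition~\eqref{eq:SNC-Ncc} of a $\cA C$-decoding noise basis and exploit the block-diagonal structure induced by the partition. Write $N_j\colonequals\bigoplus_{s=1}^k N_{sj}$, $\cA C\colonequals\bigoplus_{s=1}^k\cA C_s$, and $\cA N\colonequals\bigoplus_{s=1}^k\cA N_s$. First I would record the two structural facts that make the blocks decouple: since the $\cA H_s$ are mutually orthogonal and reduce each operator, the projection onto $\cA C$ is $P_{\cA C}=\bigoplus_{s=1}^k P_{\cA C_s}$, and composition respects the block decomposition, so $N_j^* N_r=\bigoplus_{s=1}^k N_{sj}^* N_{sr}$. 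Consequently
\begin{gather*}
P_{\cA C}N_j^* N_r P_{\cA C}=\bigoplus_{s=1}^k P_{\cA C_s}N_{sj}^* N_{sr} P_{\cA C_s}\,.
\end{gather*}
This single identity is the engine for the whole proof, since the condition~\eqref{eq:SNC-Ncc} that each $N_jP_{\cA C}$ is a partial isometry onto $\cA C$ with mutually orthogonal ranges is, by Remark~\ref{rm:isometry-consequences}.\eqref{it:icqs} together with the computation $(N_jP_{\cA C})^*N_rP_{\cA C}=\delta_{jr}P_{\cA C}$ (as used in Theorem~\ref{th:N-cc-Conditions}, equation~\eqref{eq:comp-Nj}), exactly equivalent to $P_{\cA C}N_j^* N_r P_{\cA C}=\delta_{jr}P_{\cA C}$.

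For the forward direction, assume each $\{N_{sj}\}_{j}$ satisfies~\eqref{eq:SNC-Ncc} on $\cA H_s$; then $P_{\cA C_s}N_{sj}^* N_{sr} P_{\cA C_s}=\delta_{jr}P_{\cA C_s}$ for every $s$. Summing the blocks through the displayed identity gives $P_{\cA C}N_j^* N_r P_{\cA C}=\delta_{jr}\bigoplus_{s} P_{\cA C_s}=\delta_{jr}P_{\cA C}$, which is precisely~\eqref{eq:SNC-Ncc} for $\{N_j\}_j$ on $\cA H$. For the converse, assume $P_{\cA C}N_j^* N_r P_{\cA C}=\delta_{jr}P_{\cA C}$; because the left-hand side is block-diagonal and equals $\bigoplus_s \delta_{jr}P_{\cA C_s}$, comparing the $s$-th block (the blocks being independent since the $\cA H_s$ are orthogonal) yields $P_{\cA C_s}N_{sj}^* N_{sr} P_{\cA C_s}=\delta_{jr}P_{\cA C_s}$ for each $s$, which reconstitutes~\eqref{eq:SNC-Ncc} on each $\cA H_s$. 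I would also note the basis bookkeeping: that $\{N_{sj}\}_j$ is a basis of $\cA N_s$ for all $s$ is equivalent to $\{N_j\}_j$ being a basis of $\cA N=\bigoplus_s\cA N_s$, since a direct-sum decomposition is determined blockwise.

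The main obstacle—really the only subtlety—is justifying that equality of the block-diagonal operators forces equality block by block, i.e. that one may read off each $\cA H_s$-component independently. This is immediate once one observes that $P_{\cA C}N_j^* N_r P_{\cA C}$ maps $\cA H_s$ into $\cA H_s$ and annihilates $\cA H_t$ for $t\neq s$, so its restriction to $\cA H_s$ is exactly $P_{\cA C_s}N_{sj}^* N_{sr}P_{\cA C_s}$; hence the global scalar identity with projection $P_{\cA C}$ is equivalent to the conjunction over $s$ of the local identities with projections $P_{\cA C_s}$. Everything else is the routine translation between the partial-isometry formulation of~\eqref{eq:SNC-Ncc} and its equivalent Gram-type form $P_{\cA C}N_j^*N_rP_{\cA C}=\delta_{jr}P_{\cA C}$, which has already been established in the proof of Theorem~\ref{th:N-cc-Conditions}.
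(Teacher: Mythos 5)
Your argument is correct and follows essentially the same route as the paper's: both exploit the block-diagonal decoupling $P_{\cA C}N_j^*N_rP_{\cA C}=\bigoplus_{s=1}^kP_{\cA C_s}N_{sj}^*N_{sr}P_{\cA C_s}$ induced by the partition, the paper verifying the partial-isometry condition (the case $j=r$) and the range orthogonality (the case $j\neq r$) separately where you package both into the single Gram-type identity $P_{\cA C}N_j^*N_rP_{\cA C}=\delta_{jr}P_{\cA C}$ already established in Theorem~\ref{th:N-cc-Conditions}. Your converse, which reads off each block of the block-diagonal identity, is in fact more explicit than the paper's one-line appeal to Remarks~\ref{eq:subcode-prop} and~\ref{rm:sub-Cdnb}.
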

\begin{proof} 
If $\{N_{sj}\}_{j=1}^{n}$ is a $\cA C_s$-decoding noise basis of $\cA N_s$, for $s=1,\dots,k$,  then it follows by the orthogonality of the Hilbert spaces that $\bigoplus_{s=1}^{k}\cA C_s\leq \supp \bigoplus_{s=1}^kN_{sj}$ and for $j=1,\dots,n$,
\begin{gather*}
\lrp{\bigoplus_{r=1}^kN_{rj}P_{\bigoplus_{s=1}^{k}\cA C_s}}^*\bigoplus_{r=1}^kN_{rj}P_{\bigoplus_{s=1}^{k}\cA C_s}=P_{\bigoplus_{s=1}^{k}\cA C_s}\bigoplus_{r=1}^kN_{rj}^*N_{rj}P_{\bigoplus_{s=1}^{k}\cA C_s}=P_{\bigoplus_{s=1}^{k}\cA C_s}\,.
\end{gather*}
Besides, $\bigoplus_{r=1}^kN_{rj}\bigoplus_{s=1}^{k}\cA C_s=\bigoplus_{s=1}^kN_{sj}\cA C_s$ implies $\bigoplus_{r=1}^kN_{rj}\bigoplus_{s=1}^{k}\cA C_s\perp \bigoplus_{r=1}^kN_{rt}\bigoplus_{s=1}^{k}\cA C_s$, since $N_{sj}\cA C_s\perp N_{st}\cA C_s $, for $j\neq t$, which implies \eqref{eq:C-db-onHj}. Conversely, if  \eqref{eq:C-db-onHj} is true then Remarks~\ref{eq:subcode-prop} and~\ref{rm:sub-Cdnb} implies that  $\{N_{sj}\}_{j=1}^{n_s}$ is a $\cA C_s$-decoding noise basis of $\cA N_s$, for $s=1,\dots,k$.
\end{proof}

Recall that $\cA H_1,\dots,\cA H_k\leq\cA H$ are reducing subspaces for $V\in\cA B(\cA H)$, if all of them are $V$-invariant and $\bigoplus_{j=1}^k\cA H_j=\cA H$. If such is the case, $V=\bigoplus_{j=1}^kV_j$, where
\begin{gather*}
V_j\colonequals VP_{\cA H_j}=P_{\cA H_j}VP_{\cA H_j}\quad\mbox{satisfies}\quad  \ran V_j=\ran V\cap\cA H_j\quad\mbox{and}\quad\supp V_j=\supp V\cap\cA H_j\,.
\end{gather*}
The operator $V_j$ is called the \emph{reduced operator} by $\cA H_j$.

\begin{lemma}\label{lem:V-reducedHj}
For reducing subspaces $\cA H_1,\dots,\cA H_k$ for $V\in\cA B(\cA H)$, with reduced operators $V_1,\dots,V_k$, respectively, the following are true:
\begin{enumerate}[(i)]
\item\label{it1:V-rk} $V$ is partial isometry if and only every $V_j$ is partial isometry.
\item\label{it2:V-rk} $V$ is partial unitary if and only every $V_j$ is partial unitary.
\item\label{it3:V-rk} $V$ is unilateral-shift if and every $V_j$ is unilateral-shift. In such a case, the 
wandering space $(\cA L,m)$ for $V$ is 
\begin{gather*}
\cA L=\bigoplus_{j=1}^k\cA L_j\,,\qquad m=\max\{m_j\}_{j=1}^k\,,
\end{gather*}
being $(\cA L_j,m_j)$ wandering for $V_j$, $j=1,\dots,k$.

\end{enumerate}
\end{lemma}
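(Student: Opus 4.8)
The plan is to work throughout with the orthogonal splitting $V=\bigoplus_{j=1}^{k}V_j$ afforded by the reducing subspaces, together with the observation that powers, ranges, kernels and supports all split accordingly: $V^n=\bigoplus_{j=1}^{k}V_j^n$, $\ran V=\bigoplus_{j=1}^{k}\ran V_j$, $\ker V=\bigoplus_{j=1}^{k}\ker V_j$, and hence $\supp V=\bigoplus_{j=1}^{k}\supp V_j$, with the $j$-th summand contained in $\cA H_j$. All three equivalences will then be read off blockwise from these decompositions.

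For \eqref{it1:V-rk} I would simply note that $V^*V=\bigoplus_{j=1}^{k}V_j^*V_j$; since $V$ is a partial isometry exactly when $V^*V$ is an orthogonal projection, and a block-diagonal operator is an orthogonal projection if and only if each of its blocks is, the equivalence follows at once. For \eqref{it2:V-rk}, having \eqref{it1:V-rk} available both sides are already partial isometries, so it remains to compare $\supp V=\ran V$ with $\supp V_j=\ran V_j$; because $\supp V=\bigoplus_j\supp V_j$ and $\ran V=\bigoplus_j\ran V_j$ with each summand confined to the mutually orthogonal $\cA H_j$, equality of the total sums is equivalent to equality in every block.

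The substantive part is \eqref{it3:V-rk}, and here I would argue directly with wandering spaces rather than through the completely non-unitary characterization of Corollary~\ref{cor:shifts-cun}: the obstacle along that route is that a reducing subspace on which $V$ acts unitarily need not respect the partition $\bigoplus_j\cA H_j$, so complete non-unitarity is awkward to transfer blockwise. Instead I would compute the canonical wandering space $\cA L=\cA H\ominus\ran V=\bigoplus_{j=1}^{k}\bigl(\cA H_j\ominus\ran V_j\bigr)=\bigoplus_{j=1}^{k}\cA L_j$, so that $V^n\cA L=\bigoplus_{j=1}^{k}V_j^n\cA L_j$ with orthogonal summands $V_j^n\cA L_j\subseteq\cA H_j$. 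The wandering orthogonality $V^n\cA L\perp\cA L$ then holds if and only if $V_j^n\cA L_j\perp\cA L_j$ for every $j$, and the spanning identity $\bigoplus_{i=0}^{m}V^i\cA L=\cA H$ (whose summands are orthogonal by \eqref{eq:orth-Ws}) rearranges as $\bigoplus_{j=1}^{k}\bigl(\bigoplus_{i=0}^{m}V_j^i\cA L_j\bigr)$ and, by orthogonality of the $\cA H_j$, is equivalent to $\bigoplus_{i=0}^{m}V_j^i\cA L_j=\cA H_j$ for each $j$.

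The one delicate point I expect is the index $m$. I would first record the monotonicity fact that $V^{n+1}\cA L=V(V^n\cA L)$, so that once $V^n\cA L=\{0\}$ it stays zero, whence $V^n\cA L\neq\{0\}$ precisely for $0\le n\le m$, and likewise $V_j^i\cA L_j\neq\{0\}$ precisely for $0\le i\le m_j$. Since $V^n\cA L=\bigoplus_j V_j^n\cA L_j$ is nonzero exactly when some $V_j^n\cA L_j$ is, this yields $m=\max_{j}m_j$ and lets me discard the vanishing tail to leave $\bigoplus_{i=0}^{m_j}V_j^i\cA L_j=\cA H_j$. Reading the equivalences from left to right then shows that each $V_j$ is a unilateral-shift with wandering space $(\cA L_j,m_j)$, while reading them right to left assembles $V$ into a unilateral-shift with wandering space $\bigl(\bigoplus_j\cA L_j,\ \max_j m_j\bigr)$, as claimed.
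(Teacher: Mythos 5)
Your proof is correct and follows essentially the same route as the paper's: each item is verified blockwise from the orthogonal splitting $V=\bigoplus_{j=1}^k V_j$, with \eqref{it1:V-rk} read off from $V^*V=\bigoplus_j V_j^*V_j$, \eqref{it2:V-rk} from the blockwise comparison of supports and ranges, and \eqref{it3:V-rk} from checking the wandering-space conditions summand by summand with $m=\max_j m_j$. The only cosmetic difference is that the paper takes $\cA L_j=\cA L\cap\cA H_j$ and projects the spanning identity onto $\cA H_j$, whereas you identify $\cA L=\cA H\ominus\ran V=\bigoplus_{j=1}^k\lrp{\cA H_j\ominus\ran V_j}$ directly; the content is identical.
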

\begin{proof}\eqref{it1:V-rk}: If
$V$ is partial isometry then, since $P_{\supp V}$ and $P_{\cA H_j}$ commute, 
\begin{gather*}
V_j^*V_j=P_{\cA H_j}V^*VP_{\cA H_j}=P_{\cA H_j}P_{\supp V}P_{\cA H_j}=P_{\supp V_j}\,,
\end{gather*}
whence $V_j$ is partial isometry. Conversely, since $V=\bigoplus_{j=1}^kV_j$ and by the orthogonality condition of the reducing subspaces 
\begin{gather*}
V^*V=\lrp{\bigoplus_{j=1}^kV_j}^*\lrp{\bigoplus_{r=1}^kV_r}=\bigoplus_{j=1}^kV_j^*V_j=\bigoplus_{j=1}^kP_{\supp V\cap \cA H_j}=\bigoplus_{j=1}^kP_{\supp V}P_{\cA H_j}=P_{\supp V}\,,
\end{gather*}
i.e., $V$ is partial isometry.

\eqref{it2:V-rk}: If $V$ is partial unitary, then $\supp V=\ran V$ and $V_j$ is partial isometry, by \eqref{it1:V-rk}. Besides, since $\cA H_j$ reduces $V$,
\begin{gather*}
\ran V_j=\ran V\cap{\cA H_j}=\supp V\cap{\cA H_j}=\supp V_j\,,
\end{gather*}
i.e., $V_j$ is partial unitary. On the other hand, if $V_j$ is partial unitary then $\supp V_j=\ran V_j$. Thus, $V$ is partial isometry by \eqref{it1:V-rk} and  
\begin{align*}
\ran V&=\bigoplus_{j=1}^k\ran V\cap{\cA H_j}=\bigoplus_{j=1}^k\ran V_j\\&=\bigoplus_{j=1}^k\supp V_j=\bigoplus_{j=1}^k\supp V\cap{\cA H_j}=\supp V\,,
\end{align*}
whence $V$ is partial unitary.

\eqref{it3:V-rk}: If $V$ is unilateral-shift then there exists a wandering space $(\cA L,m)$ for $V$ such that \eqref{eq:vs-OnH} holds. In this fashion, \eqref{it1:V-rk} implies that $V_j$ is partial isometry in $\cA H_j$ and for $\cA L_j=\cA L\cap \cA H_j$, 
\begin{gather*}
\bigoplus_{r=0}^m V_j^r\cA L_j=P_{\cA H_j}\bigoplus_{r=0}^m V^rP_{\cA H_j}\cA L_j=P_{\cA H_j}\lrp{\bigoplus_{r=0}^m V^r\cA L}\cap \cA H_j=\cA H_j\,.
\end{gather*}
Thereby, taking $m_j=\max\{n\geq0\,:\, V_j^n\cA L_j\neq\{0\}\}\leq m$, it follows that $(\cA L_j,m_j)$ is wandering for $V_j$ and, hence, $V_j$ is unilateral-shift in $\cA H_j$. Conversely, if $V_j$ is unilateral-shift in $\cA H_j$, with wandering space $(\cA L_j,m_j)$, for $j=1,\dots,k$, then  it follows by \eqref{it2:V-rk} that $V$ is partial unitary. Besides, for $m=\max\{m_j\}_{j=1}^k$ and $\cA L=\bigoplus_{j=1}^k\cA L_j$, one gets that 
\begin{gather*}
\bigoplus_{r=0}^m V^r\cA L=\bigoplus_{r=0}^m\bigoplus_{j=1}^kV_j^r\bigoplus_{s=0}^k\cA L_s=\bigoplus_{j=1}^k\bigoplus_{r=0}^mV_j^r\cA L_j=\bigoplus_{j=1}^k\cA H_j=\cA H\,.
\end{gather*}
Therefore, $(\cA L,m)$ is wandering for $V$ and $V$ is unilateral-shift.
\end{proof}

The following is straightforward from Theorem~\ref{th:characterization-PIO} and Lemma~\ref{lem:V-reducedHj}.\eqref{it3:V-rk}.

\begin{corollary}\label{cor:wNw-Deco-RS}
Let $V$ be a partial isometry, $\cA H_1,\dots,\cA H_k$ be reducing subspaces for $V$ and $V_j=VP_{\cA H_j}$, for $j=1,\dots, k$. Then, for
 \begin{gather*}
\cA L_j=\cA H_j\ominus \ran V_j\qquad\mbox{and}\qquad m_j=\max\{n\geq0\,:\, V_j^n\cA L_j\neq\{0\}\}\,,
\end{gather*}
the unique reducing subspace in which $V_j$ is unilateral-shift,  
with wandering space $(\cA L_j,m_j)$, is
\begin{gather*}
\cA K_j=\bigoplus_{r=0}^{m_j}V_j^r\cA L_j\leq\cA H_j\,,\quad\mbox{while\quad $V_j$ on $\cA H_j\ominus \cA K_j=\ran V_j^{m_j+1}$ is unitary}\,.
\end{gather*}
Therefore, the unique reducing subspace in which $V$ is unilateral-shift with wandering space $(\bigoplus_{j=1}^k\cA L_j,\max\{m_j\}_{j=1}^k)$, is 
\begin{gather*}
\cA K=\bigoplus_{j=1}^k\cA K_j\,,\qquad\mbox{\quad and\quad $V$ on}\quad \cA K^\perp=\bigoplus_{j=1}^k\ran V_j^{m_j+1}\quad\mbox{is unitary}\,.
\end{gather*}
\end{corollary}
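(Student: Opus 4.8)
The plan is to obtain the per-component assertion by applying the von Neumann--Wold decomposition (Theorem~\ref{th:characterization-PIO}) to each reduced operator individually, and then to assemble the global assertion by regrouping the resulting direct sum and invoking the uniqueness already contained in that same theorem. The point is that the corollary is nothing more than Theorem~\ref{th:characterization-PIO} performed componentwise and reassembled.

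First I would note that, since $\cA H_1,\dots,\cA H_k$ reduce the partial isometry $V$, Lemma~\ref{lem:V-reducedHj}.\eqref{it1:V-rk} guarantees that each reduced operator $V_j=VP_{\cA H_j}$ is itself a partial isometry on $\cA H_j$. Applying Theorem~\ref{th:characterization-PIO} to $V_j$ regarded as an operator on the Hilbert space $\cA H_j$ then yields verbatim the first half of the statement: with $\cA L_j=\cA H_j\ominus\ran V_j$ and $m_j=\max\{n\geq0\,:\,V_j^n\cA L_j\neq\{0\}\}$, the subspace $\cA K_j=\bigoplus_{r=0}^{m_j}V_j^r\cA L_j$ is the unique reducing subspace for $V_j$ on which $V_j$ is a unilateral-shift with wandering space $(\cA L_j,m_j)$, while $V_j$ is unitary on $\cA H_j\ominus\cA K_j=\ran V_j^{m_j+1}$.

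For the global claim I would use that, after splitting each $V_j$ into its shift and unitary parts via the previous paragraph, the equality $V=\bigoplus_{j=1}^kV_j$ regroups as
\begin{gather*}
V=\Bigl(\bigoplus_{j=1}^k V_j\!\rE{\cA K_j}\Bigr)\oplus\Bigl(\bigoplus_{j=1}^k V_j\!\rE{\ran V_j^{m_j+1}}\Bigr),
\end{gather*}
the two summands acting respectively on $\cA K=\bigoplus_{j=1}^k\cA K_j$ and on $\cA K^\perp=\bigoplus_{j=1}^k\ran V_j^{m_j+1}$. Both spaces are manifestly $V$-invariant, so $\cA K$ reduces $V$; the restriction $V\rE{\cA K^\perp}$ is an orthogonal sum of unitaries and is therefore unitary. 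The restriction $V\rE{\cA K}$ is the orthogonal sum of the unilateral-shifts $V_j\rE{\cA K_j}$, and since the reduced operator of $V\rE{\cA K}$ by $\cA K_j$ is precisely $V_j\rE{\cA K_j}$, Lemma~\ref{lem:V-reducedHj}.\eqref{it3:V-rk} applies on the space $\cA K$ and shows that $V\rE{\cA K}$ is a unilateral-shift with wandering space $(\bigoplus_{j=1}^k\cA L_j,\max\{m_j\}_{j=1}^k)$.

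Finally, uniqueness is inherited for free: the decomposition just built exhibits $V$ as the orthogonal sum of a unilateral-shift on $\cA K$ and a unitary on $\cA K^\perp$, so by the uniqueness clause of Theorem~\ref{th:characterization-PIO} applied to $V$ on all of $\cA H$, this $\cA K$ must be \emph{the} reducing subspace on which $V$ is a unilateral-shift, which is exactly the asserted description. The one point demanding care is the identification of the reduced operators of $V\rE{\cA K}$ by the $\cA K_j$ with the $V_j\rE{\cA K_j}$, needed to legitimately invoke Lemma~\ref{lem:V-reducedHj}.\eqref{it3:V-rk} on the subspace $\cA K$ rather than on $\cA H$ (recall $V$ itself need not be a shift on $\cA H$). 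This follows because $\cA K_j\subseteq\cA H_j$ forces $Vx=V_jx\in\cA K_j$ for every $x\in\cA K_j$; beyond this the argument is pure bookkeeping.
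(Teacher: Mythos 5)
Your argument is correct and is essentially the paper's own route: the paper declares the corollary ``straightforward from Theorem~\ref{th:characterization-PIO} and Lemma~\ref{lem:V-reducedHj}.\eqref{it3:V-rk}'', and you simply supply the bookkeeping — apply the von Neumann--Wold decomposition to each reduced partial isometry $V_j$, reassemble via Lemma~\ref{lem:V-reducedHj}.\eqref{it3:V-rk}, and conclude by the uniqueness clause of Theorem~\ref{th:characterization-PIO}. Your care in identifying the reduced operators of $V\!\rE{\cA K}$ with the $V_j\!\rE{\cA K_j}$ is a legitimate detail the paper leaves implicit.
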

In Corollary~\ref{cor:wNw-Deco-RS}, the decomposition of every restriction $V_j$ of $V$ into $V_j={V_j}_{\cA K_j}\oplus {V_j}_{\cA K_j^\perp}$, being $ {V_j}_{\cA K_j^\perp}$ and ${V_j}_{\cA K_j} $ the unitary and completely non-unitary parts of $V_j$, respectively, uniquely determines the decomposition of $V=V_{\cA K}\oplus V_{\cA K^\perp}$ into its unitary $V_{\cA K^\perp}=\bigoplus_{j=1}^k {V_j}_{\cA K_j^\perp}$ and completely non-unitary $V_{\cA K}=\bigoplus_{j=1}^k {V_j}_{\cA K_j}$ parts.

\begin{theorem}\label{th:rS-Pisom-Ncc}
Let $\cA H_1,\dots,\cA H_k$ be reducing subspaces for $V$, such that the reduced operator $V_j$ is unilateral-shift with wandering space $(\cA L_j,m_j)$, for $j=1,\dots,k$.  Then, for $m_j\neq0$ and $0\leq t_j\leq m_j$, it follows that 
$\cA C_{t_j}=\cA L_j\ominus \ker V_j^{t_j}\neq \{0\}$ and 
\begin{gather}\label{eq:LNcc-Vj}
\{V^r\}_{r=0}^{t_j}\quad \mbox{is a $\cA C_{t_j}$-decoding noise basis of}\quad \cA N_{t_j}=\Span\{V^r\}_{r=0}^{t_j}\,.
\end{gather}
Besides, for $m=\max\{m_j\}_{j=1}^k$, $0\leq t\leq m$ and $\cA N_t=\Span\{V^r\}_{r=0}^t$,  if $\cA F_t$ is an $\cA N_t$-cc then 
\begin{gather*}
\cA F_t\leq \bigoplus_{j=1}^k\lrp{\bigoplus_{r=0}^{m-t}V_j^r\cA L_j}\ominus \ker V_j^{t}\,.
\end{gather*}
Moreover, $\cA N_m$ has the largest correcting code given by $\bigoplus_{j=1}^k\cA L_j\ominus \ker V_j^{m}$.
\end{theorem}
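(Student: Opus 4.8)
The plan is to deduce the statement from the single-operator result Theorem~\ref{th:codes-Vjs}, applied first to each reduced operator $V_j$ on $\cA H_j$ and then to the global operator $V=\bigoplus_{j=1}^kV_j$, reconciling the two viewpoints through the reducing decomposition $\cA H=\bigoplus_{j=1}^k\cA H_j$.

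For the first assertion, I would fix $j$ with $m_j\neq0$ and $0\le t_j\le m_j$ and apply Theorem~\ref{th:codes-Vjs} to the unilateral-shift $V_j$ on $\cA H_j$ with wandering space $(\cA L_j,m_j)$. This at once yields $\cA C_{t_j}=\cA L_j\ominus\ker V_j^{t_j}\neq\{0\}$ and the fact that $\{V_j^r\}_{r=0}^{t_j}$ fulfils the partial-isometry and orthogonality requirements of \eqref{eq:SNC-Ncc} relative to $\cA C_{t_j}$. Since $\cA C_{t_j}\le\cA H_j$ and $\cA H_j$ reduces $V$, one has $V^rf=V_j^rf$ for every $f\in\cA C_{t_j}$, whence $V^rP_{\cA C_{t_j}}=V_j^rP_{\cA C_{t_j}}$; as \eqref{eq:SNC-Ncc} involves only the operators $V^rP_{\cA C_{t_j}}$, the requirements transfer verbatim to $\{V^r\}_{r=0}^{t_j}$. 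Because $V$ is itself a unilateral-shift of height $m=\max_j m_j\ge t_j$ by Lemma~\ref{lem:V-reducedHj}.\eqref{it3:V-rk}, the powers $\{V^r\}_{r=0}^{t_j}$ are linearly independent and hence a basis of $\cA N_{t_j}$, giving \eqref{eq:LNcc-Vj}.

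For the bound on an arbitrary $\cA N_t$-cc I would pass to the global operator. By Lemma~\ref{lem:V-reducedHj}.\eqref{it3:V-rk}, $V$ is a unilateral-shift with wandering space $(\cA L,m)=\lrp{\bigoplus_{j=1}^k\cA L_j,\ \max_j m_j}$, so Theorem~\ref{th:codes-Vjs} applied to $V$ and $\cA N_t=\Span\{V^r\}_{r=0}^t$ produces $\cA F_t\le\lrp{\bigoplus_{r=0}^{m-t}V^r\cA L}\ominus\ker V^t$. It remains to distribute this subspace along the partition via $V^r\cA L=\bigoplus_jV_j^r\cA L_j$ and $\ker V^t=\bigoplus_j\ker V_j^t$: setting $A_j=\bigoplus_{r=0}^{m-t}V_j^r\cA L_j$ and $B_j=\ker V_j^t$ inside $\cA H_j$, both $\bigoplus_{r=0}^{m-t}V^r\cA L=\bigoplus_jA_j$ and $(\ker V^t)^\perp=\bigoplus_j(\cA H_j\ominus B_j)$ are aligned with the decomposition, so their intersection splits as $\bigoplus_j(A_j\ominus B_j)$, which is exactly the claimed right-hand side. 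I expect this distribution to be the main obstacle: unlike the single-operator case, here $\ker V^t=\bigoplus_j\ker V_j^t$ is \emph{not} orthogonal to $\bigoplus_{r=0}^{m-t}V^r\cA L$, because when the heights $m_j$ differ the operator $V^t$ fails to be injective on the individual blocks $V^r\cA L$; this is precisely why the factor $\ominus\ker V_j^t$ must be retained in each summand, and the identity $\ker V^t=\bigoplus_j\ker V_j^t$ has to be established carefully rather than read off from the global wandering space.

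Finally, for the largest-code claim I would put $t=m$ in the preceding bound, which collapses the inner sum to its $r=0$ term and shows that every $\cA N_m$-cc is contained in $\cA C\ceq\bigoplus_{j=1}^k\lrp{\cA L_j\ominus\ker V_j^m}$. To see that $\cA C$ is itself an $\cA N_m$-cc, I would verify \eqref{eq:SNC-Ncc} directly: since $\cA C\le\bigoplus_j\cA L_j=\cA L$ and $\cA C\le\supp V^m\le\supp V^r$ for $0\le r\le m$, Proposition~\ref{prop:composition-VL} together with Remark~\ref{rm:isometry-consequences}.\eqref{it:icqs} shows each $V^rP_{\cA C}$ is a partial isometry onto $\cA C$, while $\ran V^rP_{\cA C}\le V^r\cA L$ and \eqref{eq:orth-Ws} give $\ran V^rP_{\cA C}\perp\ran V^sP_{\cA C}$ for $r\neq s$. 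Thus $\{V^r\}_{r=0}^m$ is a $\cA C$-decoding noise basis of $\cA N_m$, so $\cA C$ is an $\cA N_m$-cc; being simultaneously an upper bound for every $\cA N_m$-cc and an $\cA N_m$-cc itself, it is the largest one.
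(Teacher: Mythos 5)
Your proposal is correct and follows essentially the same route as the paper: the first claim is obtained by applying Theorem~\ref{th:codes-Vjs} to each reduced unilateral-shift $V_j$ and transferring via $V^rP_{\cA C_{t_j}}=V_j^rP_{\cA C_{t_j}}$, while the containment bound and the largest-code claim come from applying Theorem~\ref{th:codes-Vjs} to the global shift $V$ (via Lemma~\ref{lem:V-reducedHj}.\eqref{it3:V-rk}) and distributing $V^r\cA L=\bigoplus_jV_j^r\cA L_j$ and $\ker V^t=\bigoplus_j\ker V_j^t$ along the partition. The only cosmetic difference is that you re-verify \eqref{eq:SNC-Ncc} directly for $\bigoplus_j(\cA L_j\ominus\ker V_j^m)$, whereas the paper simply cites the largest-code conclusion of Theorem~\ref{th:codes-Vjs} for the global $V$ and then distributes.
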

\begin{proof}
The first part of the assertion follows by Remark~\ref{rm:trivial-shift} and Theorem~\ref{th:codes-Vjs}, since $V^rP_{\cA C_{t_j}}=V_j^rP_{\cA C_{t_j}}$. Now, for the second part, if $\cA F_t$ is an $\cA N_t$-cc, then by \eqref{eq:codes-Vjs} and Lemma~\ref{lem:V-reducedHj}.\eqref{it3:V-rk},
\begin{align*}
\cA F_t&\leq \lrp{\bigoplus_{r=0}^{m-t}V^r\lrp{\bigoplus_{j=1}^k\cA L_j}}\ominus \ker V^{t}=\lrp{\bigoplus_{j=1}^k\lrp{\bigoplus_{r=0}^{m-t}V_j^r\cA L_j}}\ominus\lrp{\bigoplus_{s=1}^k\ker V_s^t}\\
&=\bigoplus_{j=1}^k\lrp{\bigoplus_{r=0}^{m-t}V_j^r\cA L_j}\ominus \ker V_j^{t}\,,
\end{align*}
since $\ker V^t=\cap_{j=1}^k\ker V^t_j$, viz. $(\ker V^t)^\perp=\bigoplus_{j=1}^k(\ker V^t_j)^\perp$. To conclude, one has by Theorem~\ref{th:codes-Vjs} and Lemma~\ref{lem:V-reducedHj}.\eqref{it3:V-rk} that $\cA N_m$ has the largest correcting code given by 
\begin{gather*}
\lrp{\bigoplus_{j=1}^k\cA L_j}\ominus\ker V^{m}=\lrp{\bigoplus_{j=1}^k\cA L_j}\ominus\lrp{\bigoplus_{s=1}^k\ker V_s^m}=\bigoplus_{j=1}^k\cA L_j\ominus \ker V_j^{m}\,,
\end{gather*}
as required.
\end{proof}

The following result is straightforward from Corollary~\ref{cor:decomposition-pU-Code} and Theorem~\ref{th:rS-Pisom-Ncc}.

\begin{corollary}
Let $\cA H_1,\dots,\cA H_k$ be reducing subspaces for a partial isometry $V$, such that for $j=1,\dots,k$, some of the wandering spaces $(\cA L_j,m_j)$ of the partial isometry reduced operator $V_j$ is non-simple. For $0\leq t\leq \max\{m_j\}_{j=1}^k$ and $\cA N_t=\Span\{V^j\}_{j=0}^t$, if  $\cA F_t$ is an $\cA N_t$-cc then
\begin{gather*}
\cA F_t\leq\bigoplus_{j=1}^k\ran{V_j^{m+1}}\oplus\lrp{\bigoplus_{r=0}^{m-t}V_j^r\cA L_j}\ominus \ker V_j^{t}\,.
\end{gather*}
\end{corollary}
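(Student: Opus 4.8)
The plan is to reduce to the single-operator statement of Corollary~\ref{cor:decomposition-pU-Code} by viewing $V$ globally, and then to push the resulting containment through the reducing decomposition exactly as in the second part of the proof of Theorem~\ref{th:rS-Pisom-Ncc}. To begin, I would record via Corollary~\ref{cor:wNw-Deco-RS} (equivalently Lemma~\ref{lem:V-reducedHj}.\eqref{it3:V-rk}) that, on the whole space $\cA H=\bigoplus_{j=1}^k\cA H_j$, the partial isometry $V=\bigoplus_{j=1}^kV_j$ has wandering space $(\cA L,m)$ in the sense of \eqref{eq:WS-decomposition}, with $\cA L=\cA H\ominus\ran V=\bigoplus_{j=1}^k\cA L_j$ and $m=\max\{m_j\}_{j=1}^k$. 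Since by hypothesis at least one $m_j\neq0$, one has $m\neq0$, so this wandering space is non-simple and Corollary~\ref{cor:decomposition-pU-Code} is applicable to $V$ on $\cA H$.

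Applying that corollary to $\cA N_t=\Span\{V^r\}_{r=0}^t$ and the $\cA N_t$-cc code $\cA F_t$ yields the global containment
\[
\cA F_t\leq \ran V^{m+1}\oplus\lrp{\bigoplus_{r=0}^{m-t}V^r\cA L}\ominus \ker V^{t}.
\]
It then remains to distribute this bound across the blocks $\cA H_1,\dots,\cA H_k$, which is the same bookkeeping carried out in the proof of Theorem~\ref{th:rS-Pisom-Ncc}. Using that $V=\bigoplus_j V_j$ acts block-diagonally and that $\cA L=\bigoplus_j\cA L_j$, I would invoke the identities $\ran V^{m+1}=\bigoplus_{j=1}^k\ran V_j^{m+1}$, $V^r\cA L=\bigoplus_{j=1}^k V_j^r\cA L_j$, and $\ker V^{t}=\bigcap_{j=1}^k\ker V_j^{t}$, the last of which gives $(\ker V^{t})^\perp=\bigoplus_{j=1}^k(\ker V_j^{t})^\perp$. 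Since every summand $V_j^r\cA L_j$ and every $\ran V_j^{m+1}$ lies inside $\cA H_j$, the operation $\ominus\ker V^{t}$ splits blockwise, and the global bound rearranges into $\bigoplus_{j=1}^k\ran V_j^{m+1}\oplus\lrp{\bigoplus_{r=0}^{m-t}V_j^r\cA L_j}\ominus \ker V_j^{t}$, which is precisely the asserted inequality.

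The only genuinely delicate point is the blockwise splitting of $\ominus\ker V^{t}$: one must confirm that $\ker V^{t}=\bigoplus_j\ker V_j^{t}$ is block-diagonal with respect to the partition $\cA H=\bigoplus_j\cA H_j$, so that intersecting a block-respecting subspace with $(\ker V^{t})^\perp$ commutes with the direct sum. This follows from $V^t=\bigoplus_j V_j^t$ together with $V_j^t=V^t P_{\cA H_j}$, and is exactly the manipulation already used for the term $\ominus\ker V^t$ in Theorem~\ref{th:rS-Pisom-Ncc}. Everything else is routine indexing over $j$, so the corollary is indeed \emph{straightforward} once the global application of Corollary~\ref{cor:decomposition-pU-Code} is set up.
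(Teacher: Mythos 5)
Your argument is correct and follows exactly the route the paper intends: the paper offers no written proof beyond declaring the result ``straightforward from Corollary~\ref{cor:decomposition-pU-Code} and Theorem~\ref{th:rS-Pisom-Ncc}'', and you supply precisely that derivation --- apply Corollary~\ref{cor:decomposition-pU-Code} to $V$ globally with wandering space $(\bigoplus_j\cA L_j,\max_j m_j)$, then split the containment blockwise via the identities $\ran V^{m+1}=\bigoplus_j\ran V_j^{m+1}$, $V^r\cA L=\bigoplus_j V_j^r\cA L_j$ and $\ker V^t=\bigcap_j\ker V_j^t$ already used in Theorem~\ref{th:rS-Pisom-Ncc}. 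Your attention to the blockwise splitting of $\ominus\ker V^t$ is the right place to be careful, and your justification of it is sound.
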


We now turn to the analysis of cyclic unitary operators in reducing subspaces (see subsection~\ref{s:shift}). By abuse of notation, we regard a partition $\cA H_1,\dots,\cA H_k$ of $\cA H$, with (respectively) o.n.b.'s $\{\vk{1_j}\}_{j_1\in\Z_{m_1}},\dots,\{\vk{k_j}\}_{j\in\Z_{m_k}}$, and entangled basis $\{\varphi_{1_j}\}_{j_1\in\Z_{m_1}},\dots,\{\varphi_{k_j}\}_{j\in\Z_{m_k}}$.

Consider the shift and clock operators, 
 \begin{gather}\label{eq:can-ent-SO}
 U_s\colonequals\sum_{j\in\Z_{m_s}}\Dn{s_{j+1}}{s_j}\,,\quad V_s\colonequals\sum_{j\in\Z_{m_s}}\zeta^{j}\Dn{s_j}{s_j}\,,\quad\mbox{(respectively)}\quad s=1,\dots,k\,.
 \end{gather}
If $\cA U_s=\Span\{U_s^r\}_{r\in\Z_{m_s}}$ and $\cA V_s=\Span\{V_s^r\}_{r\in\Z_{m_s}}$, then from Theorems~\ref{th:eio-codes} and~\ref{th:cio-codes}, every quantum $\cA U_s$-correcting code $\cA C_{\cA U_s}$ and quantum $\cA V_s$-correcting code $\cA C_{\cA V_s}$ admit the representation
 \begin{gather*}
 \cA C_{U_s}=\Span\lrb{\sum_{j\in\Z_{m_s}}e^{i\theta_j}\varphi_{s_j}\,:\, \theta_j\in\R}\,,\quad 
  \cA C_{V_s}=\Span\lrb{\sum_{j\in\Z_{m_s}}e^{i\theta_j}{\vk{s_j}}\,:\, \theta_j\in\R}\,.
 \end{gather*}
 
 \begin{theorem}\label{th:codes-Rs-ONh}
For $s=1,\dots,k$, consider $\cA N_{s}=\Span\{T_s^r\}_{r\in\Z_m}$, where
$T_s\in\{U_s,V_s\}$ and $m=\min\{m_s\,:\,s=1,\dots,k\}$. Then, 
\begin{gather}\label{eq:C-db-rsHj}
\lrb{\bigoplus_{s=1}^kT_s^r}_{r\in\Z_m}\quad\mbox{is a}\quad \bigoplus_{s=1}^{k}\cA C_{T_s}\mbox{-decoding noise basis of}\quad \bigoplus_{s=1}^k\cA N_{s}\,.
\end{gather}
 \end{theorem}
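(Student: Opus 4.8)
The plan is to recognize the statement as a direct specialization of the gluing result Proposition~\ref{prop:C-db-onHj1}. Concretely, the argument splits into two independent tasks: first, showing that for each fixed $s$ the truncated family $\{T_s^r\}_{r\in\Z_m}$ is a $\cA C_{T_s}$-decoding noise basis of $\cA N_s$ inside $\cA H_s$; and second, assembling these across the partition $\cA H_1,\dots,\cA H_k$ by a single appeal to Proposition~\ref{prop:C-db-onHj1}.

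For the first task I would fix $s$ and distinguish the two cases. If $T_s=U_s$, Theorem~\ref{th:eio-codes} gives that the full cyclic family $\{U_s^r\}_{r\in\Z_{m_s}}$ is a $\cA C_{U_s}$-decoding noise basis of $\Span\{U_s^r\}_{r\in\Z_{m_s}}$, and if $T_s=V_s$ the corresponding statement follows from Theorem~\ref{th:cio-codes}. Because $m=\min\{m_s:s=1,\dots,k\}\leq m_s$, the family $\{T_s^r\}_{r\in\Z_m}$ is precisely the initial segment $T_s^0,\dots,T_s^{m-1}$ of these powers, so Remark~\ref{rm:sub-Cdnb} applies and shows that $\{T_s^r\}_{r\in\Z_m}$ is still a $\cA C_{T_s}$-decoding noise basis, now of $\cA N_s=\Span\{T_s^r\}_{r\in\Z_m}$. (Linear independence of the truncated powers, needed for them to form a basis of $\cA N_s$, is inherited from independence of the full cyclic powers, and each $\cA C_{T_s}$ lies in $\cA H_s$ while each $T_s^r\in\cA B(\cA H_s)$, so the hypotheses of Proposition~\ref{prop:C-db-onHj1} are met.)

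Having dealt with each summand, I would set $N_{sr}=T_s^r$ and invoke Proposition~\ref{prop:C-db-onHj1} with this data: its forward implication immediately yields that $\{\bigoplus_{s=1}^k T_s^r\}_{r\in\Z_m}$ is a $\bigoplus_{s=1}^k\cA C_{T_s}$-decoding noise basis of $\bigoplus_{s=1}^k\cA N_s$, which is exactly the claim. I do not anticipate a genuine obstacle, since all the substance is already packaged in the cited results; the only point requiring care is the index bookkeeping---choosing the common truncation length to be $m=\min\{m_s\}$ so that $\{T_s^r\}_{r\in\Z_m}$ is simultaneously a legitimate initial segment for every $s$, which is precisely what makes the uniform application of Proposition~\ref{prop:C-db-onHj1} possible.
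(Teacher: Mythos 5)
Your proposal is correct and follows exactly the paper's own argument: truncate each cyclic family $\{T_s^r\}_{r\in\Z_{m_s}}$ to length $m=\min\{m_s\}$ via Remark~\ref{rm:sub-Cdnb} (with Theorems~\ref{th:eio-codes} and~\ref{th:cio-codes} supplying the full-family statements), then glue across the partition with Proposition~\ref{prop:C-db-onHj1}. Your added remarks on the case split and linear independence are just explicit versions of what the paper leaves implicit.
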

\begin{proof}
Since $m\leq m_s$, it follows by Remark~\ref{rm:sub-Cdnb} that $\{T_s^r\}_{r\in\Z_m}$ is $\cA C_{T_s}$-decoding noise basis of $\cA N_s$, for $s=1,\dots,k$. Hence, Proposition~\ref{prop:C-db-onHj1} yields \eqref{eq:C-db-rsHj}.
\end{proof}

The proof of the following assertion readily follows from Theorems~\ref{th:N-cc-Conditions} and~\eqref{th:codes-Rs-ONh}, taking into account the conditions~\eqref{eq:e.o.n.b-vp} and \eqref{eq:o.n.b-e-j}.

\begin{corollary} For $m=\min\{m_s\,:\,s=1,\dots,k\}$ and $j_s\in\Z_{m_s}$, the code $\cA C_j=\Span\{\vk{s_{j_s}}\}_{s=1}^k$ is a $\Span\{\bigoplus_{s=1}^kU_s^r\}_{r\in\Z_m}$-cc, with decoder
\begin{gather*}
\Phi(\rho)=\bigoplus_{s=1}^k\lrp{\sum_{r\in\Z_m}\ip{s_{j_s+r}}{\rho s_{j_s+r}}\Dn{s_{j_s}}{s_{j_s}}+\sum_{r=m}^{m_s-1}\ip{s_{j_s+r}}{\rho s_{j_s+r}}\Dn{s_{j_s+r}}{s_{j_s+r}}}\,,
\end{gather*}
while $\cA F_j=\Span\{\varphi_{s_{j_s}}\}_{s=1}^k$ is a $\Span\{\bigoplus_{s=1}^kV_s^r\}_{r\in\Z_m}$-cc, with decoder
\begin{gather*}
\Phi(\rho)=\bigoplus_{s=1}^k\lrp{\sum_{r\in\Z_m}\ip{\varphi_{s_{j_s-r}}}{\rho\varphi_{s_{j_s-r}}}\Dn{\varphi_{s_{j_s}}}{\varphi_{s_{j_s}}}+\sum_{r=m}^{m_s-1}\ip{\varphi_{s_{j_s-r}}}{\rho \varphi_{s_{j_s-r}}}\Dn{\varphi_{s_{j_s-r}}}{\varphi_{s_{j_s-r}}}}\,.
\end{gather*}
\end{corollary}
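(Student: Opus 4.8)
The plan is to reduce the assertion to the two structural tools already established in this section: the truncation principle of Remark~\ref{rm:sub-Cdnb} and the direct-sum principle of Proposition~\ref{prop:C-db-onHj1}. Since $\cA H_1,\dots,\cA H_k$ is a partition of $\cA H$ and each $T_s\in\{U_s,V_s\}$ acts on $\cA H_s$, the family $\{\bigoplus_{s=1}^k T_s^r\}_{r\in\Z_m}$ is exactly the kind of componentwise direct sum that Proposition~\ref{prop:C-db-onHj1} handles. It will therefore suffice to verify that each component family $\{T_s^r\}_{r\in\Z_m}$ is a $\cA C_{T_s}$-decoding noise basis of $\cA N_s$, and then invoke that proposition to glue the components together across the partition.

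First I would fix a single index $s$ and consider the \emph{full} cyclic power family $\{T_s^r\}_{r\in\Z_{m_s}}$. If $T_s=U_s$, then Theorem~\ref{th:eio-codes} guarantees that $\{U_s^r\}_{r\in\Z_{m_s}}$ is a $\cA C_{U_s}$-decoding noise basis of $\cA U_s$, because $\cA C_{U_s}$ has precisely the admissible form singled out by that theorem; if instead $T_s=V_s$, then Theorem~\ref{th:cio-codes} supplies the analogous statement for $\{V_s^r\}_{r\in\Z_{m_s}}$ and $\cA C_{V_s}$. In either case the full length-$m_s$ cyclic family is a $\cA C_{T_s}$-decoding noise basis of $\Span\{T_s^r\}_{r\in\Z_{m_s}}$.

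Next, since $m=\min\{m_s\,:\,s=1,\dots,k\}\leq m_s$, the truncated family $\{T_s^r\}_{r\in\Z_m}$ is the initial segment consisting of the first $m$ members of that cyclic family. Remark~\ref{rm:sub-Cdnb} then shows immediately that $\{T_s^r\}_{r\in\Z_m}$ is still a $\cA C_{T_s}$-decoding noise basis, now of $\cA N_s=\Span\{T_s^r\}_{r\in\Z_m}$. Carrying this out for every $s=1,\dots,k$ places us exactly in the hypothesis of Proposition~\ref{prop:C-db-onHj1} (after the harmless reindexing $r\in\Z_m\leftrightarrow j\in\{1,\dots,m\}$), whose conclusion is precisely \eqref{eq:C-db-rsHj}.

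The only point requiring care, and hence the main though modest obstacle, is the index bookkeeping: one must confirm that truncating $\{T_s^r\}_{r\in\Z_{m_s}}$ to $r\in\Z_m$ really does select an initial segment of the basis, so that Remark~\ref{rm:sub-Cdnb} applies verbatim, and that the partition makes each $\cA H_s$ a reducing subspace carrying $T_s$, so that Proposition~\ref{prop:C-db-onHj1} is available with $N_{sj}=T_s^{\,j-1}$. Once these compatibilities are checked, no genuine computation remains and the result follows at once.
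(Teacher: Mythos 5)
There is a genuine gap: what you have written is a proof of Theorem~\ref{th:codes-Rs-ONh} (your argument literally ends by deriving \eqref{eq:C-db-rsHj}, which is that theorem's conclusion, and it coincides with the paper's proof of that theorem: full cyclic family via Theorems~\ref{th:eio-codes}/\ref{th:cio-codes}, truncate via Remark~\ref{rm:sub-Cdnb}, glue via Proposition~\ref{prop:C-db-onHj1}). But the statement you were asked to prove is the \emph{corollary} that follows it, and two essential steps specific to that corollary are missing. First, the corollary is about the concrete codes $\cA C_j=\Span\{\vk{s_{j_s}}\}_{s=1}^k$ and $\cA F_j=\Span\{\varphi_{s_{j_s}}\}_{s=1}^k$, not the generic admissible codes $\cA C_{T_s}$ you work with. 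You must check that each $\vk{s_{j_s}}$ actually lies in the admissible class of Theorem~\ref{th:eio-codes}, which requires the Fourier relation \eqref{eq:o.n.b-e-j}, namely $\vk{s_{j_s}}=m_s^{-1/2}\sum_{r\in\Z_{m_s}}\zeta^{j_s r}\varphi_{s_r}$, an equal-modulus superposition of the entangled basis (and dually \eqref{eq:e.o.n.b-vp} for $\varphi_{s_{j_s}}$ in the clock case). Your proposal never makes this identification, so it never connects the corollary's codes to the hypotheses of the theorems you invoke.

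Second, the corollary asserts explicit decoders, and your proof says nothing about them. Producing $\Phi$ requires Theorem~\ref{th:N-cc-Conditions}.\eqref{it:OP1}: the Krauss operators are $K_r=P_{\cA C_j}\bigl(\bigoplus_{s}U_s^{r}\bigr)^*=\bigoplus_{s}\Dn{s_{j_s}}{s_{j_s+r}}$ for $r\in\Z_m$, which give the first sum $\sum_{r\in\Z_m}\ip{s_{j_s+r}}{\rho s_{j_s+r}}\Dn{s_{j_s}}{s_{j_s}}$, and the complementary operator $K_n=I-\sum_{r\in\Z_m}K_r^*K_r=\bigoplus_{s}\sum_{r=m}^{m_s-1}\Dn{s_{j_s+r}}{s_{j_s+r}}$, which is what produces the second sum $\sum_{r=m}^{m_s-1}$ in each displayed formula (it is nonzero precisely because $m=\min\{m_s\}$ may be strictly smaller than $m_s$). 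Without this computation the stated form of $\Phi$ is unverified. In short, you proved a prerequisite that the paper already supplies, and omitted both the identification of the specific codes via \eqref{eq:e.o.n.b-vp}--\eqref{eq:o.n.b-e-j} and the derivation of the decoders via Theorem~\ref{th:N-cc-Conditions}.
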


\begin{remark}\label{rm:weyl-operators}
It is worth mentioning that the shift and clock operators \eqref{eq:can-ent-SO} give rise to the so-called \emph{Weyl operators} (c.f. \cite{2006KRP}). Namely, the unitary operators $W_{(r,t)}=\bigoplus_{s=1}^kU_s^{r_s}V_s^{t_s}\in\cA B(\cA H)$, rewritten by
\begin{gather*}
W_{(r,t)}=\bigoplus_{s=1}^k\sum_{j\in\Z_{m_s}}\zeta^{jt_s}\Dn{s_{j+r_s}}{s_j}\,,\qquad r=(r_1,\dots,r_k),\,t=(t_1,\dots,t_k)\in \bigoplus_{s=1}^k\Z_{m_s}\,.
\end{gather*}
The set of all Weyl operators forms an orthogonal unitary basis of $(\cA B(\cA H),\ip\cdot\cdot_2)$ and plays an essential role in the theory of 
the stabilizer and $t$-error correcting quantum codes \cite{arab:hal-04560021,MR3051751}.
\end{remark}

\subsection*{Acknowledgment}
The first and third authors were partially supported by CONACYT-Mexico Grants CBF2023-2024-1842, CF-2019-684340, and UAM-DAI 2024: ``Enfoque Anal\'itico-Combinatorio y su Equivalencia de Estados Gaussianos". 

The second author was partially supported by SECIHTI-Mexico Grants CBF 2023-2024-224 and CF-2023-G-33.

\def\cprime{$'$} \def\lfhook#1{\setbox0=\hbox{#1}{\ooalign{\hidewidth
  \lower1.5ex\hbox{'}\hidewidth\crcr\unhbox0}}} \def\cprime{$'$}
  \def\cprime{$'$} \def\cprime{$'$} \def\cprime{$'$} \def\cprime{$'$}
  \def\cprime{$'$} \def\cprime{$'$}
\providecommand{\bysame}{\leavevmode\hbox to3em{\hrulefill}\thinspace}
\providecommand{\MR}{\relax\ifhmode\unskip\space\fi MR }
\providecommand{\MRhref}[2]{%
  \href{http://www.ams.org/mathscinet-getitem?mr=#1}{#2}
}
\providecommand{\href}[2]{#2}

\end{document}